\documentclass[twoside,11pt]{article}
\pdfoutput=1

\usepackage[preprint]{jmlr2e}
\usepackage{lastpage}
\usepackage{graphicx,enumitem}
\usepackage{amsfonts,amssymb}
\usepackage{amsmath}
\usepackage{booktabs}
\usepackage{array}
\usepackage{url}
\usepackage{xcolor}
\usepackage[utf8]{inputenc}
\usepackage[T1]{fontenc}
\usepackage{mleftright}
\mleftright
\allowdisplaybreaks
\renewenvironment{proof}[1][Proof]{\par\noindent{\bf #1.\ }}{\hfill\BlackBox\\[2mm]}
\newtheorem{assumption}[theorem]{Assumption}
\newtheorem{algo}[theorem]{Algorithm}
\providecommand{\theoremstyle}[1]{}

\newcommand{\E}{\mathbb{E}}
\newcommand{\PP}{\mathbb{P}}
\newcommand{\1}{\mathbf{1}}
\newcommand{\FDR}{\mathrm{FDR}}
\newcommand{\FDP}{\mathrm{FDP}}
\newcommand{\Bin}{\mathrm{Bin}}

\newcommand{\Hzero}{\mathcal{H}_0}

\newcommand{\Aset}{\mathcal{A}}
\newcommand{\D}{\mathcal{D}}
\newcommand{\F}{\mathcal{F}}

\newcommand{\N}{\mathbb{N}}

\begin{document}

\jmlrheading{XX}{2026}{1-\pageref{LastPage}}{8/26}{X/26}{XX-XXXX}{Mehrdad Pournaderi}
\ShortHeadings{Non-Asymptotic Multiple Testing Over Networks}{Pournaderi}
\firstpageno{1}

\title{On Large-Scale Multiple Testing Over Networks:\\ A Non-Asymptotic Approach}

\author{\name Mehrdad Pournaderi \email m.pournaderi@emofid.com \\
       \addr Mofid Securities}

\editor{}

\maketitle

\begin{abstract}
Distributed multiple testing asks $N$ sites to control a global false
discovery rate (FDR) under a tight communication budget. The greedy
interval-aggregation algorithm of Pournaderi and Xiang (2024) solves this
asymptotically but can violate $\FDR\le\alpha$ at finite samples. We trace
the violation to a winner's-curse bias in the selected density
statistics, of exact order $\Theta(m^{-1/4}\sqrt{\log m})$ at the
standard bandwidth $\varepsilon\asymp m^{-1/2}$, with $m$ the total
number of p-values in the network. Cross-Fit Greedy Aggregation (CFGA) eliminates the curse by
selecting the nested rejection family on one half of each node's data and
scoring it on the other, achieving finite-sample $\FDR\le\alpha$ when
per-node null rates are known; an inflated variant covers the plug-in
setting at a vanishing $\eta=1/m$ slack. BONuS-GA instead masks a bag of
synthetic uniform nulls calibrated by counting knockoffs, so every p-value
serves both selection and inference; a per-node \emph{budgeted} variant
removes all oracle input, controlling $\FDR\le\alpha$ for any
data-independent bag. Aggregating the
CFGA folds by e-values over random splits (e-CFGA) removes the Bonferroni
factor and averages the split randomness. All variants keep the
$O(\sqrt m\log m)$ communication budget, up to e-CFGA's
$O(\bar R\log m)$ reporting round; empirically, BONuS-GA dominates at
moderate-to-large per-node samples and CFGA with adaptive bandwidth at
small ones.
\end{abstract}

\begin{keywords}
false discovery rate, distributed multiple testing, knockoffs, e-values,
communication efficiency
\end{keywords}

\section{Introduction}
\label{sec:intro}

\subsection{Distributed multiple testing and asymptotic FDR control}

Sensor networks, federated learning systems, scientific collaborations, and
large-scale observatories routinely produce hypothesis tests partitioned
across sites that cannot freely share their raw data. Centralizing every
p-value at one node may be infeasible because of communication budgets,
privacy constraints, or sheer volume. The \emph{distributed multiple
testing} problem asks: how should the sites collaborate so that the network
as a whole controls a global error criterion---most commonly the false
discovery rate (FDR) of \citet{BH1995}---at a prescribed
level $\alpha$, while staying within a small communication budget?

A line of work starting with the broadcast schemes of
\citet{Ermis2009} and continuing through QuTE \citep{Ramdas2017QuTE}, its
quantized version \citep{XiangQuantized2019}, and the
sample-and-forward method \citep{PournaderiXiang2022Sample} has aimed to
reproduce the centralized FDR-controlling procedure (typically BH) over a
network without transmitting all p-values. A complementary line takes an
\emph{asymptotic} view: rather than
insisting that each finite-$m$ trace match the centralized procedure
exactly, one designs algorithms that converge in some metric to the
centralized rule and proves the FDR bound in the limit. The greedy
aggregation algorithm of \citet{PournaderiXiang2024},
the starting point of this paper, is a centerpiece of that line: it uses $O(\log m)$
bits per round, attains asymptotic FDR control, and achieves asymptotic
power optimality among the class of procedures that reject finite unions of
intervals. The present work departs from this asymptotic stance, keeping the
algorithm's communication efficiency while replacing its limiting guarantee
with a finite-sample one.

\subsection{The greedy aggregation algorithm}\label{sec:greedy-intro}

We briefly recall the greedy aggregation algorithm of
\citet[Sec.~V-B]{PournaderiXiang2024}; precise definitions are in
Section~\ref{sec:setting}. Fix an interval bandwidth $\varepsilon>0$ and
target FDR $\alpha\in(0,1)$. Each node $i\in\{1,\ldots,N\}$ estimates the
local null proportion $\widehat r_0^{(i)}$ via the truncated Storey estimator
\citep{StoreyTaylorSiegmund2004},
\[
\widehat r_0^{(i)} = \min\!\left(1,\;\frac{1 + \#\{k:P_k^{(i)}>\lambda\}}{(1-\lambda)m^{(i)}}\right)\quad(\lambda\in(0,1)),
\]
which is asymptotically conservative for $r_0^{(i)}$: null p-values are
$\mathrm{Uniform}[0,1]$ and so land in $(\lambda,1]$ with probability
$1-\lambda$, while non-null p-values may also fall in the tail, so the
untruncated limit is
$r_0^{(i)}+\,\text{(non-null tail mass)}/(1-\lambda)\ge r_0^{(i)}$; the
$+1$ is a finite-sample stabilizer, and the truncation at $1$ removes the
$1/(1-\lambda)$ inflation when the tail mass is large. The conservativeness
is the feature that makes the estimator useful for FDR control. The
procedure then partitions
$[0,1]$ into $\widehat K^{(i)}\approx 1/\widehat L^{(i)}$
intervals of length
$\widehat L^{(i)}=\varepsilon/(\widehat q^{(i)}\widehat r_0^{(i)})$, with
$\widehat q^{(i)}=m^{(i)}/m$. Node $i$ computes the empirical p-value
density on each interval and communicates it to the center node. The center
greedily admits intervals across the network in decreasing density order,
maintaining a running FDP estimate
$\widehat\FDP=M/\sum_{\ell=1}^M\widehat h_{(\ell)}$ as the $M$-th interval
is added, and stops as soon as $\widehat\FDP$ exceeds $\alpha$. The chosen
rejection set is the union of the admitted intervals across all nodes.
The total communication is $O(\sqrt m\log m)$ bits and $O(\sqrt m)$ rounds
when $\varepsilon\propto m^{-1/2}$. Under the asymptotic regime
$m\to\infty$, with continuous local mixtures $G_i$, consistent
null-proportion estimates, and mild non-degeneracy conditions,
\citet[Prop.~4]{PournaderiXiang2024} proves asymptotic FDR control,
$\lim_{m\to\infty}\FDR\le\alpha$, together with convergence of the power
to the $\varepsilon$-grid optimum, and \citet[Thm.~7]{PournaderiXiang2024}
shows that this optimum approaches the best interval-based power as
$\varepsilon\to0$.

\subsection{Why finite-sample control is desirable, and contributions}

An asymptotic FDR bound describes the behaviour of an algorithm as
$m\to\infty$, not at the value of $m$ actually used in a particular run.
For any specific application $m$ is finite and the relevant question is
whether the procedure controls FDR at the level $\alpha$ for that $m$. The
Benjamini--Hochberg procedure answers this in the affirmative: at level
$\alpha$, BH satisfies $\FDR\le\alpha\,m_0/m$ for every $m\ge 1$ under
independence of the p-values, and more generally under positive regression
dependence (PRDS), a monotone-dependence condition that requires
$\PP(\mathbf P\in D\mid P_i\le t)$ to be non-decreasing in $t$ for every
non-decreasing set $D\subseteq[0,1]^m$ and every null index $i$
\citep{BY2001}. Greedy aggregation does not provide such a guarantee,
and we will observe this empirically in the simulations section. This
brings up the following question: can the greedy aggregation algorithm be
modified to achieve a finite-sample FDR bound $\FDR\le\alpha$ for every
$m\ge 1$, with communication cost matching the original?

\noindent We answer affirmatively with four contributions.

\begin{enumerate}[leftmargin=*,label=(\arabic*)]
\item \textbf{Non-asymptotic analysis of the original greedy.}
Section~\ref{sec:gap} pins down a winner's-curse selection bias as the
mechanism behind the finite-sample FDR violations of the original Storey
greedy: at the standard bandwidth $\varepsilon\asymp m^{-1/2}$, the
order-statistic average $\bar h_M$ that drives the FDP estimator carries
an upward selection bias of exact order $m^{-1/4}\sqrt{\log m}$ over the
range of selection sizes $M$ specified in
Proposition~\ref{prop:wc-lower}---an upper bound for general
configurations and every $M$ (Proposition~\ref{prop:wc-bias-body}) with a
matching lower bound over that range under the global null
(Proposition~\ref{prop:wc-lower})---strictly slower than the parametric
$m^{-1/2}$ rate and, unlike the mean-zero $(\varepsilon m)^{-1/2}$
sampling fluctuation of an unselected bin, systematically one-sided. Up
to a conditional-mean factor made explicit in Section~\ref{sec:gap}, the
same bias measures the \emph{relative} under-count of the predicted null
mass $M\,m\,\varepsilon$ against the actual nulls in the selected region;
we do not claim a proved rate for the FDR excess at the algorithm's
random stopping index. This identifies the failure
mechanism of the original procedure and motivates the constructions that
follow.

\item \textbf{CFGA.} Section~\ref{sec:method} introduces CFGA: we rank the
intervals on a random half of the p-values (\emph{ranking half}) and score
the resulting nested regions on the held-out \emph{inference half}, unioning
the two folds at level $\alpha/2$. The cross-fit rule controls
$\FDR\le\alpha$ when the true per-node null rates are known
(Theorem~\ref{thm:cfga}). In Section~\ref{sec:augmented} an inflated Storey
estimator of null rates is used to handle unknown null rates: inflating the
cross-fit null-rate estimate by a vanishing Hoeffding margin
$\delta=O\bigl(\sqrt{\log(2Nm)/m}\bigr)$ (at $\eta=1/m$) yields
$\FDR\le\alpha+\eta$ at every finite $m$
(Theorem~\ref{thm:aug}), with $\eta=1/m$. An e-value aggregation of the
folds over $S$ random splits (e-CFGA, Theorem~\ref{thm:ecfga}) removes
the Bonferroni-2 factor and averages over the split randomness at no cost
to the finite-sample guarantee, at the price of one extra
$O(\bar R\log m)$-bit reporting round (Section~\ref{sec:ecfga}).

\item \textbf{BONuS-GA: avoiding the split.}
Section~\ref{sec:bonus} replaces sample splitting altogether by a
bag-of-null-statistics (BONuS) construction \citep{Yang2021BONuS}: each node
generates synthetic $\mathrm{Uniform}[0,1]$ nulls, masks them among the real
p-values, and calibrates the greedy by a counting-knockoff FDP estimate, so
that \emph{all} $m$ p-values are used for both selection and inference.
With the bag drawn from the true lifted null law this controls
$\FDR\le\alpha$ at every finite $m$ (Theorem~\ref{thm:bonus}); a per-node
\emph{budgeted} variant (Theorem~\ref{thm:bonus-pernode}) removes the
oracle entirely---per-node tail calibration with masking-measurable budget
shares gives $\FDR\le\alpha$ for \emph{any} data-independent bag, making
the guarantee fully implementable.

\item \textbf{Adaptive choice of $\varepsilon$.} Section~\ref{sec:eps}
selects the interval bandwidth $\varepsilon$ by sub-split cross-validation
on the ranking half (Algorithm~\ref{alg:adaptive-eps}). Because the choice
depends only on the ranking half, the guarantees above are preserved
(Proposition~\ref{prop:adaptive-FDR}); empirically it delivers large power
gains where the fixed default $\varepsilon=\alpha\,m^{-1/2}$ is suboptimal.
\end{enumerate}

Section~\ref{sec:sim} reports a monitoring-network study --- a few
signal-rich hubs among many quiet sensors --- in which the naive
decentralized baseline (per-node BH at level $\alpha$) loses global control
as the quiet nodes accumulate, while CFGA and BONuS-GA do not.
Section~\ref{sec:discuss} discusses the results, and the appendix collects
the deferred proofs.

\subsection{Related work}

The closest distributed-FDR precursor is the sample-and-forward
algorithm of \citet{PournaderiXiang2022Sample}, by the same authors as the
greedy method we build on. Rather than transmitting (quantized) p-values,
each node samples its local empirical CDF of p-values at a few points in
$[0,\alpha]$ and forwards those samples, together with its count $m^{(i)}$,
to the center using $O(\log m)$ bits per node; the center aggregates them
into an approximate pooled CDF $\widehat F$ and computes a BH-type global
threshold $\widehat\tau=\sup\{t:\widehat F(t)=t/\alpha\}$. Sample-and-forward
is thus designed to \emph{approximate the pooled (centralized) BH rejection
region} over the network: it enjoys \emph{finite-sample} FDR control under
the same PRDS assumption as BH, while its power loss relative to pooled-BH is
characterized only \emph{asymptotically}, under a mixture model. It can be
combined with QuTE \citep{Ramdas2017QuTE} to handle general topologies.

The greedy aggregation method that we make finite-sample
(Section~\ref{sec:greedy-intro}) pursues a different goal: it does not aim to
reproduce BH, but to attain \emph{asymptotic power-optimality} in the network
setting, among procedures that reject finite unions of intervals
\citep{PournaderiXiang2024}. For a fixed bandwidth $\varepsilon$ it too
communicates only $O(\log m)$ bits per round; the $O(\sqrt m\log m)$ total
reported in Section~\ref{sec:greedy-intro} is not a cost of FDR control but
the price of power, arising from the $O(1/\varepsilon)=O(\sqrt m)$ candidate
intervals induced by the power-driven choice $\varepsilon\propto m^{-1/2}$.
A final clarification, since both methods involve using ``part of the
data'': sample-and-forward evaluates the empirical CDF at a few points
as a \emph{communication-compression} device, whereas CFGA randomly
splits the hypothesis \emph{indices} at each node into two folds as a
\emph{statistical} device---cross-fitting that decouples greedy
selection from inferential thresholding, which is what upgrades the
greedy method's asymptotic FDR control to a finite-sample guarantee.
The two mechanisms are unrelated.

The broader literature on distributed multiple testing with FDR control
includes the QuTE algorithm \citep{Ramdas2017QuTE}, which forwards p-values
across a general network, together with communication-efficient variants that
avoid transmitting the raw p-values: quantized distributed FDR control
\citep{XiangQuantized2019}, the sample-and-forward method
\citep{PournaderiXiang2022Sample}, and distribution-free inference over networks
\citep{PournaderiXiangDistFree2023}. Related directions include FDR control under
Byzantine corruption of the reported p-values
\citep{ZhangPournaderiXiangVarshney2025} and frameworks for spatial multiple
testing \citep{Golz2022}. The greedy aggregation method of
\citet{PournaderiXiang2024} that we build on belongs to this line. The
theoretical machinery we use---adaptive BH+ with finite-sample FDR control
\citep{StoreyTaylorSiegmund2004,HeesenJanssen2016}, sample-split
selection-inference \citep{JinCandes2022,Bates2023}, cross-fitting
\citep{Chernozhukov2018DML,RenBarber2024evalues,RenWeiCandes2023}, masked
calibration with a bag of synthetic null statistics in the
knockoff tradition \citep{BarberCandes2015,Yang2021BONuS}, and e-value
aggregation \citep{WangRamdas2022}---is
well-established in non-distributed settings; the contribution here is to
adapt it to the interval-based, communication-constrained greedy
aggregation procedure.

\section{Setting and notation}
\label{sec:setting}

We retain the notation of \citet{PournaderiXiang2024} and adopt its
\emph{random-effect} (hierarchical) generative model, which describes the whole
network with a single data-generating process. The network has one center node
and $N$ local nodes, with $N$ fixed.

\begin{assumption}[Random-effect model]\label{ass:model}
The $m$ hypotheses in the network are drawn i.i.d.: each hypothesis first picks
a node label $I_k\in\{1,\dots,N\}$ with $\PP(I_k=i)=q^{(i)}>0$,
$\sum_{i=1}^N q^{(i)}=1$, and then, given $I_k=i$, receives a p-value $P_k$ from
the local mixture
\[
G_i(t)\;=\;r_0^{(i)}\,U(t)+r_1^{(i)}\,F_i(t),\qquad t\in[0,1],
\]
where $U(t)=t$ is the uniform null CDF, $r_0^{(i)}\in[0,1]$ is the null rate
at node $i$, $r_1^{(i)}=1-r_0^{(i)}$, and $F_i$ is an unknown non-null CDF
with density $f_i:=F_i'$ (assumed to exist and to be bounded; the
non-null term and $F_i$ are vacuous when $r_0^{(i)}=1$). Boundedness of
$f_i$ is used only in the bias analysis of Section~\ref{sec:gap} and in
power statements; the finite-sample FDR theorems of
Sections~\ref{sec:method}--\ref{sec:bonus} require only uniform nulls
and within-node independence, and in particular tolerate alternative
densities that are unbounded near the origin, such as the one-sided
Gaussian alternatives used in Section~\ref{sec:sim}. The CFGA and oracle
BONuS-GA theorems are stated under this random-label model, in which the
node sizes $m^{(i)}$ are random; the per-node budgeted BONuS-GA guarantee
also covers fixed---even adversarial---node assignments
(Remark~\ref{rem:label-free}). We write
$f^{(i)}(t):=r_0^{(i)}+r_1^{(i)} f_i(t)$ for the density of the local
mixture $G_i$ and $\|f^{(i)}\|_\infty := \sup_{t\in[0,1]}|f^{(i)}(t)|$ for
its uniform bound. Equivalently, every p-value is an i.i.d.\ draw from the
global mixture $G=\sum_i q^{(i)}G_i$. A null p-value is therefore
\emph{exactly} $\mathrm{Uniform}[0,1]$, and because the hypotheses are
i.i.d.\ the p-values are mutually independent, in particular within each
node. The data split introduced
in Section~\ref{sec:method} is drawn independently of the p-values.
\end{assumption}

The model induces two complementary views of the data, both of which we use.
Marginally, each hypothesis is the \emph{lifted} pair
$\widetilde X_k=(I_k,P_k)\in\{1,\dots,N\}\times[0,1]$, and we write
$\pi^{(i)}:=q^{(i)}r_0^{(i)}$ for the per-node \emph{null rate}: the probability
that a hypothesis is both located at node $i$ and null. A family of per-node
regions $\Gamma^{(i)}\subseteq[0,1]$ lifts to the single region
$\widetilde\Gamma=\bigcup_{i}\{i\}\times\Gamma^{(i)}$, and the probability that a
hypothesis is \emph{both null and in $\widetilde\Gamma$} is
$\sum_i\pi^{(i)}\nu(\Gamma^{(i)})$, where $\nu$ denotes Lebesgue measure on
$[0,1]$; working with this one lifted group rather
than the $N$ nodes separately is what lets a single global FDP estimator drive
the finite-sample analysis of Sections~\ref{sec:method} and
\ref{sec:augmented}. Conditionally on the labels, node $i$ owns a batch of
$m^{(i)}\ge 1$ p-values $P^{(i)}=(P_1^{(i)},\ldots,P_{m^{(i)}}^{(i)})$ with
true-null set $\Hzero^{(i)}\subseteq\{1,\ldots,m^{(i)}\}$,
$m_0^{(i)}=|\Hzero^{(i)}|$, and $m=\sum_i m^{(i)}$ is the total number of
p-values.

\paragraph{False discovery rate.} A testing procedure outputs a rejection set
$\mathcal R\subseteq\{1,\ldots,m\}$ of hypotheses declared non-null. Writing
$R=|\mathcal R|$ for the number of rejections and $V$ for the number of them
that are in fact true nulls (the false discoveries), the false discovery
proportion and false discovery rate are
\[
\FDP=\frac{V}{R\vee 1},\qquad \FDR=\E[\FDP],
\]
with the convention $\FDP=0$ when $R=0$; our goal is a procedure with
$\FDR\le\alpha$ at a prescribed level $\alpha\in(0,1)$. For a Borel set
$\Gamma\subseteq[0,1]$ we write
\[
N^{(i)}(\Gamma)=\sum_{k=1}^{m^{(i)}}\1\{P_k^{(i)}\in\Gamma\},\qquad
V^{(i)}(\Gamma)=\sum_{k\in\Hzero^{(i)}}\1\{P_k^{(i)}\in\Gamma\}
\]
for the number of p-values at node $i$ falling in $\Gamma$ and the number of
those that are null, so that a procedure rejecting the lifted region
$\widetilde\Gamma$ has $R=\sum_iN^{(i)}(\Gamma^{(i)})$ and
$V=\sum_iV^{(i)}(\Gamma^{(i)})$.

\paragraph{Recap of greedy aggregation.}
For convenience we restate the algorithm of \citet[Sec.~V-B]{PournaderiXiang2024}
in the notation above. Fix $\varepsilon>0$ and $\alpha\in(0,1)$.

\begin{enumerate}[leftmargin=*]
\item Each node $i$ estimates its null rate $\widehat r_0^{(i)}$ (Storey or
spacing) and its own node probability $\widehat q^{(i)}=m^{(i)}/m$---its
share of the network's $m$ p-values (a random quantity under the
random-label model of Assumption~\ref{ass:model})---then sets
$\widehat L^{(i)}=\varepsilon/(\widehat q^{(i)}\widehat r_0^{(i)})$,
$\widehat K^{(i)}=\lfloor 1/\widehat L^{(i)}\rfloor$.
\item Each node $i$ computes the empirical interval density
$
\widehat h_j^{(i)}\;=\;\frac{1}{\varepsilon m}\,N^{(i)}(\widehat I_j^{(i)}),
$
where $\widehat I_j^{(i)}=\widehat L^{(i)}(j-1,j]$ for
$j=1,\ldots,\widehat K^{(i)}$.
\item The center node greedily picks intervals in decreasing $\widehat h$
order, maintaining
$
\widehat\FDP\;=\;M\big/\sum_{\ell=1}^M\widehat h_{(\ell)},
$
and stops as soon as $\widehat\FDP>\alpha$.
\end{enumerate}

\noindent Because the $\widehat h_{(\ell)}$ are admitted in decreasing order,
$\widehat\FDP=M/\sum_{\ell\le M}\widehat h_{(\ell)}$ is nondecreasing in $M$, so
this greedy ``first-crossing'' stop coincides with the step-up cut
$\max\{M:\widehat\FDP\le\alpha\}$. The cross-fit estimator of
Section~\ref{sec:method} loses this monotonicity, which is why CFGA states its
acceptance rule directly in step-up form.

\section{Winner's-curse bias of the Storey-greedy estimator}
\label{sec:gap}
\label{sec:gap-wc}

Greedy aggregation \citep{PournaderiXiang2024} is a plug-in procedure: at
each finite $m$ it estimates a per-node grid of interval densities, ranks
the intervals across the network by their empirical densities, and stops
when its plug-in FDP estimate first crosses $\alpha$. The asymptotic FDR
guarantee of \citet[Prop.~4]{PournaderiXiang2024} holds in the limit, when
the empirical density estimates concentrate at their population
counterparts and the plug-in is consistent. At any finite $m$ those
estimates carry noise, and crucially the procedure reuses the same counts
both to rank the intervals and to form the FDP estimate. This creates a
\emph{winner's curse}: the greedy admits the intervals whose empirical
counts are largest, and the nulls inside those intervals contribute to
exactly the same counts the procedure trusts as its calibrator.

The failure mechanism admits two equivalent readings, both useful.

\paragraph{Numerator-centric view (the plug-in null count under-counts
after selection).} The greedy FDP estimator can be rewritten as
\begin{equation}\label{eq:wc-num-view-intro}
\widehat\FDP(M)\;=\;\frac{M\,m\,\varepsilon}{R(M)\vee 1},
\end{equation}
where $R(M)$ is the realized rejection count and each admitted interval
carries the same predicted null mass: the estimated null count at node
$i$ times the interval width is
$m^{(i)}\widehat r_0^{(i)}\cdot\widehat L^{(i)}
=m^{(i)}\widehat r_0^{(i)}\cdot\varepsilon/(\widehat q^{(i)}\widehat r_0^{(i)})
=m\varepsilon$
by the grid design (Section~\ref{sec:setting}, recap). The numerator
$M\,m\,\varepsilon$ is therefore the procedure's plug-in for the unknown
null count $V(\widehat\Gamma_M)$ in the rejected region. The plug-in
under-counts: the greedy admits intervals where the empirical count is
unusually large, and under the global null the null count in those
intervals equals the empirical count, so selection inflates
$\E[V(\widehat\Gamma_M)]$ above $M\,m\,\varepsilon$. The estimator is biased
downward and $\widehat\FDP\le\alpha$ does not certify $\FDP\le\alpha$.

\paragraph{Denominator-centric view (an inflation on the order
statistics).} Algebraically, \eqref{eq:wc-num-view-intro} is the same as
$\widehat\FDP(M)=1/\bar h_M$, where $\bar h_M$ is the average of the
top-$M$ empirical density estimates $\widehat h_{(\ell)}$. The empirical
top-$M$ estimates lie above their population counterparts on average ---
the order-statistics inflation that gives the winner's curse its name ---
so $\bar h_M$ is biased upward and again $\widehat\FDP=1/\bar h_M$ is
biased downward. The bridge is $\bar h_M=R(M)/(M\,m\,\varepsilon)$.

This section formalizes the gap in the denominator-centric view, which is
the one a concentration argument controls directly. Under a Storey variant
of the algorithm that admits a clean binomial description for each interval
count (Lemma~\ref{lem:binomial}, \S\ref{sec:gap-binom}), each density
estimate is sub-gamma at scale $\Theta((\varepsilon m)^{-1/2})$ around its
conditional mean (\S\ref{sec:gap-conc}), and the top-$M$ selection bias is
bounded by a sub-gamma maximal inequality over
$\binom{|\Aset|}{M}\le(e|\Aset|/M)^M$ subset sums
(Proposition~\ref{prop:wc-bias-body}, \S\ref{sec:gap-bias}). A matching
lower bound under the global null, obtained from Slud's Gaussian lower
tail for each binomial cell together with a negative-association Chernoff
bound on the number of exceedances
(Proposition~\ref{prop:wc-lower}, \S\ref{sec:gap-lower}), shows that the
upper bound is tight up to constants: at the standard bandwidth
$\varepsilon\asymp m^{-1/2}$ the selection optimism is \emph{exactly}
$\Theta(m^{-1/4}\sqrt{\log m})$ over the stated range of $M$ --- strictly
slower than the $m^{-1/2}$ parametric rate, hence not absorbed by any
practical sample size. The same bound describes the numerator-centric
under-count by the bridge formula.

\paragraph{Scope of the two propositions.}
Propositions~\ref{prop:wc-bias-body}--\ref{prop:wc-lower} quantify the
selection optimism of the plug-in estimator at a \emph{fixed} number $M$
of admitted intervals; they are not statements about $\FDR-\alpha$ at the
algorithm's random stopping index. Indeed, under the global null with a
fixed $\alpha<1$ and the default bandwidth, the maximal cell density
concentrates near $1$ while any rejection requires density at least
$1/\alpha>1$, so the global-null FDR of the full procedure tends to
\emph{zero} even though \eqref{eq:wc-lower} holds. The finite-sample FDR
violations documented in Section~\ref{sec:sim} occur in configurations
with signal, where the optimism inflates the counts at the data-chosen
stopping index. We therefore present the winner's curse as the
\emph{mechanism} behind the observed excess---quantified exactly by the
two propositions and confirmed empirically---rather than as a lower bound
on the FDR gap itself.

\subsection{Setup}\label{sec:gap-setup}
Fix a bandwidth $\varepsilon>0$ and a Storey threshold $\lambda\in(0,1)$.
At each node $i$, the Storey estimate of the local null proportion is
$\widehat r_0^{(i)}=(1+\#\{k:P_k^{(i)}>\lambda\})/((1-\lambda)m^{(i)})$,
which depends only on the \emph{tail} count
$N_>^{(i)}:=\#\{k:P_k^{(i)}>\lambda\}$. The greedy grid width
\begin{equation}
\widehat L^{(i)}=\varepsilon/(\widehat q^{(i)}\widehat r_0^{(i)})
=\Theta(\varepsilon),\quad \widehat q^{(i)}:=m^{(i)}/m ,
\label{eq:body-wc-width}
\end{equation}
partitions the head $(0,\lambda]$ into
$K^{(i)}=\lfloor\lambda/\widehat L^{(i)}\rfloor$ intervals
$\widehat I_j^{(i)}=((j-1)\widehat L^{(i)},\,j\widehat L^{(i)}]$, and the
greedy density estimates are
\begin{equation}
\widehat h_j^{(i)}=N^{(i)}(\widehat I_j^{(i)})/(\varepsilon m),\quad
N^{(i)}(I):=\#\{k:P_k^{(i)}\in I\} .
\label{eq:body-wc-h}
\end{equation}
The candidate set is
$\Aset:=\{(i,j):1\le i\le N,\,1\le j\le K^{(i)}\}$, with
\[
|\Aset|\;\le\;\sum_i\frac{\lambda}{\widehat L^{(i)}}
\;=\;\frac{\lambda}{\varepsilon}\sum_i\widehat q^{(i)}\widehat r_0^{(i)}
\;\le\;\frac{\lambda}{\varepsilon},
\]
where the last inequality uses
$\sum_i\widehat q^{(i)} = 1$ and the truncation $\widehat r_0^{(i)}\le 1$
from \S\ref{sec:greedy-intro} (without the truncation, Storey can exceed
$1$ and $|\Aset|$ is bounded by $\lambda/((1-\lambda)\varepsilon)$
instead; the leading rate is unaffected). All conditional
expectations below are conditional on the node sizes and tail counts
$\mathcal G:=\sigma\bigl(\{(m^{(i)},N_>^{(i)})\}_{i=1}^N\bigr)$; under
the hierarchical model, conditionally on the node labels the observations
at node $i$ are i.i.d.\ $G_i$, so conditioning on the node sizes leaves
the within-node p-values i.i.d.\ $G_i$. Each grid width
$\widehat L^{(i)}$ is a function of $(m^{(i)},N_>^{(i)})$, so $\widehat L^{(i)}$,
$\widehat I_j^{(i)}$, and $\Aset$ are all $\mathcal G$-measurable, hence
deterministic given $\mathcal G$. (We condition on the counts rather
than on the grid itself because the truncation $\widehat r_0^{(i)}\le1$
makes the map $N_>^{(i)}\mapsto\widehat L^{(i)}$ many-to-one on the
clipped branch, so the grid generates a strictly coarser $\sigma$-field.)

\subsection{Each interval count is conditionally binomial}\label{sec:gap-binom}
The Storey design makes the grid width $\widehat L^{(i)}$ a function of the
tail count $N_>^{(i)}$ and the candidate intervals subsets of the head
$(0,\lambda]$. Because the head and the tail of an i.i.d.\ sample are
informationally separated, the count $N^{(i)}(\widehat I_j^{(i)})$ inherits
a binomial structure once we condition on the grid.

\begin{lemma}[Conditional binomial form]\label{lem:binomial}
Assume $r_0^{(i)}>0$, so that $\PP(P^{(i)}\le\lambda)\ge r_0^{(i)}\lambda>0$.
For each $i$, conditionally on $(m^{(i)},N_>^{(i)})$ (which
determines $\widehat L^{(i)}$: without truncation
$\widehat L^{(i)}=\varepsilon m^{(i)}(1-\lambda)/(\widehat q^{(i)}(1+N_>^{(i)}))$,
and the truncated estimator is this expression clipped, so
$\widehat L^{(i)}$ is a---possibly many-to-one---function of
$N_>^{(i)}$), the head p-values
$\{P_k^{(i)}\}_{k:\,P_k^{(i)}\le\lambda}$ are i.i.d.\ from
$f^{(i)}\!\restriction_{[0,\lambda]}/\PP(P^{(i)}\le\lambda)$, independently
of $N_>^{(i)}$, and
\begin{equation}
N^{(i)}(\widehat I_j^{(i)})\,\big|\,\bigl(m^{(i)},N_>^{(i)}\bigr)\;\sim\;
\mathrm{Bin}\!\bigl(m^{(i)}-N_>^{(i)},\,\widetilde p_j^{(i)}\bigr)
\label{eq:body-wc-binom}
\end{equation}
with per-trial success probability
$\widetilde p_j^{(i)}=\bigl(\int_{\widehat I_j^{(i)}}f^{(i)}\bigr)/\PP(P^{(i)}\le\lambda)
=\Theta(\varepsilon)$.
\end{lemma}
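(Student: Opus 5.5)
The plan is to condition first on the node sizes, then on which indices fall in the head, and only then to read off the grid. Under Assumption~\ref{ass:model}, given the labels (so given $m^{(i)}$), the p-values $P_1^{(i)},\dots,P_{m^{(i)}}^{(i)}$ at node $i$ are i.i.d.\ with law $G_i$ and density $f^{(i)}$. Write $B_k=\1\{P_k^{(i)}\le\lambda\}$. These indicators are i.i.d.\ Bernoulli with parameter $p_\lambda:=\PP(P^{(i)}\le\lambda)$. We have $p_\lambda\ge r_0^{(i)}\lambda>0$, because the null component contributes $r_0^{(i)}\lambda$ and $F_i\ge0$; this guarantees the head conditional law is well defined. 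The tail count is $N_>^{(i)}=m^{(i)}-\sum_kB_k$.

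The first step is the standard thinning/decomposition fact for i.i.d.\ samples. Conditionally on the full indicator vector $(B_1,\dots,B_{m^{(i)}})$, the variables $P_k^{(i)}$ are independent. For $k$ with $B_k=1$, $P_k^{(i)}$ has density $f^{(i)}\1_{[0,\lambda]}/p_\lambda$; for $B_k=0$, it has the tail-restricted law. This holds because the joint law factorizes over $k$ and each factor splits on $\{B_k=1\}$ versus $\{B_k=0\}$. To verify it, I will compute $\PP(P_k\in A_k\ \forall k,\ B=b)$ for Borel $A_k$ and divide by $\PP(B=b)$.

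Two consequences follow. (a) The conditional law of the head values, given $B$, depends on $B$ only through which indices are in the head. By exchangeability, the unordered multiset of head values given $N_>^{(i)}=n$ is therefore an i.i.d.\ sample of size $m^{(i)}-n$ from the head density. Its law does not depend on $n$ beyond the sample size, which gives the stated independence from $N_>^{(i)}$. (b) The same computation shows that any further conditioning on the tail values leaves the head law unchanged, which I will note but do not need.

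Next, since $\widehat L^{(i)}$ is a deterministic (possibly many-to-one) function of $(m^{(i)},N_>^{(i)})$, the interval $\widehat I_j^{(i)}$ is fixed given $(m^{(i)},N_>^{(i)})$. The Storey truncation only changes which function it is. Because $\widehat I_j^{(i)}\subseteq(0,\lambda]$ for $j\le K^{(i)}$, only head points can land in it. Hence
\[
N^{(i)}(\widehat I_j^{(i)})=\sum_{k:B_k=1}\1\{P_k^{(i)}\in\widehat I_j^{(i)}\},
\]
which is a sum of $m^{(i)}-N_>^{(i)}$ conditionally i.i.d.\ Bernoulli variables with success probability $\int_{\widehat I_j^{(i)}}f^{(i)}/p_\lambda$. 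This gives \eqref{eq:body-wc-binom}.

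Finally, for the order $\widetilde p_j^{(i)}=\Theta(\varepsilon)$, I will bound $r_0^{(i)}|\widehat I_j^{(i)}|\le\int_{\widehat I_j^{(i)}}f^{(i)}\le\|f^{(i)}\|_\infty|\widehat I_j^{(i)}|$ and use $p_\lambda\in[r_0^{(i)}\lambda,1]$. It then remains to show $\widehat L^{(i)}=\Theta(\varepsilon)$, which needs $\widehat q^{(i)}\widehat r_0^{(i)}$ bounded away from $0$ and $\infty$. The upper side is automatic: $\widehat q^{(i)}\le1$ and $\widehat r_0^{(i)}\le1$ under truncation. The lower side holds because $\widehat r_0^{(i)}\ge 1/((1-\lambda)m^{(i)})$ is only weakly bounded in general, so $\Theta$ has to be read either with constants depending on the realized $\mathcal G$ or on the high-probability event where $m^{(i)}/m$ and $N_>^{(i)}/m^{(i)}$ are near their means, $q^{(i)}$ and $1-G_i(\lambda)\ge r_0^{(i)}(1-\lambda)$. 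I expect this step to be the main obstacle, since the $\Theta(\varepsilon)$ claim is really a statement on a good event. I will state that event explicitly via Hoeffding bounds and keep the exact binomial claim, which requires no such event, separate.
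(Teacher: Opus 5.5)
Your proposal is correct and follows essentially the paper's argument. The paper builds the same head/tail decomposition through auxiliary variables, writing $P_k^{(i)}=\xi_kY_k+(1-\xi_k)Z_k$ in distribution with independent tail indicators and head/tail draws, which is equivalent to your factorization given the indicator vector $(B_k)$, and then reads off the binomial count exactly as you do. Your extra care over $\widetilde p_j^{(i)}=\Theta(\varepsilon)$ is a genuine refinement, since the paper's proof does not argue this order claim at all; you are right that $\widehat L^{(i)}\le C\varepsilon$ requires $\widehat q^{(i)}\widehat r_0^{(i)}$ to be bounded below, which holds only on a good event of the kind the paper imposes later ($\mathcal E$ in Proposition~\ref{prop:wc-bias-body}, the grid event in the proof of Proposition~\ref{prop:wc-lower}).
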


\begin{proof}
Work conditionally on $m^{(i)}$ throughout: under the hierarchical model
the observations at node $i$ are, conditionally on the node labels,
i.i.d.\ $G_i$, so given $m^{(i)}$ the node's p-values are
i.i.d.\ $G_i$. Decompose each p-value $P_k^{(i)}$ via its tail indicator
$\xi_k=\1\{P_k^{(i)}>\lambda\}$ and two auxiliary draws
$Y_k\sim f^{(i)}\!\restriction_{(\lambda,1]}/\PP(P^{(i)}>\lambda)$ and
$Z_k\sim f^{(i)}\!\restriction_{[0,\lambda]}/\PP(P^{(i)}\le\lambda)$,
where $\{\xi_k\}$, $\{Y_k\}$, $\{Z_k\}$ are jointly independent across $k$
and $P_k^{(i)}=\xi_k Y_k+(1-\xi_k)Z_k$ in distribution. The Storey estimate
$\widehat r_0^{(i)}=(1+\sum_k\xi_k)/((1-\lambda)m^{(i)})$ is a function of
$\{\xi_k\}$ alone, hence so is $\widehat L^{(i)}$. Conditioning further on
$\sum_k\xi_k=N_>^{(i)}$ (which determines $\widehat L^{(i)}$) leaves the
collection $\{Z_k\}_{k:\,\xi_k=0}$ i.i.d.\ from
$f^{(i)}\!\restriction_{[0,\lambda]}/\PP(P^{(i)}\le\lambda)$ and
independent of $\widehat L^{(i)}$. Since $\widehat I_j^{(i)}\subseteq(0,\lambda]$,
the indicators
$\bigl\{\1\{P_k^{(i)}\in\widehat I_j^{(i)}\}\bigr\}_{k:\,\xi_k=0}$ are
i.i.d.\ Bernoulli$(\widetilde p_j^{(i)})$, summing to
\eqref{eq:body-wc-binom}.
\end{proof}

\subsection{Sub-gamma concentration of a single estimate}\label{sec:gap-conc}
Choose the population analogue $h_j^{(i)}$ of $\widehat h_j^{(i)}$ to be the
conditional expectation
\begin{equation}
h_j^{(i)}\;:=\;\E\!\bigl[\widehat h_j^{(i)}\,\bigm|\,\mathcal G\bigr]
\;=\;\frac{(m^{(i)}-N_>^{(i)})\,\widetilde p_j^{(i)}}{\varepsilon m},
\label{eq:body-wc-h-pop}
\end{equation}
so the mean identity is exact by construction. By
Lemma~\ref{lem:binomial}, $\widehat h_j^{(i)}-h_j^{(i)}$ is, conditional on
$\mathcal G$, a centered sum of independent rescaled indicators
$X_k=\1\{P_k^{(i)}\in\widehat I_j^{(i)}\}/(\varepsilon m)$, each in
$[0,1/(\varepsilon m)]$, with total variance
\begin{equation}
\operatorname{Var}\!\bigl(\widehat h_j^{(i)}\,\bigm|\,\mathcal G\bigr)
=\frac{(m^{(i)}-N_>^{(i)})\,\widetilde p_j^{(i)}(1-\widetilde p_j^{(i)})}{(\varepsilon m)^2}
=\Theta\!\bigl((\varepsilon m)^{-1}\bigr),
\label{eq:body-wc-var}
\end{equation}
using $m^{(i)}-N_>^{(i)}=\Theta(m)$ and $\widetilde p_j^{(i)}=\Theta(\varepsilon)$.

We apply Bernstein's inequality
\citep[Thm.~2.10]{BoucheronLugosiMassart2013}: if $X_1,\dots,X_n$ are
independent with $|X_k-\E X_k|\le b$ a.s.\ and total variance
$V=\sum_k\operatorname{Var}(X_k)$, then the centered sum
$S=\sum_k(X_k-\E X_k)$ is \emph{sub-gamma on the right tail} with
parameters $(V,b/3)$ --- written $S\in\Gamma_+(V,b/3)$ --- meaning
$\PP(S\ge t)\le\exp(-t^2/(2(V+bt/3)))$ for all $t\ge0$. With
$b=1/(\varepsilon m)$ and $V=\Theta(1/(\varepsilon m))$ from
\eqref{eq:body-wc-var},
\begin{equation}
\widehat h_j^{(i)}-h_j^{(i)}\;\in\;\Gamma_+\!\bigl(v,\,c\bigr)\;\;
\text{cond.\ on }\mathcal G,\;\;
v=\Theta\!\bigl(\tfrac1{\varepsilon m}\bigr),\;c=\tfrac{1}{3\varepsilon m}.
\label{eq:body-wc-subgamma}
\end{equation}
The standard-deviation rate of a single estimate around its conditional
mean is therefore $\Theta((\varepsilon m)^{-1/2})$.

\subsection{The top-$M$ selection bias}\label{sec:gap-bias}
The greedy reuses the same counts in the rejection denominator: it ranks
the $|\Aset|\le 1/\varepsilon$ estimates and forms
\begin{equation}
\widehat\FDP(M)=\frac{1}{\bar h_M},\quad
\bar h_M=\frac{1}{M}\sum_{\ell=1}^M\widehat h_{(\ell)},
\label{eq:wc-fdp-body}
\end{equation}
so the top-$M$ \emph{order statistics} $\widehat h_{(\ell)}$ ---  not the
individual estimates --- determine the stopping index. The empirical
top-$M$ estimates lie above their population counterparts on average; the
following proposition quantifies the gap.

\begin{proposition}[Top-$M$ selection bias]\label{prop:wc-bias-body}
Let $\widehat S(M)\subseteq\Aset$ be the size-$M$ subset selected by the
empirical ranking, and define the selection bias
\[
\Delta(M\mid\mathcal G)
:=\E\!\left[\frac{1}{M}\sum_{(i,j)\in\widehat S(M)}\bigl(\widehat h_j^{(i)}-h_j^{(i)}\bigr)\,\Big|\,\mathcal G\right] .
\]
Then on the event $\mathcal E=\{\inf_i\widehat r_0^{(i)}\ge r_{\min}>0\}$,
\begin{equation}
\Delta(M\mid\mathcal G)
\le\sqrt{\frac{C_1\log(e|\Aset|/M)}{\varepsilon m}}
+\frac{C_2\log(e|\Aset|/M)}{\varepsilon m},
\label{eq:wc-Delta-body}
\end{equation}
with $C_1,C_2>0$ depending only on $\sup_i\|f^{(i)}\|_\infty$, $r_{\min}$,
and $\lambda$. Conditionally on the node sizes,
$\PP(\mathcal E^c\mid\{m^{(i)}\})\le N\exp(-c_\star\min_i m^{(i)})$ with
$c_\star=2(\inf_i r_0^{(i)}-r_{\min})^2(1-\lambda)^2$, via Hoeffding on
the Storey tail counts; averaging over the multinomial node sizes yields
the unconditional bound
$\PP(\mathcal E^c)\le N\exp(-c_\star q_{\min}m/2)+N\exp(-q_{\min}m/8)$,
$q_{\min}:=\min_i q^{(i)}$.
\end{proposition}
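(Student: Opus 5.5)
The plan is the standard sub-gamma maximal inequality over subsets, done conditionally on $\mathcal G$. First bound the selection bias by a supremum. Since $\widehat S(M)$ is one particular size-$M$ subset of $\Aset$,
\[
\frac1M\sum_{(i,j)\in\widehat S(M)}\bigl(\widehat h_j^{(i)}-h_j^{(i)}\bigr)\;\le\;\max_{S\subseteq\Aset,\,|S|=M} Z_S,\qquad Z_S:=\frac1M\sum_{(i,j)\in S}\bigl(\widehat h_j^{(i)}-h_j^{(i)}\bigr).
\]
It therefore suffices to bound $\E[\max_S Z_S\mid\mathcal G]$. The grid and $\Aset$ are $\mathcal G$-measurable, so the index set of this maximum is deterministic given $\mathcal G$.

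Next, establish a sub-gamma property for each $Z_S$.

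\emph{Within a node.} The counts of disjoint intervals at node $i$ form, by Lemma~\ref{lem:binomial}, a multinomial vector given $\mathcal G$, and multinomial cell counts are negatively associated. Hence the moment generating function of a sum of centered counts over disjoint cells is bounded by the product of the individual ones. Alternatively, I would write $\sum_{j\in S_i}N^{(i)}(\widehat I_j^{(i)})$ directly as a single $\mathrm{Bin}(m^{(i)}-N_>^{(i)},\sum_{j\in S_i}\widetilde p_j^{(i)})$ and apply Bernstein to it.

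\emph{Across nodes.} Counts at different nodes are conditionally independent given $\mathcal G$, since node sizes are conditioned on and p-values are i.i.d.

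\emph{Combining.} Either route gives $MZ_S\in\Gamma_+(Mv,c)$ with $c=1/(3\varepsilon m)$ as in \eqref{eq:body-wc-subgamma}. This requires a uniform variance bound per cell:
\[
v\le \frac{m^{(i)}\,\widetilde p_j^{(i)}}{(\varepsilon m)^2}.
\]
Here $\widetilde p_j^{(i)}\le \|f^{(i)}\|_\infty\widehat L^{(i)}/\PP(P^{(i)}\le\lambda)$ and $\PP(P^{(i)}\le\lambda)\ge r_0^{(i)}\lambda$. On $\mathcal E$ we have $\widehat L^{(i)}\le\varepsilon m/(m^{(i)} r_{\min})$. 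Combining these,
\[
v\le \frac{\sup_i\|f^{(i)}\|_\infty}{r_{\min}\,\lambda\, r_0^{(i)}\,\varepsilon m}.
\]
The stray factor $r_0^{(i)}$ needs care. I would try to absorb it through the condition $\inf_i r_0^{(i)}>r_{\min}$, which is implicit in the definition of $c_\star$, so that $v\le C/(\varepsilon m)$ with $C$ depending only on the advertised quantities. If that route fails, I would use the crude bound $\widetilde p\le1$ on the denominator side.

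Then apply the maximal inequality for $n=\binom{|\Aset|}{M}\le(e|\Aset|/M)^M$ sub-gamma variables \citep[Cor.~2.6]{BoucheronLugosiMassart2013}:
\[
\E\max_S MZ_S\le\sqrt{2Mv\log n}+c\log n.
\]
Dividing by $M$ and using $\log n\le M\log(e|\Aset|/M)$ gives
\[
\sqrt{\frac{2C\log(e|\Aset|/M)}{\varepsilon m}}+\frac{\log(e|\Aset|/M)}{3\varepsilon m},
\]
which is \eqref{eq:wc-Delta-body}. The bound holds on the $\mathcal G$-measurable event $\mathcal E$.

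For the probability of $\mathcal E^c$, work conditionally on the node sizes. Since $N_>^{(i)}\sim\mathrm{Bin}(m^{(i)},1-G_i(\lambda))$ and $1-G_i(\lambda)\ge r_0^{(i)}(1-\lambda)$, the event $\widehat r_0^{(i)}<r_{\min}$ forces
\[
N_>^{(i)}/m^{(i)}<r_{\min}(1-\lambda)\le \E\bigl[N_>^{(i)}/m^{(i)}\bigr]-(r_0^{(i)}-r_{\min})(1-\lambda).
\]
Hoeffding then gives $\exp(-2(r_0^{(i)}-r_{\min})^2(1-\lambda)^2m^{(i)})$, and a union bound over the $N$ nodes yields the conditional statement. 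For the unconditional bound, split on the event $\{m^{(i)}\ge q^{(i)}m/2\ \forall i\}$. Its complement has probability at most $N\exp(-q_{\min}m/8)$ by the multiplicative Chernoff bound for $\mathrm{Bin}(m,q^{(i)})$. On the event itself, $\min_i m^{(i)}\ge q_{\min}m/2$, which gives the first term.

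The main obstacle is the second step: justifying the joint sub-gamma bound for subset sums that span several cells of the same node, where counts are dependent through the multinomial constraint. I expect the merged-binomial observation to settle this cleanly, because disjoint intervals make the within-node sum itself binomial. Constant tracking in $v$ is the secondary nuisance.
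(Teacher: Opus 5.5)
Your proposal is correct and matches the paper's proof step for step: the reduction $U_{\widehat S(M)}\le\max_{|S|=M}U_S$; the observation that disjoint cells make each within-node subset count a single binomial, so Bernstein gives $\Gamma_+(Mv,c)$ with $c=1/(3\varepsilon m)$; the sub-gamma maximal inequality over $\binom{|\Aset|}{M}\le(e|\Aset|/M)^M$ subsets; and Hoeffding on the tail counts plus a binomial lower tail on the node sizes for $\PP(\mathcal E^c)$. Absorbing the stray $1/r_0^{(i)}$ (from $\PP(P^{(i)}\le\lambda)\ge r_0^{(i)}\lambda$) via the standing choice $r_{\min}<\inf_i r_0^{(i)}$ is the right fix, and it is a dependence the paper leaves inside its $\Theta(\cdot)$; your fallback $\widetilde p\le 1$ is not needed, and it would not give $v=O(1/(\varepsilon m))$ anyway.
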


\paragraph{The default bandwidth.}
At $\varepsilon=\alpha/\sqrt m$, $|\Aset|\asymp m^{1/2}$ and
$\log|\Aset|=\tfrac12\log m+O(1)$; substituting into~\eqref{eq:wc-Delta-body},
\[
\Delta(M\mid\mathcal G)\le O\!\bigl(m^{-1/4}\sqrt{\log m}\bigr)
\quad\text{uniformly in }M\le|\Aset|.
\]

\paragraph{Proof strategy.} The selection bias is at most the maximum
of $\binom{|\Aset|}{M}$ centered subset sums $U_S$ over $|S|=M$.
Disjointness of the within-node grid recasts each $U_S$ as a sum of
independent indicators of magnitude $(\varepsilon m)^{-1}$ with
$\Theta\!\bigl(M(\varepsilon m)^{-1}\bigr)$ variance, making it sub-gamma
at scale $\sqrt{M/(\varepsilon m)}$ by Bernstein. A sub-gamma maximal
inequality \citep[Cor.~2.6]{BoucheronLugosiMassart2013} over the
$\binom{|\Aset|}{M}\le(e|\Aset|/M)^M$ subsets supplies the
$\sqrt{\log(e|\Aset|/M)}$ factor in \eqref{eq:wc-Delta-body}. To pass to
the unconditional statement, condition first on the node sizes (which are
random under Assumption~\ref{ass:model}): Hoeffding's one-sided
inequality on the per-node Storey tail counts
$\sum_k\1\{P_k^{(i)}>\lambda\}$ combined with a union bound across $N$
nodes gives
$\PP(\mathcal E^c\mid\{m^{(i)}\})\le N\exp(-c_\star\min_i m^{(i)})$ with
$c_\star=2(\inf_i r_0^{(i)}-r_{\min})^2(1-\lambda)^2$. Averaging over the
multinomial node sizes and using the binomial lower tail
$\PP\bigl(m^{(i)}<q^{(i)}m/2\bigr)\le\exp(-q^{(i)}m/8)$ yields
$\PP(\mathcal E^c)\le N\exp(-c_\star\,q_{\min}m/2)+N\exp(-q_{\min}m/8)$
with $q_{\min}:=\min_iq^{(i)}$, which is exponentially small whenever
$\log N=o(q_{\min}m)$. The full
argument is in Appendix~\ref{app:wc-bias-body}.

\subsection{A matching lower bound under the global null}
\label{sec:gap-lower}

The selection bias is not an artifact of a loose union bound: under the
global null the order-statistics optimism is genuinely of the order that
Proposition~\ref{prop:wc-bias-body} permits.

\begin{proposition}[Tightness of the winner's-curse rate]\label{prop:wc-lower}
Let $N=1$ and $r_0^{(1)}=1$ (global null), fix $\lambda\in(0,1)$, and run
the Storey greedy of \S\ref{sec:gap-setup}. There exist constants
$c,C>0$ depending only on $\lambda$ such that, whenever
$\varepsilon\le\lambda/8$ and $\varepsilon m\ge C\log(\lambda/\varepsilon)$, the following holds
conditionally on the grid $\mathcal G$, on a grid event of probability at
least $1-2e^{-c m}$ on which $K:=K^{(1)}\ge\lambda/(4\varepsilon)$:
\begin{equation}\label{eq:wc-lower}
\Delta(M\mid\mathcal G)\;\ge\;c\,\sqrt{\frac{\log(K/M)}{\varepsilon m}}
\qquad\text{for all } C\log m\le M\le K/C .
\end{equation}
In particular, at the default bandwidth $\varepsilon=\alpha\,m^{-1/2}$,
\[
\Delta(M\mid\mathcal G)\;\ge\;c\,m^{-1/4}\sqrt{\log m}
\qquad\text{for all } C\log m\le M\le\sqrt K,
\]
matching the upper bound \eqref{eq:wc-Delta-body} up to constants. The
proof---a Gaussian lower tail for each binomial count (Slud's inequality,
\citealp{Slud1977}) combined with a negative-association Chernoff bound on
the number of exceedances \citep{JoagDevProschan1983,DubhashiRanjan1998}---is
in Appendix~\ref{app:wc-lower}.
\end{proposition}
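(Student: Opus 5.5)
Under the global null with $N=1$, every head p-value is uniform on $[0,\lambda]$ given $\mathcal G$. By Lemma~\ref{lem:binomial} each count $N_j:=N^{(1)}(\widehat I_j^{(1)})$ is $\Bin(n,p)$ with $n=m-N_>$ and $p=\widehat L/\lambda$, the same for every $j$, so $h_j\equiv h:=np/(\varepsilon m)$ is constant across cells. The selection bias is therefore
\[
\Delta(M\mid\mathcal G)=\frac{1}{\varepsilon m}\,\E\!\Bigl[\frac1M\sum_{\ell\le M}(N_{(\ell)}-np)\,\Big|\,\mathcal G\Bigr],
\]
and it suffices to show $N_{(M)}-np\ge c'\sqrt{np\log(K/M)}$ with conditional probability at least $1/2$. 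The average of the top $M$ order statistics dominates $N_{(M)}$, and the part of the sum below $np$ is controlled crudely, as explained at the end.

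\textbf{Grid event.} First I establish the grid event. By Hoeffding, $N_>$ lies within $(1-\lambda)m/2$ of $(1-\lambda)m$, off an event of probability at most $2e^{-cm}$. On this event $\widehat r_0\in[1/2,2]$ (after truncation, $\widehat r_0\le 1$), so $\widehat L\asymp\varepsilon$ and $K\ge\lambda/(4\varepsilon)$. Also $n\asymp\lambda m$ and $np\asymp\varepsilon m$. The hypothesis $\varepsilon\le\lambda/8$ guarantees $K\ge2$ and $p\le 1/4$, say.

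\textbf{Single-cell lower tail.} Next, fix the threshold $t=c_1\sqrt{np(1-p)\log(K/M)}$. Slud's inequality (valid for $p\le1/4$) gives
\[
\PP(N_j-np\ge t)\ge 1-\Phi\bigl(t/\sqrt{np(1-p)}\bigr)\gtrsim (M/K)^{c_1^2/2}\big/\sqrt{\log(K/M)} .
\]
Choosing $c_1$ small, and using $M\le K/C$ to absorb the log factor, this probability $q$ satisfies $Kq\ge 4M$.

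\textbf{Counting exceedances.} Let $Z=\sum_j\1\{N_j-np\ge t\}$. The multinomial cell counts are negatively associated (Joag-Dev--Proschan), and indicators of monotone events preserve negative association, so the Chernoff lower tail applies (Dubhashi--Ranjan): $\PP(Z<Kq/2)\le e^{-Kq/8}\le e^{-M/2}$. This is small because $M\ge C\log m$. On the complementary event, $N_{(M)}\ge np+t$.

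\textbf{Converting to a bound on the expectation.} This is the step I expect to be the main obstacle. The top-$M$ average is at least $N_{(M)}$ on the good event, but on the bad event the top-$M$ average minus $np$ can still be as low as $-np$. I handle this by noting that the top-$M$ average is always at least the overall mean of the cells, $\bar N=\sum_jN_j/K$, which is $\ge n/K-$small, i.e.\ $\bar N - np\ge -p\cdot$(leftover head mass outside cells), so $\bar N - np\ge -np\cdot O(\varepsilon)$ after accounting for the head remainder $\lambda-K\widehat L<\widehat L$. Moreover, $\bar N$ concentrates and that remainder term is negligible. Combining, the conditional expectation is at least $(1-e^{-M/2})\,t-O(\text{remainder})\ge t/2$. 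Here the hypothesis $\varepsilon m\ge C\log(\lambda/\varepsilon)$ is used to ensure $t\gtrsim\sqrt{\varepsilon m}$ dominates the $O(1)$ lattice and remainder terms. Dividing by $\varepsilon m$ gives \eqref{eq:wc-lower}.

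\textbf{Default bandwidth.} At $\varepsilon=\alpha m^{-1/2}$ we have $K\asymp\sqrt m$. For $M\le\sqrt K$, $\log(K/M)\ge\tfrac12\log K\asymp\log m$, which gives the $m^{-1/4}\sqrt{\log m}$ rate.
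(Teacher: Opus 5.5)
Your proposal follows the paper's proof in outline: a Hoeffding grid event, and equal-mean binomial cells so that $\Delta$ is the expected excess of the top-$M$ average over $np$. Then Slud's lower tail at a threshold of order $\sigma\sqrt{\log(K/M)}$, and negative association plus the Dubhashi--Ranjan Chernoff bound to get at least $M$ exceedances off an event $B$ of probability $e^{-\Omega(M)}$. Finally, a conversion to the conditional expectation. Two steps need tightening.

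\emph{The grid event.} With deviation $(1-\lambda)m/2$ you only get $n\ge\tfrac{3\lambda-1}{2}\,m$, which is vacuous for $\lambda<1/3$. So $n\asymp\lambda m$ and $np\asymp\varepsilon m$ do not follow. The paper uses the deviation $\tfrac12\min(\lambda,1-\lambda)\,m$, which gives $\widehat r_0\ge\tfrac12$ and $n\ge\lambda m/2$ at once. You should also check Slud's upper range $\lceil np+t\rceil\le n(1-p)$; the paper derives it from $\varepsilon\log(\lambda/\varepsilon)\le c_\lambda m$.

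\emph{The conversion step.} The bound $\bar N-np\ge-np\,O(\varepsilon)$ holds only in expectation, not pointwise on $B$. As written, your argument never earns the factor $\PP(B)$. Without it the loss is of order $\varepsilon^2 m$, which is not dominated by $t\asymp\sqrt{\varepsilon m\log(K/M)}$ when $\varepsilon^3 m\gg1$. Here is the clean version. Let $N_{\mathrm{rem}}$ be the head count in $(K\widehat L,\lambda]$. Then exactly $\bar N-np=-(N_{\mathrm{rem}}-\E N_{\mathrm{rem}})/K$, a centered variable with variance at most $np^2/K=O(\varepsilon^3 m)$. Cauchy--Schwarz gives $\E[(\bar N-np)\1_B\mid\mathcal G]\ge-\sqrt{\operatorname{Var}(\bar N\mid\mathcal G)\,\PP(B\mid\mathcal G)}$, which is negligible against $t$.

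With that repair, your all-cells floor is a genuine, if minor, improvement on the paper's Step~3. On the bad event the paper uses only $\bar h_M\ge0$ and $h\le 2/\lambda$, paying $(2/\lambda)e^{-M/4}$ in $h$-units. That loss is the only place its proof needs $M\ge C\log m$, since it must fall below $(\varepsilon m)^{-1/2}$. Bounding the top-$M$ average by the average of all $K$ cells shrinks the loss to $O(\sqrt{\varepsilon/m}\,e^{-M/4})$. This is already a factor $O(\varepsilon)$ below the target scale $(\varepsilon m)^{-1/2}$. So your route would give \eqref{eq:wc-lower} on the wider range $C\le M\le K/C$, although you did not exploit this.
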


Together, Propositions~\ref{prop:wc-bias-body} and \ref{prop:wc-lower}
identify the exact order of the selection optimism at the default
bandwidth: $\Theta\bigl(m^{-1/4}\sqrt{\log m}\bigr)$ over the stated range
of $M$, an upper bound in general and exact under the global null.

\paragraph{Two consequences.}
\emph{(i) The rate is strictly sub-parametric.} By
Proposition~\ref{prop:wc-lower} the upper bound is attained under the
global null, so the optimism is genuinely $\omega(m^{-1/2})$, hence
unabsorbed by any practical sample size; the original greedy's FDR
over-shoot does not vanish at any $m$ in our experiments
(Section~\ref{sec:sim}).

\emph{(ii) The grid spacing $\varepsilon$ trades FDR validity against
power.} Refining the grid ($\varepsilon\downarrow 0$) provably
\emph{worsens} the optimism: in~\eqref{eq:wc-lower}, both
$1/(\varepsilon m)$ and $\log(K/M)$ (with $K\ge\lambda/(4\varepsilon)$)
grow, and the upper bound~\eqref{eq:wc-Delta-body} grows at the matching
rate, so a finer partition has no offsetting effect on bias. Conversely,
coarsening the grid ($\varepsilon\uparrow$) shrinks the bias by the same
mechanism and so eases the FDR over-shoot, but at a power cost: with
$|\Aset|\asymp 1/\varepsilon$ candidate intervals, fewer of them are
available to capture the signal regions, and any signal narrower than
$\varepsilon$ is invisible to the greedy. The grid spacing is thus a
bias--power dial that the original greedy ties to a single scalar; CFGA
(Section~\ref{sec:method}) and BONuS-GA (Section~\ref{sec:bonus})
decouple validity from $\varepsilon$, freeing it to be tuned for power
alone.

\paragraph{Numerator-centric reading.} Via the bridge $\bar h_M=R(M)/(M\,m\,\varepsilon)$
(equation \eqref{eq:wc-num-view-intro} in the section opening), the same
bound on $\Delta(M\mid\mathcal G)$ quantifies the under-count of the
predicted null mass $M\,m\,\varepsilon$ relative to the actual null count
$\E[V(\widehat\Gamma_M)]$ inside the rejected region, up to a
conditional-mean factor: in the equal-mean global-null setting,
$\E[V(\widehat\Gamma_M)\mid\mathcal G]-M\,m\,\varepsilon
=M\,m\,\varepsilon\,\{\Delta(M\mid\mathcal G)+h_M-1\}$, where $h_M$
denotes the ratio of the selected cells' conditional-mean mass to the
predicted mass $m\varepsilon$. The term $h_M-1$ is a
null-rate--estimation discrepancy distinct from selection optimism, and
the identity makes plain that both are \emph{relative} statements: the raw
count discrepancy carries the multiplier $M\,m\,\varepsilon$. Whether
$h_M-1$ is negligible relative to $\Delta(M\mid\mathcal G)$ is a
null-rate--estimation question that the propositions do not address; when
it is, the relative under-count is of the same
$O(m^{-1/4}\sqrt{\log m})$ order.

\paragraph{How CFGA breaks the curse.}
The maximization over $\binom{|\Aset|}{M}$ subsets in the proof of
Prop.~\ref{prop:wc-bias-body} is the entire source of the $\sqrt{\log|\Aset|}$
inflation; everything else in the analysis would be a benign $\Theta((\varepsilon
m)^{-1/2})$ Monte Carlo fluctuation. Sample-splitting (CFGA,
Section~\ref{sec:method}) freezes the ranking on $\D_1$ before any $\D_2$
counts are revealed, so the maximization disappears: each $\D_2$ count that
enters $\widehat\FDP$ is a single \emph{unselected} Binomial deviation, a
$\Theta((\varepsilon m)^{-1/2})$ relative fluctuation ($m^{-1/4}$ at the
default bandwidth)---splitting removes the selection effect, not the
shrinking-bin sampling variability, which the martingale argument of
Section~\ref{sec:method} handles exactly rather than by concentration.
The cost is a constant-factor sample-splitting penalty,
which BONuS-GA (Section~\ref{sec:bonus}) in turn eliminates by replacing the
held-out half with a masking-measurable synthetic bag inside a backward
filtration: the synthetic counts $\widetilde N$ play the role of $\D_2$ and
the bias is again controlled by exchangeability rather than by independence.

\section{CFGA: Cross-Fit Greedy Aggregation}
\label{sec:method}

CFGA addresses the winner's curse of Section~\ref{sec:gap} by separating
the role of \emph{ranking} the intervals from the role of \emph{counting}
the rejections. At each node we randomly partition the p-values into two
equal halves and use each in only one role: the \emph{ranking half}
$\D_1$ decides which intervals to reject; the \emph{inference half}
$\D_2$ supplies the count $R$ that goes into the FDP estimate. Because
the inference-half counts are independent of the ranking that selected
the regions, the $\sqrt{\log|\Aset|}$ subset-maximization inflation of
\S\ref{sec:gap-bias} collapses to a single unselected
$\Theta((\varepsilon m)^{-1/2})$ Monte Carlo deviation per admitted
region---selection-free, though not smaller in order at the default
bandwidth; the proof handles it exactly, by a martingale identity rather
than a maximal inequality. We run this twice with the roles of the
halves swapped, and union the rejections with a Bonferroni-2 correction
so that each fold operates at level $\alpha/2$.

\subsection{Procedure}

\begin{algo}[CFGA]\label{alg:cfga}
Fix $\varepsilon>0$ and target $\alpha\in(0,1)$.
\begin{enumerate}[leftmargin=*]
\item \textbf{Split.} Independently for each hypothesis $k\in\{1,\dots,m\}$,
draw $Z_k\sim\mathrm{Bernoulli}(1/2)$ and assign $k$ to fold $b=1+Z_k$. Let
$\D_b^{(i)}=\{k:Z_k=b-1,\,I_k=i\}$ and $\D_b=\bigcup_i\D_b^{(i)}$. The
fold sizes $m_b^{(i)}:=|\D_b^{(i)}|$ and $m_b:=|\D_b|$ are random with
$\E[m_b^{(i)}]=m^{(i)}/2$, $\E[m_b]=m/2$. The Bernoulli assignment is
drawn independently of the data and uses purely local randomness.

\item \textbf{Fold 1 (rejections in $\D_2$, level $\alpha/2$).}
\begin{enumerate}[label=(1\alph*)]
\item \emph{Ranking on $\D_1$.} On the ranking half, each node $i$ forms a
Storey grid: it estimates the local null proportion
\[
\widehat r_0^{(i)}=\frac{\#\{k\in\D_1^{(i)}:P_k^{(i)}>\lambda\}}{(1-\lambda)\,m_1^{(i)}}
\quad(\lambda=\tfrac12\text{ by default}),
\]
sets the interval width
$\widehat L^{(i)}=\varepsilon/(\widehat q_1^{(i)}\widehat r_0^{(i)})$ with
the \emph{ranking-fold} proportion $\widehat q_1^{(i)}:=m_1^{(i)}/m_1$
(the node sizes are random under Assumption~\ref{ass:model}, and using
the fold proportion keeps the grid $\D_1$-measurable, as the proof
requires; this matches the reference implementation), and
partitions the left tail $[0,\lambda]$ into intervals
$\widehat I_j^{(i)}=\widehat L^{(i)}(j-1,j]$. Reporting only the per-interval
counts $N^{(i)}(\widehat I_j^{(i)})$, the center node ranks all intervals
across the network in decreasing count and admits them greedily; after $M$
admissions the rejection region at node $i$ is the union
$\widehat\Gamma_M^{(i,1)}\subseteq[0,\lambda]$ of its admitted intervals---a
nested family in $M$.
\item \emph{Acceptance on $\D_2$.} The grid width
$\widehat L^{(i)}=\varepsilon/(\widehat q_1^{(i)}\widehat r_0^{(i)})$ makes the
predictable null mass of a $k_i$-interval region cancel the estimated rate,
$m_2\widehat q_1^{(i)}\widehat r_0^{(i)}\nu(\widehat\Gamma_M^{(i,1)})
=m_2\widehat q_1^{(i)}\widehat r_0^{(i)}\,k_i\widehat L^{(i)}=m_2\varepsilon k_i$.
Summing over nodes, the count-based FDP estimator collapses to
\begin{equation}\label{eq:fdpA}
\widehat\FDP_2(M)\;=\;\frac{1+m_2\,\varepsilon\,M}{R_2(M)\vee 1},
\qquad M=\textstyle\sum_i k_i,
\end{equation}
where $m_2=\sum_i m_2^{(i)}$, $M$ is the total number of selected intervals,
and $R_2(M)=\sum_{i}\sum_{k\in\D_2^{(i)}}\1\{P_k^{(i)}\in\widehat\Gamma_M^{(i,1)}\}$.
Let $\widehat M^{(1)}=\max\{M:\widehat\FDP_2(M)\le\alpha/2\}$ ($0$ if empty);
$\mathcal R^{(1)}\!=\!\{k\!\in\!\D_2^{(i)}\!:\!P_k^{(i)}\!\in\!\widehat\Gamma_{\widehat M^{(1)}}^{(i,1)}\}$.
\end{enumerate}

\item \textbf{Fold 2 (rejections in $\D_1$, level $\alpha/2$).}
Repeat Step~2 with the roles of $\D_1$ and $\D_2$ swapped; produce
$\mathcal R^{(2)}\subseteq\D_1$.

\item \textbf{Output} $\mathcal R^{\mathrm{CFGA}}=\mathcal R^{(1)}\cup\mathcal R^{(2)}$.
\end{enumerate}
\end{algo}

\paragraph{Intuition.}
The numerator $m_2\varepsilon M$ is the original greedy's predictable null
mass (each interval carries expected null count $m_2\varepsilon$ by the grid
design); the ``$+1$'' is the standard Selective SeqStep$+$ offset. The
structural form is the same as the original greedy estimator; what changes
is the data on which it is evaluated: the regions $\widehat\Gamma_M^{(i,1)}$
are formed on $\D_1$, and $R_2(M)$ counts rejections on the held-out half
$\D_2$. Because the family $\{\widehat\Gamma_M^{(i,1)}\}$ is independent of
$\D_2$, the noise in $R_2(M)$ is genuine sampling variation, free of the
selection bias (winner's curse) that made the original $\widehat\FDP$
anti-conservative.

\paragraph{Step-up acceptance, not greedy stopping.}
The index $\widehat M^{(1)}=\max\{M:\widehat\FDP_2(M)\le\alpha/2\}$ is a
\emph{step-up} cut on the greedy ranking, not the greedy ``first-crossing'' stop
$\min\{M:\widehat\FDP_2>\alpha/2\}-1$ of the original algorithm. The two agree
only when the FDP path is monotone: in \citet{PournaderiXiang2024} the in-sample
estimate $M/\sum_{\ell\le M}\widehat h_{(\ell)}$ is nondecreasing in $M$ (the
densities enter in decreasing order), but our held-out estimate
$\widehat\FDP_2(M)=(1+m_2\varepsilon M)/(R_2(M)\vee1)$ pairs a deterministic
increasing numerator with a \emph{random} (merely nondecreasing) denominator
$R_2(M)$, so it can dip back below $\alpha/2$ after first crossing it, and the
two indices differ in general. We adopt the step-up form because (i) it is
exactly the backward stopping time controlled by Lemma~\ref{lem:lifted}---the
forward greedy stop is not a backward stopping time, so the finite-sample bound
would not apply to it---and (ii) it rejects a superset of the forward stop, hence
is at least as powerful, reducing to the original rule whenever the path is
monotone. The greedy ranking itself is unchanged: CFGA still rejects a prefix
$\widehat\Gamma_{\widehat M^{(1)}}$ of the density-ordered intervals, and only the
cut point moves from the first crossing to the last index at or below $\alpha/2$.

\paragraph{Per-step communication.}
Step~1 is local at each node: $\D_1^{(i)}$ and $\D_2^{(i)}$ are drawn from
node $i$'s own randomness (no inter-node coordination), and the split itself
is not transmitted. The remaining steps are a single
\emph{uplink--downlink round}, in which each node first sends a batch of
counts to the center and the center then returns the rejection decisions.

\emph{Uplink.} For each fold $b\in\{1,2\}$, node $i$ enumerates the
$K_i^{(b)}=\lfloor\lambda/\widehat L^{(i,b)}\rfloor$ candidate intervals
$\widehat I_j^{(i,b)}\subseteq[0,\lambda]$ on its Storey grid and sends
one tuple per interval to the center,
\begin{equation*}
\bigl(\,j,\ N_b^{(i)}(\widehat I_j^{(i,b)}),\ N_{b'}^{(i)}(\widehat I_j^{(i,b)})\,\bigr),
\end{equation*}
i.e.\ the interval index, the ranking-half count (used by the center to
rank intervals by estimated density), and the inference-half count (used
to evaluate the held-out FDP). The node identifier is implicit in the
message routing. Each integer is in $[0,m^{(i)}]$, so each tuple costs
$O(\log m)$ bits, and node $i$'s total uplink for both folds is
$(K_i^{(1)}+K_i^{(2)})\cdot O(\log m)$ bits.

\emph{Aggregation.} For each fold $b$ the center sorts the pooled tuples
by ranking-half count, forms the nested family
$\{\widehat\Gamma_M^{(i,b)}\}_{M=1}^{M_{\max}}$, evaluates
$\widehat\FDP_{b'}(M)=(1+m_{b'}\varepsilon M)/(R_{b'}(M)\vee 1)$ along
the ranking, and sets
$\widehat M^{(b)}=\max\{M:\widehat\FDP_{b'}(M)\le\alpha/2\}$. All of this
is local at the center; the union $\widehat M^{(1)}\cup\widehat M^{(2)}$
is also local.

\emph{Downlink.} The center returns to each node the list of indices $j$
of its accepted intervals (one index per acceptance, $O(\log m)$ bits
each). The node identifies its rejected p-values locally as those
falling in the assigned intervals.

\smallskip
\emph{Round count.} CFGA is a \emph{one-round} protocol in the
synchronous-round model (one uplink message followed by one downlink
message), with both folds carried in the same round.

\emph{Total bits.} Across nodes the uplink costs
$\sum_i(K_i^{(1)}+K_i^{(2)})\cdot O(\log m)$; since
$\sum_i K_i^{(b)}=O(1/\varepsilon)$ for each fold $b$, at the standard
bandwidth $\varepsilon\asymp m^{-1/2}$ this is $O(\sqrt m\log m)$ bits.
The downlink sends one $O(\log m)$-bit interval index per accepted
interval. For each fold the stopping index cannot exceed the number
of candidate intervals,
$\widehat M^{(b)}\le\sum_i K_i^{(b)}=O(1/\varepsilon)$, so the total
number of accepted intervals across both folds is also $O(\sqrt m)$
and the downlink cost is $O(\sqrt m\log m)$ bits. The total cost is
therefore $O(\sqrt m\log m)$ bits, matching the original greedy's
budget.

\subsection{Finite-sample FDR control}

Fix a fold $b\in\{1,2\}$ with ranking half $\D_b$ and inference half
$\D_{b'}$ of size $m_{b'}$ ($b':=3-b$), and condition on $\D_b$, so that the
regions $\widehat\Gamma_M^{(i,b)}$ are $\sigma(\D_b)$-measurable. With the
lifted hypothesis $\widetilde X_k=(I_k,P_k)$ from
Section~\ref{sec:setting} and the lifted region
$\widetilde\Gamma_M=\bigcup_i\{i\}\times\widehat\Gamma_M^{(i,b)}$, let
\[
V(M)=\#\{k\in\D_{b'}:k\text{ null},\,\widetilde X_k\in\widetilde\Gamma_M\}
\]
be the inference-half null count, and recall the lifted null mass
$p_M=\sum_i\pi^{(i)}\nu(\widehat\Gamma_M^{(i,b)})$ from
Section~\ref{sec:setting}, so $m_{b'}p_M=\E[V(M)\mid\D_b]$.

\begin{theorem}[Finite-sample FDR for CFGA under known null rates]\label{thm:cfga}
Under Assumption~\ref{ass:model}, suppose each fold of Algorithm~\ref{alg:cfga}
uses the \emph{oracle} null mass $\bar V_{\mathrm{or}}(M)=m_{b'}p_M$ (built from
the true null rates $\pi^{(i)}$) in place of the rate-free $m_{b'}\varepsilon M$,
stopping at
$\widehat M^{(b)}=\max\{M:(1+\bar V_{\mathrm{or}}(M))/(R(M)\vee1)\le\alpha/2\}$
($\widehat M^{(b)}=0$ if the set is empty; the superscript indicates the
fold, while $\widehat M^{(b)}$ depends on both $\D_b$ and $\D_{b'}$ through
the regions and the count $R(M)$).
Then $\FDR\le\alpha$ for every $m\ge1$\footnote{With the conventions:
an empty ranking fold at a node contributes no candidate intervals from
that node, an empty inference fold contributes zero counts, and an empty
candidate family makes no rejections. Each occurs with positive
probability under the Bernoulli split when $m^{(i)}$ is tiny, and the
proof applies verbatim.} and every $\varepsilon>0$. The proof---a
lifted-space adaptive lemma (Lemma~\ref{lem:lifted}) followed by a Bonferroni
union over the two folds---is deferred to Appendix~\ref{app:lifted}.
\end{theorem}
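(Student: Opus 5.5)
The plan is to prove that each fold's rejection set satisfies $\E[V^{(b)}/(R^{(b)}\vee1)]\le\alpha/2$. Since $\mathcal R^{(1)}\subseteq\D_2$ and $\mathcal R^{(2)}\subseteq\D_1$ are disjoint, the union has $\FDP\le V^{(1)}/(R^{(1)}\vee1)+V^{(2)}/(R^{(2)}\vee1)$. This holds because $V^{(1)}+V^{(2)}$ over $R^{(1)}+R^{(2)}$ is bounded by the sum of the two ratios, as each denominator is at most the total. Taking expectations then gives $\FDR\le\alpha$. This is the Bonferroni union step, and it is routine.

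For a single fold, say $b=1$, I would condition on $\D_1$ together with the split indicators, and hence the fold sizes. Under Assumption~\ref{ass:model}, the inference-half points $\widetilde X_k$, $k\in\D_2$, are then i.i.d.\ from the lifted law $\sum_i q^{(i)}(\delta_i\otimes G_i)$. They are independent of the nested lifted regions $\widetilde\Gamma_1\subseteq\widetilde\Gamma_2\subseteq\cdots$, which are frozen. Each $\D_2$ point is null and lands in $\widetilde\Gamma_M$ with probability exactly $p_M$. I would state this as the lifted-space adaptive lemma (Lemma~\ref{lem:lifted}): for nested deterministic regions with null masses $p_M$, and with $\widehat M=\max\{M:(1+m_2p_M)/(R(M)\vee1)\le\alpha/2\}$, one has $\E[V(\widehat M)/(R(\widehat M)\vee1)]\le\alpha/2$.

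On the event $\{\widehat M\ge1\}$ the stopping rule gives $R(\widehat M)\ge(1+m_2p_{\widehat M})/(\alpha/2)$. So
\[
\frac{V(\widehat M)}{R(\widehat M)\vee1}\le\frac{\alpha}{2}\cdot\frac{V(\widehat M)}{1+m_2p_{\widehat M}},
\]
and it suffices to show $\E[V(\widehat M)/(1+m_2p_{\widehat M})]\le 1$. For this I would use the backward filtration $\F_M$ generated by the counts $(R(M'),V(M'))_{M'\ge M}$ and the membership of points outside the shrinking regions. The process $V(M)/(1+m_2p_M)$, or rather a version adjusted to $V(M)/(1+m_2p_M)$ with the ``$+1$'' handled as in Storey's/Selective SeqStep$+$ argument, is a backward supermartingale. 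Indeed, given $\F_{M+1}$, the $V(M+1)$ nulls in $\widetilde\Gamma_{M+1}$ are i.i.d.\ uniform in the lifted null measure, so
\[
V(M)\mid\F_{M+1}\sim\Bin\bigl(V(M+1),\,p_M/p_{M+1}\bigr).
\]
Since $\widehat M$ is a backward stopping time, optional stopping applies. The cleaner route is the standard leave-one-out identity: write $V(\widehat M)=\sum_{k\text{ null}}\1\{\widetilde X_k\in\widetilde\Gamma_{\widehat M}\}$, and note that replacing $\widetilde X_k$ by a point outside $\widetilde\Gamma_{\max}$ yields $\widehat M_{-k}$. The step-up rule is monotone, so on the event that $\widetilde X_k$ lies in the stopped region, $\widehat M$ is unchanged when the point is moved within the region structure. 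This lets one compute
\[
\E\Bigl[\frac{\1\{\widetilde X_k\in\widetilde\Gamma_{\widehat M}\}}{1+m_2p_{\widehat M}}\Bigr]\le\E\Bigl[\frac{p_{\widehat M_k}}{1+m_2p_{\widehat M_k}}\Bigr]\le\frac1{m_2}\cdot(\ldots).
\]
Summing over the at most $m_2$ nulls gives at most $1$. The $+1$ in the numerator is exactly what absorbs the contribution of the point itself.

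The main obstacle is this last step: making the optional-stopping/leave-one-out argument rigorous when the fold sizes and the number of nulls in $\D_2$ are random. I would first try conditioning on $m_2$ only, not on which points are null, and treat each of the $m_2$ lifted points symmetrically with null-and-in-region probability $p_M$. The binomial-thinning supermartingale then has expectation at most $m_2p_{M_{\max}}/(1+m_2p_{M_{\max}})\le1$ at the top index, since $\E[V(M)]=m_2p_M$ and the map $x\mapsto x/(1+x)$ is concave. After that, the degenerate cases in the footnote (empty folds, empty candidate family) only force $\widehat M=0$ or zero counts and need no separate treatment. Finally, the result is averaged over $\D_1$ and the split.
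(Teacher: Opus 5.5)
Your proposal is correct and is essentially the paper's proof. Conditioning on the ranking half, bounding each fold's FDP by $\tfrac{\alpha}{2}\,V(\widehat M)/(1+m_{b'}p_{\widehat M})$, using the thinning $V(M)\mid\F_{M+1}\sim\Bin(V(M+1),p_M/p_{M+1})$ to make $V(M)/(1+m_{b'}p_M)$ a backward supermartingale, anchoring at $M_{\max}$ with $m_{b'}p_{M_{\max}}/(1+m_{b'}p_{M_{\max}})\le 1$, and closing with the disjoint-fold Bonferroni union is exactly Lemma~\ref{lem:lifted} followed by the proof of Theorem~\ref{thm:cfga}.

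Your hedges are unnecessary: since $p_M$ is $\sigma(\D_b)$-measurable, the supermartingale step reduces to $p_M\le p_{M+1}$ with no SeqStep$+$ adjustment and no concavity argument, and the leave-one-out detour can be dropped. If you keep the leave-one-out, the replacement point must go inside $\widetilde\Gamma_1$, not outside $\widetilde\Gamma_{M_{\max}}$, for $\widehat M$ to be unchanged on the rejection event.
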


\paragraph{CFGA is a plug-in heuristic.} The oracle mass needs the unknown null
rates $\pi^{(i)}=q^{(i)}r_0^{(i)}$. CFGA supplies them \emph{implicitly}
through the Storey grid $\widehat L^{(i)}=\varepsilon/(\widehat q_b^{(i)}\widehat r_0^{(i)})$
(fold proportions, as in Algorithm~\ref{alg:cfga}):
plugging the same $\widehat q_b^{(i)}\widehat r_0^{(i)}$ into the null mass gives
$m_{b'}\,\widehat q_b^{(i)}\widehat r_0^{(i)}\,\nu(\widehat\Gamma_M^{(i,b)})
=m_{b'}\,\varepsilon\,k_i$, so the rate \emph{cancels} and the numerator
collapses to the rate-free $1+m_{b'}\varepsilon M$ of \eqref{eq:fdpA}. This
matches the original greedy estimator \citep{PournaderiXiang2024}
(scored cross-fit) up to the additive ``$+1$'' SeqStep$+$ offset, which
our finite-sample supermartingale argument uses (whether it can be
removed is discussed among the open problems in
Section~\ref{sec:discuss}) and which does not
appear in the asymptotic form of \citet{PournaderiXiang2024}.
Since these grid estimates may fall below the truth, the plug-in mass need not
dominate $m_{b'}p_M$ and the guarantee of Theorem~\ref{thm:cfga} is not assured
at finite $m$: CFGA is a \emph{heuristic} (it controls FDR empirically
throughout Section~\ref{sec:sim}). Section~\ref{sec:augmented} (Inflated CFGA)
restores rigor by using an inflated rate that does \emph{not} cancel.

\begin{remark}[Bonferroni-2 is loose]
The factor-two Bonferroni is conservative when $\mathcal R^{(1)}$ and
$\mathcal R^{(2)}$ exhibit positive correlation, as they do in practice
(both folds see the same signal through the same population).
Section~\ref{sec:ecfga} removes the factor: aggregating the folds through
generalized e-values \citep{WangRamdas2022} and rejecting by e-BH keeps
$\FDR\le\alpha$ at every finite $m$ with \emph{no} Bonferroni division,
and simultaneously averages over the split randomness.
\end{remark}

\section{Inflated CFGA: a rigorous plug-in}
\label{sec:augmented}

CFGA is a heuristic: the per-node null rate $\pi^{(i)}=q^{(i)}r_0^{(i)}$ enters
its FDP numerator only through the Storey grid, where it cancels and leaves the
rate-free $1+m_2\varepsilon M$ (Theorem~\ref{thm:cfga}). Because that implicit
rate can fall below the truth, the plug-in null mass need not dominate
$m_{b'}p_M$, and the oracle guarantee does not transfer. \emph{Inflated CFGA}
restores rigor by replacing the implicit rate with an \emph{explicit} estimate,
inflated by a vanishing margin so that it dominates the truth with high
probability.

\subsection{The inflated estimator}

CFGA already confines its rejection regions to the left tail $[0,\lambda]$
(Algorithm~\ref{alg:cfga}). On the \emph{complementary} tail $(\lambda,1]$ of
the ranking half we form a Storey estimate of the per-node null rate,
\begin{equation}\label{eq:pihat}
\widehat\pi_b^{(i)}
=\frac{\#\{k\in\D_b^{(i)}:P_k^{(i)}>\lambda\}}{m_b(1-\lambda)},
\end{equation}
and inflate it by a uniform margin
$\widehat\pi_b^{(i),+}=\widehat\pi_b^{(i)}+\delta$ with
\begin{equation}\label{eq:delta}
\delta=\frac{1}{1-\lambda}\sqrt{\frac{\log(2N/\eta)}{2m_b}} .
\end{equation}
By Hoeffding's inequality (made explicit in the proof of
Theorem~\ref{thm:aug}) the margin guarantees
$\widehat\pi_b^{(i),+}\ge\pi^{(i)}$ for all $i$ with probability at least
$1-\eta$.

\begin{algo}[Inflated CFGA]\label{alg:aug}
Fix $\varepsilon>0$, $\lambda\in(0,1)$ and $\eta\in(0,1)$ (default $\eta=1/m$).
Run CFGA (Algorithm~\ref{alg:cfga}), but in each fold replace the rate-free
acceptance numerator $1+m_{b'}\varepsilon M$ by $1+\bar V(M)$, where
$\bar V(M)=m_{b'}\sum_i\widehat\pi_b^{(i),+}\,\nu(\widehat\Gamma_M^{(i,b)})$ uses
the inflated ranking-half estimate \eqref{eq:pihat}--\eqref{eq:delta}; stop at
$\widehat M^{(b)}=\max\{M:(1+\bar V(M))/(R(M)\vee1)\le\alpha/2\}$ and union the
two folds.
\end{algo}

\paragraph{Per-step communication.} Inflated CFGA is still a one-round
uplink--downlink protocol; the only addition over CFGA is one extra
\emph{bin}, the tail $A=(\lambda,1]$, per node per fold. Concretely,
node $i$ appends the tail count
$N_b^{(i)}(A)=\#\{k\in\D_b^{(i)}:P_k^{(i)}>\lambda\}$ to its uplink
batch for each fold $b\in\{1,2\}$ --- one extra integer in
$[0,m_b^{(i)}]$, costing $O(\log m)$ bits. The center then assembles
the inflated Storey null-rate estimate
\begin{equation*}
\widehat\pi_b^{(i),+}\;=\;\frac{N_b^{(i)}(A)}{(1-\lambda)m_b}+\delta
\end{equation*}
from the uploaded tail count and the inflation~\eqref{eq:delta} (which
depends only on the globally known $N,m,m_b,\lambda$), plugs
$\widehat\pi_b^{(i),+}$ into the predicted null count
$m_{b'}\widehat\pi_b^{(i),+}\nu(\widehat\Gamma_M^{(i,b)})$ used by the
FDP estimator, and otherwise runs CFGA's aggregation and downlink
unchanged.

\smallskip
\emph{Round count.} Still one synchronous round (one uplink, one
downlink), with both folds carried together.

\emph{Total bits.} Inflated CFGA adds $2N\cdot O(\log m)=O(N\log m)$
bits over CFGA (one tail count per node per fold); since CFGA itself
sends $O(\sqrt m\log m)$ bits at $\varepsilon\asymp m^{-1/2}$, the
Inflated CFGA total is $O((N+\sqrt m)\log m)$ bits in one round,
dominated by the CFGA upload whenever $N=O(\sqrt m)$.

\begin{theorem}[FDR control for Inflated CFGA]\label{thm:aug}
Under Assumption~\ref{ass:model}, Algorithm~\ref{alg:aug} satisfies
$\FDR\le\alpha+\eta$ for every $m\ge1$, every $\varepsilon>0$, every
$\lambda\in(0,1)$ and every $\eta\in(0,1)$; with $\eta=1/m$ this reads
$\FDR\le\alpha+1/m$, and the inflation
$\delta=\frac{1}{1-\lambda}\sqrt{\log(2Nm)/(2m_b)}\to 0$
as $m_b\to\infty$ with $\log(Nm)=o(m_b)$, so the penalty vanishes as
the ranking half grows. The proof is in Appendix~\ref{app:lifted}.
\end{theorem}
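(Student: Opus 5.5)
We reduce Theorem~\ref{thm:aug} to the oracle argument behind Theorem~\ref{thm:cfga}, using a good event on which the inflated null mass dominates the oracle mass. Fix a fold $b$ and condition on the ranking half $\D_b$ together with the split. The regions $\widehat\Gamma_M^{(i,b)}$ and the estimates $\widehat\pi_b^{(i),+}$ are then $\sigma(\D_b)$-measurable. Define the good event $\mathcal E_b=\{\widehat\pi_b^{(i),+}\ge\pi^{(i)}\ \forall i\}$, which is also $\D_b$-measurable. On $\mathcal E_b$ we have $\bar V(M)\ge m_{b'}p_M=\bar V_{\mathrm{or}}(M)$ for every $M$, pointwise and with both sides predictable given $\D_b$.

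The Hoeffding step is as follows. Under Assumption~\ref{ass:model}, the $m_b$ lifted pairs in $\D_b$ are i.i.d. given $m_b$, because the split is independent of the data. The indicators $\1\{I_k=i,P_k>\lambda\}$ have mean $q^{(i)}(r_0^{(i)}(1-\lambda)+r_1^{(i)}(1-F_i(\lambda)))\ge\pi^{(i)}(1-\lambda)$. Hence $\E[\widehat\pi_b^{(i)}\mid m_b]\ge\pi^{(i)}$. Hoeffding's one-sided bound on a mean of $m_b$ bounded indicators, after dividing by $(1-\lambda)$, gives $\PP(\widehat\pi_b^{(i)}<\pi^{(i)}-\delta\mid m_b)\le\exp(-2m_b(1-\lambda)^2\delta^2)=\eta/(2N)$. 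A union bound over $N$ nodes yields $\PP(\mathcal E_b^c)\le\eta/2$ for each fold. (If $m_b=0$ the fold makes no rejections, per the footnote conventions.)

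Next comes the monotonicity/domination step, which is where the work lies. The stopping index $\widehat M^{(b)}$ defined with $\bar V$ is not the oracle index. However, on $\mathcal E_b$ it lies in the set $\{M:(1+\bar V_{\mathrm{or}}(M))/(R(M)\vee1)\le\alpha/2\}$, since the inflated condition is stronger. It is therefore some data-dependent index at which the oracle ratio is at most $\alpha/2$, but not necessarily the maximal one. I would not compare it to the oracle index directly. Instead I would invoke Lemma~\ref{lem:lifted} in its backward-stopping-time form, applied to the inflated numerator. Conditionally on $\D_b$, the nested lifted regions $\widetilde\Gamma_M$ and the process $V(M)/(1+m_{b'}p_M)$ form a backward supermartingale in the filtration that reveals the inference-half counts from the largest region downward. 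The key point is that $\widehat M^{(b)}=\max\{M:(1+\bar V(M))/(R(M)\vee1)\le\alpha/2\}$ is a backward stopping time in that filtration, because $\bar V$ is $\D_b$-predictable. Therefore
\[
\E\Bigl[\frac{V(\widehat M)}{R(\widehat M)\vee1}\1_{\mathcal E_b}\Bigm|\D_b\Bigr]\le\frac{\alpha}{2}\,\1_{\mathcal E_b}\,\E\Bigl[\frac{V(\widehat M)}{1+\bar V(\widehat M)}\Bigm|\D_b\Bigr]\le\frac{\alpha}{2}\,\E\Bigl[\frac{V(\widehat M)}{1+m_{b'}p_{\widehat M}}\Bigm|\D_b\Bigr]\le\frac{\alpha}{2}.
\]
The last inequality is the optional-stopping bound of Lemma~\ref{lem:lifted}, whose conclusion holds for any backward stopping time. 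The main obstacle is to confirm that the lemma is stated, or can be restated, for an arbitrary backward stopping time rather than only the oracle cut. I would check this by reproving the supermartingale property of $V(M)/(1+m_{b'}p_M)$ under binomial thinning of nested regions given $\D_b$. Here $m_{b'}$ is random but independent of $\D_b$'s values given the split, and I would handle it by conditioning on it as well.

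Finally we combine the pieces. For each fold, $\E[\FDP^{(b)}]\le\alpha/2+\PP(\mathcal E_b^c)\le\alpha/2+\eta/2$, using $\FDP\le1$ off the good event. The union step is identical to Theorem~\ref{thm:cfga}: $V/(R\vee1)\le\sum_b V^{(b)}/(R^{(b)}\vee1)$ since $\mathcal R^{(1)},\mathcal R^{(2)}$ are disjoint with $R\ge R^{(b)}$. Summing over the two folds gives $\FDR\le\alpha+\eta$. The remarks on $\eta=1/m$ and $\delta\to0$ then follow directly from~\eqref{eq:delta}.
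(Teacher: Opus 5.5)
Your proposal is correct and follows essentially the same route as the paper's proof: the $\sigma(\D_b)$-measurable dominance event, then Hoeffding with a union bound giving $\PP(E^c)\le\eta/2$ per fold, then Lemma~\ref{lem:lifted} applied directly to the inflated step-up index as a backward stopping time (with no comparison to the oracle index), and finally the disjoint-fold union bound. The differences are cosmetic: you pass through $V/(1+\bar V)\le V/(1+\bar V_{\mathrm{or}})$ where the paper instead transfers dominance into the stopping inequality, and the paper applies the lemma at $\widehat M^{(b)}\vee 1$ to handle the empty cut, which your convention $V(0)=0$ covers implicitly.
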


\paragraph{The margin is the price of rigor.} The inflation
$\delta=O(\sqrt{\log(N/\eta)/m_b})$ enters the FDP numerator as
$m_{b'}\delta\sum_i\nu(\widehat\Gamma_M^{(i,b)})$, which grows with the
number of nodes carrying selected regions; in networks with hundreds of
nodes and moderate per-node samples it is large enough that the
\emph{fixed}-bandwidth Inflated procedure can make no rejections at all
(Section~\ref{sec:sim}, Experiments~1 and~4). Inflated CFGA should
therefore be run with the adaptive bandwidth of Section~\ref{sec:eps},
which widens the intervals until the signal mass clears the inflated null
mass; this is the configuration used throughout the experiments.

\paragraph{What the inflated method needs from the estimate.}
The heuristic forms no explicit rate estimate (the rate cancels in
$1+m_{b'}\varepsilon M$), so it imposes no conditions; the inflated
estimate $\widehat\pi_b^{(i),+}$ does. The proof of
Theorem~\ref{thm:aug} relies on two properties:
\emph{(R1) Measurability.} $\widehat\pi_b^{(i),+}$ must be measurable
with respect to a $\sigma$-field disjoint from the inference-half
\emph{head} $\D_{b'}\cap[0,\lambda]$ that drives $V(M)$ in
Lemma~\ref{lem:lifted}. Our estimator uses $\sigma(\D_b)$; the
pooled-tail variant of Remark~\ref{rem:pool-tail} uses the larger
$\sigma(\D_b,\D_{b'}\cap(\lambda,1])$, also valid.
\emph{(R2) Dominance.} $\widehat\pi_b^{(i),+}\ge\pi^{(i)}$ on a
high-probability event. Any region $A\subseteq[0,1]$ gives a valid
estimator $\#\{P\in A\}/(|A|\,m_b)$ since signal contamination only
raises the proportion; the Storey choice $A=(\lambda,1]$ is preferred
for \emph{power}, because signal p-values concentrate near zero and so
contaminate the tail the least. Any estimator satisfying (R1) and (R2)
inherits Theorem~\ref{thm:aug} with $\delta$ replaced by the
corresponding Hoeffding margin.

\begin{remark}[Tighter inflation using both halves' tails]\label{rem:pool-tail}
The decouplings above place only two constraints on the rate estimator:
it must (a) dominate $\pi^{(i)}$ on the high-probability event $E$, and
(b) be measurable with respect to a $\sigma$-field disjoint from the
inference-half \emph{head} $\D_{b'}\cap[0,\lambda]$ that drives $V(M)$.
The current estimator uses only the $\D_b$ tail, leaving the
inference-half tail $\D_{b'}\cap(\lambda,1]$ unused even though it sits
in the same disjoint region. Pooling the two tails into
\begin{equation*}
\widehat\pi_{\rm pool}^{(i)}=\frac{\#\{k\in\D^{(i)}:P_k>\lambda\}}{(1-\lambda)\,m},
\qquad
\delta_{\rm pool}=\frac{1}{1-\lambda}\sqrt{\frac{\log(2N/\eta)}{2m}} ,
\end{equation*}
where $\D^{(i)}=\D_b^{(i)}\cup\D_{b'}^{(i)}$, yields a
$\sigma(\D_b,\,\D_{b'}\cap(\lambda,1])$-measurable estimator that still
dominates $\pi^{(i)}$ with probability $\ge 1-\eta/N$ by Hoeffding on
$m$ summands. One subtlety prevents Theorem~\ref{thm:aug} from carrying
over \emph{verbatim}: conditioning on the inference-half tail also
freezes the inference-half \emph{head count}
$m_H:=\#\{k\in\D_{b'}:P_k\le\lambda\}$, so the conditional null mass of
the head is $m_H\,p_M/\PP(P\le\lambda)$ rather than $m_{b'}p_M$; since
$m_H$ exceeds its mean $m_{b'}\PP(P\le\lambda)$ by $O(\sqrt{m_{b'}})$
with constant probability, the optional-stopping anchor of
Lemma~\ref{lem:lifted} requires a \emph{second} concentration event
controlling $m_H$, whose margin is of the same
$O(\sqrt{\log(1/\eta)/m_{b'}})$ order as $\delta$ itself. The head/tail
decomposition of Lemma~\ref{lem:binomial} shows the head points remain
conditionally i.i.d., so with both events in force the backward
super-martingale argument goes through, and the net margin gain over
\eqref{eq:delta} is a constant factor strictly between $1$ and
$\sqrt2$. We use the single-half estimator in the experiments for
simplicity, and flag the pooled variant as a refinement whose only
extra message is the inference-half tail count appended to each node's
uplink (one more integer per node per fold, same $O(\log m)$-bit budget
as the existing tail count); its formal statement, with the extra $m_H$
event, is routine but omitted.
\end{remark}

\section{Derandomized e-value aggregation (e-CFGA)}
\label{sec:ecfga}

CFGA carries two residual costs. The first is the Bonferroni-2 split: each
fold is stopped at level $\alpha/2$ and the per-fold rejections are unioned,
so a problem with signal in both folds effectively runs the procedure at
half the target level. The second is the dependence of the output on a
single random split: a different Bernoulli$(\tfrac12)$ draw produces a
different, correlated rejection set, which is undesirable when the
practitioner wants a reproducible decision.

Both costs are removed by communicating across folds through
\emph{generalized e-values} \citep{WangRamdas2022} instead of a union bound.
This section turns each fold's stopping decision into an e-value, averages
e-values across folds and across $S$ independent splits, and finishes with
a single e-BH cut at the global level $\alpha$. The procedure inherits
finite-sample $\FDR\le\alpha$ from a fold-level bound proved by
Lemma~\ref{lem:lifted}.

\subsection{E-value background}
\label{sec:ecfga-evalue}

A nonnegative random variable $e$ is an \emph{e-value} for a null hypothesis
$H_0$ if $\E_{H_0}[e]\le 1$. The e-BH procedure of
\citet{WangRamdas2022} takes per-hypothesis e-values $e_1,\dots,e_m$,
orders them in decreasing magnitude $e_{(1)}\ge e_{(2)}\ge\cdots$, and
rejects the $k^\star$ largest with $k^\star=\max\{k:k\,e_{(k)}\ge m/\alpha\}$.
It controls $\FDR\le\alpha$ under \emph{arbitrary} dependence among the
e-values, provided each null $k$ satisfies $\E[e_k]\le 1$. Two facts make
this calculus convenient for our setting:

\emph{(a) Averaging preserves the e-value property.} If
$e^{(1)},\dots,e^{(S)}$ are e-values --- each null mean $\le 1$ ---
then $\bar e=S^{-1}\sum_s e^{(s)}$ also has null mean $\le 1$, by
linearity of expectation, no matter how the $e^{(s)}$'s are correlated.
This is what makes the averaging \emph{free} (with finitely many splits
the output remains random; $S$ splits shrink the split-to-split
variability rather than eliminating it): we can average across
arbitrary dependent splits without paying a Bonferroni union over them.

\emph{(b) Each CFGA fold supplies a per-hypothesis e-value.}
Fix a split index $s\in\{1,\dots,S\}$ and a fold label $b\in\{1,2\}$ with
ranking half $\D_b^s$ and inference half $\D_{b'}^s$ of size $m_{b'}^s$
($b':=3-b$). Lemma~\ref{lem:lifted} (the backward super-martingale
underlying Theorem~\ref{thm:cfga}), applied to the fold's stopping
index $\widehat M_{s,b}$ --- a backward stopping time of the fold's
lifted filtration $(\F_M)$ defined in~\eqref{eq:lem-lifted-filt}, not
$\sigma(\D_b^s)$-measurable since it depends on the inference-half
counts $R(M)$ --- with the oracle null mass
$\bar V(M)=m^s_{b'}p_M$, or with any plug-in that dominates the oracle
mass \emph{pathwise on a high-probability event} (the inflated
estimator of Section~\ref{sec:augmented}, handled exactly as in
Theorem~\ref{thm:aug}), gives
\begin{equation}\label{eq:efold-bound}
\E\!\left[\frac{V^{s,b}(\widehat M_{s,b})}{1+\bar V(\widehat M_{s,b})}\,\bigg|\,\D_b^s\right]
\;\le\;1 ,
\end{equation}
where
$V^{s,b}(\widehat M_{s,b})=\#\{k\in\D_{b'}^s:k\text{ null and rejected}\}$
is the inference-half null count of fold $(s,b)$. For the oracle mass,
\eqref{eq:efold-bound} holds for every ranking-half realization; for the
inflated plug-in it holds only \emph{on the dominance event} of
Theorem~\ref{thm:aug} (per-split confidence $\eta/S$), and the
complementary event is absorbed into the level by the good-event
aggregation of Appendix~\ref{app:ecfga}, built on the fold-level argument
of Appendix~\ref{app:lifted}, yielding the $\alpha+\eta$ guarantee of
Theorem~\ref{thm:ecfga}. Multiplying
\eqref{eq:efold-bound} by $m_{b'}^s$ shows that assigning the constant
\begin{equation}\label{eq:efold-evalue}
e_k^{(s,b)}\;=\;\frac{m_{b'}^s\,\1\{k\text{ rejected by fold }(s,b)\}}{1+\bar V(\widehat M_{s,b})}
\end{equation}
to every hypothesis $k\in\D_{b'}^s$ (and $e_k^{(s,b)}=0$ to every
$k\notin\D_{b'}^s$) makes
$\sum_{\text{null }k\in\D_{b'}^s}\E[e_k^{(s,b)}\mid\D_b^s]\le m_{b'}^s$.
The two folds of split $s$ contribute at most $m_1^s+m_2^s=m$, and the
average over $S$ splits preserves the total:
$\sum_{\text{null}}\E[\bar e_k]\le m$. Feeding $\bar e_k$ to e-BH at
level $\alpha$ then controls FDR by the generalized-e-value calculus
of \citet{WangRamdas2022}.

\subsection{Algorithm}

\begin{algo}[e-CFGA]\label{alg:ecfga}
Fix $\varepsilon>0$, $\alpha\in(0,1)$, a fold level $\gamma\in(0,\alpha]$
(default $\gamma=0.75\,\alpha$; see \S\ref{sec:ecfga-gamma}), and a number
of splits $S\ge1$.
\begin{enumerate}[leftmargin=*]
\item For $s=1,\dots,S$: draw an i.i.d.\ Bernoulli$(1/2)$ split as in
Algorithm~\ref{alg:cfga}; for each fold $b\in\{1,2\}$ run the greedy on
the ranking half and stop on the inference half at level $\gamma$,
$\widehat M_{s,b}=\max\{M:(1+\bar V(M))/(R(M)\vee1)\le\gamma\}$, where
$\bar V(M)$ is the fold's null-mass numerator (oracle $m_{b'}p_M$, or its
Storey/inflated plug-ins). Assign to every hypothesis $k$ in the
\emph{inference} half the generalized e-value~\eqref{eq:efold-evalue},
and $e_k^{(s,b)}=0$ otherwise.
\item Average: $\bar e_k=\frac1S\sum_{s=1}^S\sum_{b}e_k^{(s,b)}$ (each
hypothesis is in exactly one inference half per split, so only one of
$e_k^{(s,1)},e_k^{(s,2)}$ is nonzero).
\item Reject by e-BH at level $\alpha$: order
$\bar e_{(1)}\ge\bar e_{(2)}\ge\cdots$ and reject the $k^\star$ largest,
$k^\star=\max\{k:k\,\bar e_{(k)}\ge m/\alpha\}$.
\end{enumerate}
\end{algo}

\paragraph{Reading the algorithm.}
Step~1 runs $S$ independent CFGA-style selections, each on a fresh
Bernoulli$(\tfrac12)$ split, and stops each fold at the \emph{same target
level} $\gamma$ (not $\alpha/2$). Every hypothesis in the fold's
rejection set $\widetilde\Gamma_{\widehat M_{s,b}}$ is then minted into
an e-value of value $m_{b'}^s/(1+\bar V(\widehat M_{s,b}))$; hypotheses
outside the rejection set keep $e=0$. Step~2 averages the per-split
e-values into one
number per hypothesis; because a hypothesis is in fold $1$ of split $s$
\emph{or} fold $2$, but not both, the inner sum has at most one nonzero
term per split. Step~3 applies e-BH at the \emph{global} level $\alpha$,
without any Bonferroni penalty across folds or across splits.

\subsection{Finite-sample FDR control}
\label{sec:ecfga-fdr}

\begin{theorem}[FDR control for e-CFGA]\label{thm:ecfga}
Under Assumption~\ref{ass:model}, with the oracle null mass
$\bar V(M)=m^s_{b'}p_M$, Algorithm~\ref{alg:ecfga} satisfies
$\FDR\le\alpha$ for every $m\ge1$, every $S\ge1$, every
$\gamma\in(0,\alpha]$ and every $\varepsilon>0$. With the inflated
plug-in of Section~\ref{sec:augmented} run at confidence $\eta/S$ per
split, $\FDR\le\alpha+\eta$. The proof is in Appendix~\ref{app:ecfga}.
\end{theorem}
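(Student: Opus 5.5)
The plan is to reduce Theorem~\ref{thm:ecfga} to the generalized e-BH guarantee of \citet{WangRamdas2022}. That result says: if nonnegative statistics $\bar e_k$ satisfy $\sum_{k\text{ null}}\E[\bar e_k]\le m$, then e-BH at level $\alpha$ has $\FDR\le\alpha$ under arbitrary dependence. I would reprove this in one line rather than cite it. On the rejection set $\mathcal R$ of size $k^\star$, every rejected $k$ has $\bar e_k\ge m/(\alpha k^\star)$, hence
\[
\frac{\1\{k\in\mathcal R\}}{k^\star\vee1}\le\frac{\alpha\,\bar e_k}{m}.
\]
Summing over nulls and taking expectations gives $\FDR\le(\alpha/m)\sum_{\text{null}}\E[\bar e_k]\le\alpha$. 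No independence or PRDS is needed, and it does not matter that $\bar e_k$ is a sum over folds rather than a per-hypothesis e-value with mean at most one.

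The core step is the fold-level budget. Fix a split $s$ and a fold $b$, and condition on the split variables $(Z_k)$ and on $\D_b^s$. Then the regions $\widetilde\Gamma_M$, the fold sizes and the oracle mass $\bar V(M)=m_{b'}^sp_M$ are fixed. The inference half is i.i.d.\ from the lifted mixture, so Lemma~\ref{lem:lifted} applies to the step-up index $\widehat M_{s,b}$ at level $\gamma$; it is a backward stopping time for the filtration \eqref{eq:lem-lifted-filt}. This yields \eqref{eq:efold-bound}. By the definition \eqref{eq:efold-evalue}, the nulls' e-values of fold $(s,b)$ sum to $m_{b'}^sV^{s,b}(\widehat M_{s,b})/(1+\bar V(\widehat M_{s,b}))$. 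Its conditional expectation is therefore at most $m_{b'}^s$.

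Summing the two folds of split $s$ gives at most $m_1^s+m_2^s=m$. Averaging over $s$ keeps this bound, so $\sum_{\text{null}}\E[\bar e_k]\le m$ by linearity, whatever the dependence across splits and folds. Some bookkeeping is needed here. The split is drawn independently of the data, so conditioning on it leaves Assumption~\ref{ass:model} intact within each fold. The level $\gamma$ plays no role in the bound, because \eqref{eq:efold-bound} holds for any level; this is why every $\gamma\in(0,\alpha]$ is allowed. Empty folds follow the conventions in the footnote to Theorem~\ref{thm:cfga}.

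For the inflated plug-in, I would define for each split the Hoeffding event $E_s=\{\widehat\pi_{b,s}^{(i),+}\ge\pi^{(i)}\ \forall i,b\}$, run at confidence $\eta/S$, and let $E=\bigcap_sE_s$. By a union bound, $\PP(E^c)\le\eta$. $E_s$ is $\sigma(\D_b^s)$-measurable for the relevant fold, since it uses only the ranking-half tail, as in (R1). On $E$ the plug-in mass dominates the oracle mass pathwise. The key point is that $V/(1+\bar V)$ is then bounded by the corresponding expression with the oracle mass at the same index. The argument of Theorem~\ref{thm:aug} shows the lemma still gives $\E[\1_{E_s}V/(1+\bar V)\mid\D_b^s]\le1$. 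The plug-in stopping time differs from the oracle one, but Lemma~\ref{lem:lifted} applies to any backward stopping time once the numerator dominates.

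Then $\FDR\le\E[\FDP\,\1_E]+\PP(E^c)$. For the first term I would repeat the e-BH computation with $\bar e_k\1_E$ in place of $\bar e_k$: the pointwise inequality still holds on $E$, and it is trivial off $E$. Together these give $\alpha+\eta$. The main obstacle is this last step. Conditioning on the global event $E$ would destroy the conditional i.i.d.\ structure used by Lemma~\ref{lem:lifted}. The fix is never to condition on $E$, and instead multiply by the indicator of the fold-local event $E_s\supseteq E$, which is $\D_b^s$-measurable. Then $\1_E\le\1_{E_s}$ lets each fold's term be bounded separately.
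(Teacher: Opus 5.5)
Your proposal is correct and follows the paper's proof essentially step for step. It uses the fold-level null e-mass bound $m^s_{b'}$ from Lemma~\ref{lem:lifted} applied to the level-$\gamma$ step-up index, linearity over folds and splits, and the pointwise e-BH inequality $\FDP\le(\alpha/m)\sum_{k\text{ null}}\bar e_k$. For the inflated tier it uses domination by the oracle-mass e-values at the same stopping index, off an event of probability at most $\eta$.

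One harmless slip: $E_s$ as you define it (over both folds) is not $\sigma(\D_b^s)$-measurable, because it involves the tail of fold $b$'s inference half. No measurability is needed, however, because $\1_{E_s}V/(1+\bar V)\le U_{\widehat M_{s,b}\vee1}$ holds pointwise and the lemma bounds the conditional expectation of the right-hand side.
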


The validity statement is uniform in $\gamma$: any choice
$\gamma\in(0,\alpha]$ produces a valid procedure, regardless of $S$.
The choice of $\gamma$ governs only the algorithm's power, through the
feasibility of the e-BH cut. Likewise, each fold may choose its bandwidth
adaptively by sub-split cross-validation on its own ranking half
(Algorithm~\ref{alg:adaptive-eps}): the choice is
$\sigma$(ranking-half)-measurable, so the fold-level bound
\eqref{eq:efold-bound} --- and with it Theorem~\ref{thm:ecfga} --- is
unaffected. This per-fold adaptive-$\varepsilon$ configuration is the one
used in the experiments of Section~\ref{sec:sim}.

\subsection{Choosing the fold level $\gamma$}
\label{sec:ecfga-gamma}

The e-BH cut needs $k^\star\bar e_{(k^\star)}\ge m/\alpha$ for at least
one $k^\star\ge 1$; if no $k$ satisfies this, e-CFGA makes no
rejection. One boundary calculation shows that $\gamma=\alpha$ leaves no
slack in the \emph{unanimous} regime. Consider a stylized setting in
which each fold stops with $R$ rejections on its inference half (so
$\bar V(\widehat M_{s,b})\approx\gamma R$ from the stopping rule) and a
hypothesis is rejected by its inference fold in every split. Its average
e-value is $\bar e\approx m^s_{b'}/(1+\gamma R)\approx m/(2\gamma R)$,
and taking $k^\star$ equal to the $2R$ such hypotheses across the two
folds gives $k^\star\bar e_{(k^\star)}\approx m/\gamma$: the e-BH
condition $m/\gamma\ge m/\alpha$ holds \emph{exactly at} $\gamma=\alpha$,
with nothing to spare. This calculation does not by itself imply that
$\gamma=\alpha$ must fail under non-unanimous profiles: averaging can
also \emph{expand} the support, and a profile whose splits reject
disjoint sets doubles $k^\star$ while halving $\bar e$, exactly
preserving $k^\star\bar e_{(k^\star)}$. Whether the cut fires at
$\gamma=\alpha$ therefore depends on the joint rejection profile across
splits; in our configurations it collapsed to zero rejections, and we do
not attempt to model the required slack.
$\gamma=0.75\alpha$ is an \emph{empirical default}: across the
experiments of Section~\ref{sec:sim} it kept FDR at or below nominal and
made rejections in all but the hardest heavy-tailed small-hub
configuration (Experiment~5, where the inflated tier returns no
rejections at $m_{\mathrm{hub}}=500$). Smaller $\gamma$ widens the safety
margin at a power cost on each individual fold.

\subsection{Derandomization}

Each Bernoulli$(\tfrac12)$ split produces a different fold-level
realization $(\widehat M_{s,b}, e_k^{(s,b)})$; averaging over $S$
independent splits reduces the seed-dependence of $\bar e_k$ at a
$1/\sqrt{S}$ rate. In our experiments $S=10$ already stabilizes the
rejection set to within Monte-Carlo error, and further increases in $S$
continue to shrink the residual seed-dependence at the $1/\sqrt S$ rate. By contrast, Bonferroni-2 CFGA's
rejection set materially differs across split draws, especially in the
small-hub regime where the per-fold $\widehat M^{(b)}$ is sensitive to
which hypotheses happen to land in each half.

\subsection{Per-step communication}
\label{sec:ecfga-comm}

The $S$ splits are independent, so each runs one CFGA-style
uplink--downlink round at the fold level $\gamma$ using the tuple/index
protocol of Section~\ref{sec:method}. Per split the cost is
$O(\sqrt m\log m)$ bits, so the $S$ splits together cost
$O(S\sqrt m\log m)$ bits in $S$ rounds --- or in one round if the
splits are pipelined in parallel.

\paragraph{Can the center compute the e-values on its own?}
Partly. Each per-split downlink can broadcast the two fold-level
scalars
\begin{equation*}
\eta_{s,b}=\frac{m^s_{b'}}{1+\bar V(\widehat M_{s,b})},\qquad b\in\{1,2\} ,
\end{equation*}
alongside the accepted-interval indices, so both nodes and the center
know these scalars. But the per-hypothesis e-value~\eqref{eq:efold-evalue}
also depends on the identity of the fold each hypothesis lands in and
on whether that hypothesis's p-value falls in the fold's rejection
region --- data known only to the hosting node. The center therefore
sees per-interval counts, not per-hypothesis rejection bits, and cannot
reconstruct $\bar e_k$ without one more message.

\paragraph{One extra round closes the loop.}
Each node compiles the per-hypothesis average $\bar e_k$ locally (it
knows its own p-values, its own fold assignments across the $S$
splits, and the broadcast $\eta_{s,b}$), and sends the nonzero
$\bar e_k$'s to the center in a single synchronous round. The center
pools the values, sorts, and applies e-BH at level $\alpha$. The
hypotheses with $\bar e_k>0$ are exactly the p-values falling in some
fold's rejected region, so their number is $\bar R\le m$, in practice of
the order of the rejection count; the extra round transmits
$O(\bar R\log m)$ bits. Total budget:
$O\bigl((S\sqrt m+\bar R)\log m\bigr)$ bits in $O(S)$ rounds (or in $2$
rounds pipelined), which matches the split rounds' order whenever
$\bar R=O(S\sqrt m)$ and is at most $O(m\log m)$ in the worst case.

\paragraph{Effect of quantization.}
Sending each $\bar e_k$ at full floating-point precision would waste
bits. The correct discipline is to \emph{round downward}: quantize each
$\bar e_k$ to a grid of step $2^{-c}\eta_{\max}$ (with
$\eta_{\max}=\max_{s,b}\eta_{s,b}$) and set $\lfloor\bar e_k\rfloor$
equal to the largest grid point $\le\bar e_k$. Because
$\lfloor\bar e_k\rfloor\le\bar e_k$, the null e-mass total is only
shrunk by rounding: $\sum_{\rm null}\E\lfloor\bar e_k\rfloor\le\sum_{\rm null}\E\bar e_k\le m$,
so e-BH applied to $\lfloor\bar e_k\rfloor$ at level $\alpha$ still
controls $\FDR\le\alpha$ (Theorem~\ref{thm:ecfga} extends verbatim).
Upward rounding would inflate the null mass and break validity. The
only cost of downward rounding is a slightly stricter e-BH cut --- at
most $2^{-c}$ relative shift in the threshold decision --- which is
negligible with $c=\log_2 m$, i.e.\ $O(\log m)$ bits per e-value; this
is what the $O(\log m)$-bit accounting above already assumes.

\subsection{Empirical behaviour}
The inflated tier of Theorem~\ref{thm:ecfga} is reported alongside all
other methods in every experiment of Section~\ref{sec:sim}; the brief
summary is that it dominates Bonferroni-2 Inflated CFGA at almost every
operating point, most visibly at small hubs, where halving the target
level was the dominant cost.

\section{BONuS-GA: decoupling by masking instead of splitting}
\label{sec:bonus}

CFGA and Inflated CFGA both buy the independence of selection from inference by
\emph{splitting} the data: the inference half pays for the guarantee by
contributing only $m/2$ p-values to the rejection count. We now describe a
variant that achieves the same decoupling by \emph{masking}, following the
bag-of-null-statistics (BONuS) construction of \citet{Yang2021BONuS}, and uses
all $m$ p-values for both selection and inference. The idea, due to
\citet{BarberCandes2015} and adapted to multiple testing by
\citet{Yang2021BONuS}, is to generate a bag of \emph{synthetic} null
statistics --- ``counting knockoffs'' --- known to be null by construction,
hide their real-vs-synthetic identities from the procedure during region
selection, and then use the synthetic counts as a calibrator of the unknown
null count inside the selected region: $\widetilde N(\Gamma)$ proxies the
unobserved $V(\Gamma)$. The exactly-uniform null (Assumption~\ref{ass:model})
makes this the most favourable possible setting for BONuS: the null
distribution is known in closed form, so the ``bag'' is \emph{generated}
rather than estimated.

\paragraph{Three tiers, two with proofs.} We present BONuS-GA in three
tiers. The \emph{oracle} tier uses per-node bag-size weights equal to
the true null rates $\pi^{(i)}=q^{(i)}r_0^{(i)}$; for it we give a
finite-sample FDR proof (Theorem~\ref{thm:bonus}). The
\emph{implementable global-ratio} tier replaces the unknown $r_0^{(i)}$ by
Storey's plug-in $\widehat r_0^{(i)}$; it does \emph{not} inherit a
finite-sample proof from Theorem~\ref{thm:bonus} (see
Remark~\ref{rem:oracle-only}) and is the highest-power practical variant
in the simulations of Section~\ref{sec:sim}. The \emph{per-node budgeted}
tier (Section~\ref{sec:bonus-pernode}) replaces the global tail ratio by
per-node ratios under per-node budget constraints; it requires no oracle
input at all and carries a finite-sample proof
(Theorem~\ref{thm:bonus-pernode}), at a power cost that is negligible at
moderate-to-large hubs and visible at small ones. A proof for the
global-ratio tier itself remains open.

\subsection{The algorithm}

\begin{algo}[BONuS-GA]\label{alg:bonus}
Fix $\varepsilon>0$, target $\alpha\in(0,1)$, Storey threshold $\lambda\in(0,1)$,
and a per-node weighting rule (oracle or implementable;
Section~\ref{sec:bonus-tiers}).
\begin{enumerate}[leftmargin=*]
\item \textbf{Per-node weight $w^{(i)}>0$.}
\begin{itemize}
\item \emph{Oracle (Theorem~\ref{thm:bonus}):}
$w^{(i)}=\pi^{(i)}=q^{(i)}r_0^{(i)}$, using the true per-node null rates.
In this tier the grid width of Step~3 is likewise set from the oracle
quantities, $\widehat L^{(i)}=\varepsilon/(q^{(i)}r_0^{(i)})$, so the
candidate intervals are deterministic and the basic masking
$\sigma$-field \eqref{eq:F-mask-def} below suffices.
\item \emph{Implementable (recommended).} The synthetic bag in Step~2 is
drawn with allocation weights
$w^{(i)}=m^{(i)}\,\widehat r_{0,\mathrm{init}}^{(i)}$, where
$\widehat r_{0,\mathrm{init}}^{(i)}$ is the real-only Storey estimator from
\S\ref{sec:greedy-intro}; once the bag is drawn, the \emph{pooled} Storey
estimator
\[
\widehat r_0^{(i)}=
\frac{N^{(i)}(A)+\widetilde N^{(i)}(A)}
{(1-\lambda)\,(m^{(i)}+\widetilde m^{(i)})},
\quad A=(\lambda,1],
\]
is used to set the grid width $\widehat L^{(i)}=\varepsilon/(\widehat q^{(i)}\widehat r_0^{(i)})$
in Step~3. The pooled form makes $\widehat L^{(i)}$ (and hence
$\{\widehat\Gamma_M\}$) $\F_{\mathrm{mask}}$-measurable via the extended
definition \eqref{eq:F-mask-def-ext} below, which is what step~(b) of the
proof of Theorem~\ref{thm:bonus} requires. The two-stage construction
breaks the apparent circularity between bag allocation and grid spacing.
\end{itemize}

\item \textbf{Synthetic bag of nulls.} Fix a synthetic budget
$\tilde m\in\N$ (typically $\tilde m=c\,m$ for some scale $c>0$). Draw
$\tilde m$ i.i.d.\ synthetic null lifted hypotheses
$\widetilde X_\ell^{(\mathrm{syn})}=(\widetilde I_\ell,\widetilde P_\ell)$,
$\ell=1,\dots,\tilde m$, with $\widetilde I_\ell$ drawn from the categorical
distribution
$\PP(\widetilde I_\ell=i)=w^{(i)}/\sum_j w^{(j)}$
(\emph{i.i.d.\ multinomial labels}) and $\widetilde P_\ell$ an independent
$\mathrm{Uniform}[0,1]$. The realized per-node count
$\tilde m^{(i)}:=\#\{\ell:\widetilde I_\ell=i\}$ is therefore
multinomial$(\tilde m,\{w^{(i)}/\sum_j w^{(j)}\})$ rather than deterministic;
this is what makes the synthetic bag exchangeable with the real-null lift in
the proof of Theorem~\ref{thm:bonus}.

\item \textbf{Masked greedy.} Pool the real and synthetic p-values at each node
and rank intervals on the Storey grid $[0,\lambda]$ by their \emph{pooled}
counts, producing the nested family
$\widehat\Gamma_1\subset\cdots\subset\widehat\Gamma_{M_{\max}}$ measurable with
respect to the masking $\sigma$-field $\mathcal{F}_{\mathrm{mask}}$ (pooled
counts do not reveal real-vs-synthetic identities).

\item \textbf{Counting-knockoff FDP.} With correction set $A=(\lambda,1]$ and
the Storey-BONuS estimator
\begin{equation}\label{eq:bonus-fdp}
\widehat\FDP(M)=\frac{N(A)+1}{\widetilde N(A)}\cdot
\frac{\widetilde N(\widehat\Gamma_M)+1}{R(M)\vee 1},
\end{equation}
where $N(\cdot)=\sum_iN^{(i)}(\cdot)$, $\widetilde
N(\cdot)=\sum_i\widetilde N^{(i)}(\cdot)$, and $R(M)=N(\widehat\Gamma_M)$;
if $\widetilde N(A)=0$ the estimator is $+\infty$ by convention and the
procedure makes \emph{no} rejections (substituting a positive denominator
here can severely violate FDR when the bag is tiny).

\item \textbf{Reject.} Set $\widehat M=\max\{M:\widehat\FDP(M)\le\alpha\}$
($\widehat M=0$ if empty) and reject the real p-values lying in
$\widehat\Gamma_{\widehat M}$.
\end{enumerate}
\end{algo}

\paragraph{Reading the algorithm.} Step~1 supplies the per-node weights from
which the synthetic bag is sized in Step~2; the two weight choices (oracle
and implementable) produce the two procedure variants whose FDR scope is
discussed in Section~\ref{sec:bonus-tiers}.

\emph{What the masking $\sigma$-field $\F_{\mathrm{mask}}$ contains.} At
runtime each node knows which of its p-values are real and which are
synthetic (it generated the latter), so nothing is literally hidden at the
node level. The masking framework of \citet{Yang2021BONuS} is a
\emph{proof-side restriction} on how the algorithm is allowed to use that
information: the family $\{\widehat\Gamma_M\}$ must be constructed using only
the pooled per-interval counts
$N^{(i)}(\widehat I_j^{(i)}) + \widetilde N^{(i)}(\widehat I_j^{(i)})$,
without exploiting the real-vs-synthetic split of any pooled count inside
the candidate region. Formally,
\begin{equation}\label{eq:F-mask-def}
\F_{\mathrm{mask}}:=\sigma\!\Bigl(\bigl\{N^{(i)}(\widehat I_j^{(i)})+
\widetilde N^{(i)}(\widehat I_j^{(i)}):(i,j)\in\Aset\bigr\}\Bigr) .
\end{equation}
For the implementable variant we work with a slight enrichment that also
exposes per-node pooled tail counts on $A=(\lambda,1]$:
\begin{equation}\label{eq:F-mask-def-ext}
\F_{\mathrm{mask}}^{\mathrm{ext}}:=\F_{\mathrm{mask}}\vee
\sigma\bigl(\{N^{(i)}(A)+\widetilde N^{(i)}(A):1\le i\le N\}\bigr).
\end{equation}
The pooled counts are by construction $\F_{\mathrm{mask}}^{\mathrm{ext}}$-measurable;
the per-interval splits into $N^{(i)}(\widehat I_j^{(i)})$ and
$\widetilde N^{(i)}(\widehat I_j^{(i)})$ inside the candidate region are
not, which is what the masking argument protects. The FDR proof then unmasks
identities one at a time along a backward filtration and bounds
$\E[V/R\vee 1]$ by exchangeability of the masked nulls.

Step~3 obeys this restriction: it ranks intervals by their pooled counts
only. Step~4 is the standard Storey-BONuS estimator from
\citet{Yang2021BONuS}: the synthetic count $\widetilde N(\widehat\Gamma_M)$
is a counting-knockoff proxy for the unknown false-rejection count, the
global tail ratio $N(A)/\widetilde N(A)$ supplies the null-to-synthetic
scaling, and the two ``$+1$''s are the standard SeqStep+ offsets that
guard against the empty regime. Step~5 is the step-up cut. The stopping
rule of Step~4--5 is allowed to use the runtime real-vs-synthetic split
because by that point the family $\{\widehat\Gamma_M\}$ is already fixed by
Step~3 and the backward-martingale argument handles the unmasking.

\paragraph{Per-step communication.} Step~1 (weight setup) and Step~2
(synthetic-bag generation) are entirely \emph{local}: each node draws its
$\widetilde m^{(i)}$ uniforms from its own randomness, and the bag never
leaves the node. Communication then proceeds in
two phases. We keep the masking-measurable ranking architecture of
\citet{Yang2021BONuS}: only quantities that are $\F_{\rm mask}$-measurable
are allowed to drive the construction of the nested family
$\{\widehat\Gamma_M\}$, and any use of the unmasked real-vs-synthetic split
is confined to the FDP-check (stopping) step, which is allowed to use the
runtime data.

\emph{Initialization (one-shot).} Each node transmits its two tail counts
$N^{(i)}(A)$ and $\widetilde N^{(i)}(A)$ on $A=(\lambda,1]$ to the center
($O(\log m)$ bits per node). The center forms the global tail ratio
\begin{equation}\label{eq:rho-broadcast}
\rho\;:=\;\frac{N(A)+1}{\widetilde N(A)}
\end{equation}
and broadcasts $\rho$ (quantized to $O(\log m)$ bits) to all nodes. The
ratio $\rho$ does \emph{not} enter the ranking; it is used only inside the
stopping rule below.

\emph{Per round.} The greedy admits one interval per round, ranked by the
raw pooled count $N^{(i)}(\widehat I_j)+\widetilde N^{(i)}(\widehat I_j)$,
which is $\F_{\rm mask}$-measurable and so guarantees that
$\{\widehat\Gamma_M\}$ is a valid interactive shrinking sequence in the
sense of \citet{Yang2021BONuS}. Each node maintains its own pooled-count
list and transmits the top remaining value when polled --- one integer in
$[0,(1+c)m^{(i)}]$, costing $\log_2 m+\log_2(1+c)+O(1)$ bits. The host of
the admitted interval then sends the split pair
$\bigl(N^{(i)}(\widehat I_{j_\star}),\widetilde N^{(i)}(\widehat
I_{j_\star})\bigr)$ for the FDP update; the center can compress the
synthetic component as $\lceil\rho\,\widetilde N^{(i)}(\widehat
I_{j_\star})\rceil$ before storage (now on the $\Theta(\varepsilon m/\pi_0)$
scale and $O(\log m)$ bits), so the running sum
$\rho\widetilde N(\widehat\Gamma_M)$ never exceeds the magnitude of the
running $R(M)$. The stopping check $\widehat\FDP(M)\le\alpha$ is performed
in the rescaled form $\rho\widetilde N(\widehat\Gamma_M)+\rho\le\alpha R(M)$.
Step~5 is local at the center. Total cost is $O(\log m)+\log_2(1+c)$ bits
per round and $O(\sqrt m\,(\log m+\log c))$ bits over the $O(\sqrt m)$
rounds at $\varepsilon\asymp m^{-1/2}$. As with CFGA, the interactive
polling is optional: because the pooled ranking is a \emph{static}
function of the per-interval pooled counts, each node can equally upload
its full batch of tuples
$(j,\,N^{(i)}(\widehat I_j)+\widetilde N^{(i)}(\widehat I_j),\,
N^{(i)}(\widehat I_j),\,\widetilde N^{(i)}(\widehat I_j))$ in one
synchronous round, after which ranking, FDP scan, and step-up cut are
all local at the center and the downlink returns accepted interval
indices --- the same $O(\sqrt m\log m)$ total in a single
uplink--downlink round.

\paragraph{Cost of enlarging the bag.} The scale $c$ in Step~2 multiplies
the local bag size by a constant: $\widetilde m=c\,m$. The pooled-count
integer transmitted in each ranking step now ranges in $[0,(1+c)m^{(i)}]$,
so each round pays $\log_2(1+c)$ extra bits over the baseline; for $c=4$
this is two bits, lost in the $O(\log m)$ constant. The split-pair
transmitted for the FDP update is post-compressed with $\rho$ on the
center side and remains $O(\log m)$ bits independent of $c$. Synthetic
p-values are generated and stored locally, so for any bounded $c$ the total
communication stays at the original $O(\sqrt m\log m)$. Local computation
and memory at each node scale as $\Theta(c\,m^{(i)})$ --- one uniform draw
and one interval-membership lookup per synthetic point --- so $c$ is a knob
the operator can turn locally to reduce the global-tail-ratio noise
diagnosed in Section~\ref{sec:sim-hubsize}, at $c\times$ local compute and
at most $\log_2(1+c)$ extra bits per transmitted ranking integer.

\subsection{Finite-sample FDR control --- oracle tier}
Exactness asks the synthetic nulls to be exchangeable with the real nulls in the
lifted space. A real null lands at node $i$ with probability $\pi^{(i)}/\pi_0$,
where $\pi_0=\sum_j\pi^{(j)}$; matching this is the \emph{oracle allocation}.

\begin{theorem}[FDR control for oracle BONuS-GA]\label{thm:bonus}
Under Assumption~\ref{ass:model}, suppose the synthetic bag is drawn from the
true lifted null law---each synthetic null is placed at node $i$ with
probability $\pi^{(i)}/\pi_0$ and given a $\mathrm{Uniform}[0,1]$ p-value. Then
for every $m\ge1$, every $\tilde m\ge1$ and every $\lambda\in(0,1)$,
BONuS-GA satisfies $\FDR\le\alpha$, with no sample splitting. The proof is in
Appendix~\ref{app:bonus}.
\end{theorem}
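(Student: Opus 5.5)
The plan is to reduce Theorem~\ref{thm:bonus} to the BONuS argument of \citet{Yang2021BONuS} carried out in the lifted space $\{1,\dots,N\}\times[0,1]$. Let $\mathcal N_0$ be the index set of real nulls, with $m_0=|\mathcal N_0|$, and let $\tilde m$ be the number of synthetic nulls.

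\emph{Step 1: exchangeability.} Under Assumption~\ref{ass:model}, conditionally on the null/non-null labels, each real null lifted point $(I_k,P_k)$ has law $\sum_i(\pi^{(i)}/\pi_0)\,\delta_i\otimes\mathrm{Unif}[0,1]$. This is exactly the oracle law of the synthetic points. Hence the pooled null collection of $m_0+\tilde m$ points is i.i.d. from one law, independent of the non-nulls. In the oracle tier the grid is deterministic, so conditionally on the unordered multiset of pooled null points (together with the non-nulls) the real/synthetic labelling is a uniformly random choice of which $m_0$ of the $m_0+\tilde m$ points are real.

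\emph{Step 2: a backward filtration.} The nested family $\{\widehat\Gamma_M\}$ depends only on pooled interval counts, so it is $\F_{\mathrm{mask}}$-measurable, and it is also a function of the unlabelled multiset. The tail region $A=(\lambda,1]$ is disjoint from every $\widehat\Gamma_M\subseteq[0,\lambda]$. Write $V(M)$ for the number of real nulls in $\widehat\Gamma_M$ and $\widetilde N(M)=\widetilde N(\widehat\Gamma_M)$. Let $\F_M$ be generated by $\F_{\mathrm{mask}}$, the unlabelled multiset, the pooled null count in $A$, and the pairs $(V(M'),\widetilde N(M'))$ for $M'\ge M$. Revealing labels from $M_{\max}$ down to $M$ makes $\widehat M$ a backward stopping time, because $\widehat\FDP(M)$ is $\F_M$-measurable. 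Here $R(M)$ equals $V(M)$ plus a non-null count fixed by $\F_{\mathrm{mask}}$ and the non-null data, and $N(A)$ and $\widetilde N(A)$ must be handled as below. Given $\F_{M+1}$, the labels of the null points in $\widehat\Gamma_M$ are a uniform draw without replacement from the points in $\widehat\Gamma_{M+1}$. Therefore $V(M)/(1+\widetilde N(M))$ is a backward supermartingale, by the standard hypergeometric computation from SeqStep$+$.

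\emph{Step 3: FDP bound and tail factor.} On $\{\widehat M=M\}$ with $R\ge1$, the stopping rule gives $1/R\le\alpha\,\widetilde N(A)/\bigl((N(A)+1)(\widetilde N(M)+1)\bigr)$. Hence
\[
\FDP\le \alpha\,\frac{\widetilde N(A)}{N(A)+1}\cdot\frac{V(\widehat M)}{1+\widetilde N(\widehat M)} .
\]
Since $N(A)\ge V_A:=\#\{\text{real nulls in }A\}$, the tail factor is at most $\widetilde N(A)/(1+V_A)$. I will condition on the pooled null count $T$ in $A$ and the pooled null count $H$ in $[0,\lambda]$, together with the multiset. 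Given these, the splits of the labels between the tail and the head are coupled only through the fixed total $m_0$. I then bound
\[
\E\!\left[\frac{\widetilde N(A)}{1+V_A}\cdot\frac{V(\widehat M)}{1+\widetilde N(\widehat M)}\right]
\]
in two stages. First, condition on the head label count $V_H$. The supermartingale together with optional stopping gives that the second factor has conditional expectation at most $V_H/(1+\tilde m-\widetilde N(A)-\dots)$, that is, the ratio of real to synthetic counts in the head with a $+1$. Second, the labels across head and tail form a hypergeometric split. The product of ratios then telescopes to at most $1$, using the identity $\E[\tilde K/(1+K)]\le$ the ratio $\tilde m/m_0$, which exactly cancels the head ratio.

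\emph{Main obstacle.} The hardest part is making the head and tail factors combine cleanly, since $\widetilde N(A)$ and $V_H$ are negatively dependent through the fixed totals. My first attempt will be to treat the whole lifted sample, with $A$ appended as the last region revealed, as one BONuS shrinking sequence, so that the Storey-BONuS bound of \citet{Yang2021BONuS} applies verbatim. If this fails, the fallback is to condition on $V_H$ and compute the hypergeometric expectation explicitly.
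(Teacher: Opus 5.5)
Your outline has the same architecture as the paper's proof: one exchangeable lifted pool, $\FDP\le\alpha\,Q$, a backward supermartingale for $V/(\widetilde N+1)$ on the head, then a hypergeometric anchor against the tail calibrator. The gap is the anchor, which is the step that actually closes the bound. It is not proved, and the mechanism you give for it cannot work.

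After conditioning on the multiset and on $V_H$, both factors are functions of the single hypergeometric count $V_A$:
\[
X=\frac{\widetilde N(A)}{1+V_A}=\frac{T-V_A}{1+V_A},\qquad
Y=\frac{V_H}{1+\widetilde N_H}=\frac{m_0-V_A}{1+\tilde m-T+V_A}.
\]
Both are decreasing in $V_A$, hence positively correlated, so $\E[XY]\ge\E[X]\,\E[Y]$. A factorized argument therefore only yields a lower bound. No marginal bound such as $\E[\tilde K/(1+K)]\le\tilde m/m_0$ can be ``cancelled'' against $Y$. Moreover $Y$ is random and not bounded by $m_0/\tilde m$; for instance $Y=V_H$ whenever no synthetic point falls in the head.

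What is missing is the exact joint computation, which is the paper's two-cell urn identity (Lemma~\ref{lem:urn}). With $p_v:=\PP(V_A=v)$ the hypergeometric pmf,
\[
XY\big|_{V_A=v}=\frac{(T-v)(m_0-v)}{(v+1)(\tilde m-T+v+1)}=\frac{p_{v+1}}{p_v},
\qquad\text{so}\qquad
\E[XY]=\sum_v p_v\,\frac{p_{v+1}}{p_v}=1-\PP(V_A=v_{\min})\le 1 ,
\]
where $v_{\min}$ is the bottom of the support. Your fallback points here but is not carried out.

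Your first option does not fit as stated. $A$ is disjoint from, not nested with, the $\widehat\Gamma_M$. Its factor also has the inverted form synthetic$/$(real$+1$), so it is not one more step of the same shrinking-sequence supermartingale. Citing \citet{Yang2021BONuS} ``verbatim'' would further require checking that their Storey-BONuS estimator and offsets coincide with \eqref{eq:bonus-fdp}.

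Two smaller repairs are also needed.

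\emph{The filtration.} In Step~2 your $\F_M$ contains only the pooled null count in $A$. But $\widehat\FDP(M)$ uses $N(A)$ and $\widetilde N(A)$ separately, so it is not $\F_M$-measurable and $\widehat M$ is not yet a backward stopping time. The paper puts the tail identity split $(V(A),\widetilde N(A))$ into the filtration from the top. This is equivalent to your later conditioning on $V_H$, since given $T$, $H$ and $m_0$, $V_H$ fixes the tail split.

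\emph{The top of the chain.} Write $U_M:=V(M)/(1+\widetilde N(M))$. Optional stopping only yields $\E[U_{\widehat M}\mid\F_{M_{\max}}]\le U_{M_{\max}}$. In general $\widehat\Gamma_{M_{\max}}$ is a strict subset of the head, because the grid leaves $(K^{(i)}\widehat L^{(i)},\lambda]$ uncovered. To reach $V_H/(1+\widetilde N_H)$ you have two options:
\begin{itemize}
\item append the full head as an extra top region and peel its remaining points one at a time; or
\item as the paper does, apply the urn identity directly to the two cells $\widehat\Gamma_{M_{\max}}$ and $A$, after conditioning on the masked totals in their union.
\end{itemize}
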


\begin{remark}[Scope: oracle weighting only]\label{rem:oracle-only}
Theorem~\ref{thm:bonus} controls FDR under the \emph{oracle} weighting
$w^{(i)}=\pi^{(i)}=q^{(i)}r_0^{(i)}$, which requires knowledge of the
unobserved per-node null proportions $r_0^{(i)}$. The implementable
allocation $w^{(i)}=m^{(i)}\widehat r_0^{(i)}$ used in
Section~\ref{sec:bonus-tiers} is a plug-in approximation and does
\emph{not} inherit a finite-sample bound from Theorem~\ref{thm:bonus};
it is in the same relationship to oracle BONuS-GA as Storey-CFGA is to
Theorem~\ref{thm:cfga}. The per-node \emph{budgeted} variant of
Section~\ref{sec:bonus-pernode} closes this gap by a different route ---
per-node calibration removes the allocation question entirely --- while a
proof for the global-ratio plug-in itself remains open; the global-ratio
version performs indistinguishably from oracle BONuS-GA in every
simulation in Section~\ref{sec:sim}.
\end{remark}

The argument is a counting-knockoff backward martingale in the spirit of
\citet{BarberCandes2015,Yang2021BONuS}, proved self-contained in
Appendix~\ref{app:bonus}: under the oracle allocation the real nulls and
the $\tilde m$ synthetic points are i.i.d.\ from one common lifted law,
so conditional on their \emph{locations} the real-vs-synthetic identity
assignment is a single uniform urn; the pooled-count greedy is
identity-blind; unmasking the urn backward along the nested family makes
$V(\widehat\Gamma_M)/(\widetilde N(\widehat\Gamma_M)+1)$ a backward
supermartingale (an exact martingale while the synthetic count is
positive), and a two-cell urn identity (Lemma~\ref{lem:urn}) anchors it
against the tail calibrator. The martingale plays the role of
Lemma~\ref{lem:lifted}, with the \emph{generated}
synthetic count replacing the \emph{predictable} null mass $m_{b'}p_M$---which
is precisely why no held-out half is required.

\subsection{Oracle and implementable weights}\label{sec:bonus-tiers}

\paragraph{Oracle.} With $w^{(i)}=\pi^{(i)}=q^{(i)}r_0^{(i)}$ in
Step~1 of Algorithm~\ref{alg:bonus}, the synthetic bag is drawn from the true
lifted null law and Theorem~\ref{thm:bonus} above gives $\FDR\le\alpha$ at
every finite $m$.

\paragraph{Implementable (recommended).} The oracle weights are
unknown in practice. Replacing them by the Storey plug-in
$\widehat\pi^{(i)}=(m^{(i)}/m)\,\widehat r_0^{(i)}$ produces the
implementable weight $w^{(i)}=m^{(i)}\,\widehat r_0^{(i)}$ in Step~1. This is
exactly Algorithm~\ref{alg:bonus} with plug-in estimates of the oracle
quantities --- in the same sense in which practical CFGA$+$Storey relates to
the oracle CFGA of Theorem~\ref{thm:cfga}: it inherits no finite-sample
guarantee on its own, but empirically (Section~\ref{sec:sim}) tracks oracle
BONuS-GA closely on every monitoring-network configuration we test. We
adopt it as the recommended implementable variant. A rigorous
finite-sample guarantee for this \emph{global plug-in ratio} variant
itself --- whether through a per-node calibration that removes the
allocation dependence (Remark~\ref{rem:bonus}) or through an inflation
argument that absorbs the Storey estimation error --- is an interesting
open problem; the budgeted tier of Theorem~\ref{thm:bonus-pernode} below
sidesteps it, obtaining a fully implementable guarantee through per-node
budgets rather than a global ratio.

\begin{remark}[Per-node calibration]\label{rem:bonus}
Replacing the single global ratio by per-node ratios
$N^{(i)}(A)/\widetilde N^{(i)}(A)$ in \eqref{eq:bonus-fdp} removes the allocation
dependence entirely: $\sum_i\frac{N^{(i)}(A)}{\widetilde
N^{(i)}(A)}\widetilde N^{(i)}(\widehat\Gamma_M)$ targets
$\sum_in_0^{(i)}\nu(\widehat\Gamma_M^{(i)})$ for \emph{any} bag, at the cost of
noisier estimates at the smallest nodes. This scalar-sum rule remains a
heuristic, but Section~\ref{sec:bonus-pernode} shows that its
\emph{budgeted} form---per-node SeqStep$+$ constraints with
masking-measurable budget shares---admits a finite-sample proof with
\emph{no} oracle knowledge. BONuS-GA is also \emph{randomized}
through the bag; averaging several bags or aggregating their $e$-values
\citet{RenBarber2024evalues} derandomize it; a finite-sample analysis of
the derandomized variants is left to future work.
\end{remark}

\subsection{Per-node budgeted BONuS-GA: an implementable finite-sample
guarantee}\label{sec:bonus-pernode}

Theorem~\ref{thm:bonus} needs the oracle allocation because the global
ratio $N(A)/\widetilde N(A)$ calibrates all nodes \emph{jointly}: validity
requires the synthetic bag to mirror the real-null law across nodes. The
per-node ratios of Remark~\ref{rem:bonus} calibrate each node by its own
tail, so cross-node exchangeability---and with it any knowledge of
$\pi^{(i)}$---becomes unnecessary: within node $i$, the real nulls and the
synthetic points are i.i.d.\ $\mathrm{Uniform}[0,1]$ \emph{whatever} the
bag sizes are. The following budgeted rule converts this observation into
a finite-sample guarantee.

\begin{algo}[Per-node budgeted BONuS-GA]\label{alg:bonus-budget}
Run Steps~2--3 of Algorithm~\ref{alg:bonus} with \emph{any}
data-independent bag allocation (default: the \emph{proportional}
allocation, drawing the bag as in Step~2 of Algorithm~\ref{alg:bonus}
with weights $w^{(i)}=m^{(i)}$ and total budget $\tilde m=\lceil
c\,m\rceil$, so that $\tilde m^{(i)}$ is
multinomial$(\tilde m,\{m^{(i)}/m\})$; no null-rate input is
required). Choose budget shares $\{\alpha_i\}_{i=1}^N$ with
$\sum_i\alpha_i\le\alpha$, measurable with respect to the masking
$\sigma$-field (default: proportional to each node's pooled excess count
over the uniform profile on its candidate intervals, with nodes holding a
negligible share removed from the candidate family). Reject
$\widehat\Gamma_{\widehat M}$ with the \emph{vector} step-up index
\begin{equation}\label{eq:bonus-budget}
\widehat M=\max\Bigl\{M:\;
\frac{N^{(i)}(A)+1}{\widetilde N^{(i)}(A)}
\bigl(\widetilde N^{(i)}(\widehat\Gamma_M^{(i)})+1\bigr)
\le\alpha_i\,(R(M)\vee1)\ \ \forall i\in\mathcal N(M)\Bigr\},
\end{equation}
where $\mathcal N(M)=\{i:\widehat\Gamma_M^{(i)}\neq\emptyset\}$ is the set
of active nodes and the left side is $+\infty$ when
$\widetilde N^{(i)}(A)=0$ (such a node can never be active).
\end{algo}

\begin{theorem}[FDR control for per-node budgeted BONuS-GA]\label{thm:bonus-pernode}
Under Assumption~\ref{ass:model}, Algorithm~\ref{alg:bonus-budget}
satisfies $\FDR\le\alpha$ for every $m\ge1$, every $\lambda\in(0,1)$,
every data-independent bag allocation, and every masking-measurable choice
of budget shares. No oracle null rates enter the procedure. The proof is
in Appendix~\ref{app:bonus}.
\end{theorem}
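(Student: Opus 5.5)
The plan is to decompose the false discovery proportion node by node and reduce each node's contribution to a single-node counting-knockoff martingale bound. Write $V=\sum_i V^{(i)}(\widehat\Gamma^{(i)}_{\widehat M})$. The step-up index \eqref{eq:bonus-budget} guarantees, for every active node $i\in\mathcal N(\widehat M)$, that $\rho^{(i)}(\widetilde N^{(i)}(\widehat\Gamma^{(i)}_{\widehat M})+1)\le\alpha_i(R(\widehat M)\vee1)$, where $\rho^{(i)}:=(N^{(i)}(A)+1)/\widetilde N^{(i)}(A)$. Inactive nodes have $\widehat\Gamma^{(i)}_{\widehat M}=\emptyset$ and so contribute no false discoveries. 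On the active set,
\[
\frac{V}{R\vee1}\;\le\;\sum_i \alpha_i\,\frac{V^{(i)}(\widehat\Gamma^{(i)}_{\widehat M})}{\rho^{(i)}\bigl(\widetilde N^{(i)}(\widehat\Gamma^{(i)}_{\widehat M})+1\bigr)} .
\]
It therefore suffices to show that each summand, with $\alpha_i$ held masking-measurable, has expectation at most $\E[\alpha_i]$. Then $\FDR\le\E\sum_i\alpha_i\le\alpha$.

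Fix a node $i$ and condition on the labels $\{m^{(i)}\}$ and on the bag sizes $\tilde m^{(i)}$, which are data-independent. This is the point where cross-node exchangeability is no longer needed. Within node $i$, the real nulls and the synthetic points are i.i.d.\ $\mathrm{Uniform}[0,1]$. Next, condition on the unordered set of locations of the pooled null-plus-synthetic points at node $i$, on all non-null p-values, and on every other node's data. Given this information, the identities (real null versus synthetic) among these points form a uniform urn with a fixed number of real nulls, $m_0^{(i)}$. The pooled counts, and hence $\F_{\mathrm{mask}}^{\mathrm{ext}}$, the nested family $\{\widehat\Gamma_M\}$, and the budget shares $\alpha_i$, are all functions of this conditioning information alone. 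So $\alpha_i$ factors out of the conditional expectation.

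The remaining task is to bound the conditional expectation of $V^{(i)}(\widehat\Gamma^{(i)}_{\widehat M})/(\rho^{(i)}(\widetilde N^{(i)}(\widehat\Gamma^{(i)}_{\widehat M})+1))$ by $1$. I would repeat the oracle argument of Theorem~\ref{thm:bonus}, restricted to node $i$. Define a backward filtration that reveals identities outside $\widehat\Gamma^{(i)}_M$ as $M$ decreases. Because the family is nested at node $i$, $\widehat M$ is a backward stopping time for it. Along this filtration, $V^{(i)}(\widehat\Gamma^{(i)}_M)/(\widetilde N^{(i)}(\widehat\Gamma^{(i)}_M)+1)$ is a backward supermartingale by the standard hypergeometric computation. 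Optional stopping then bounds its value at $\widehat M$ by its value at the largest region. Finally, the two-cell urn identity of Lemma~\ref{lem:urn} compares that starting value with the tail cell $A$, using the tail counts inside $\rho^{(i)}$. This gives $\E[\cdot]\le 1$, with the $+1$ offsets absorbing the empty case; if $\widetilde N^{(i)}(A)=0$, node $i$ is never active.

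The main obstacle is that $\widehat M$ is determined jointly by all nodes through the vector constraint and the shared $R(M)$. Consequently $\widehat M$ is not a stopping time of node $i$'s filtration taken alone. My first attempt would be a joint backward filtration that unmasks all nodes' identities simultaneously, in which each node's ratio is still a supermartingale because the urns are conditionally independent across nodes. A secondary difficulty is that $\rho^{(i)}$ is not masking-measurable, since it splits the pooled tail into real and synthetic counts. I expect to handle this exactly as in the anchor step of Lemma~\ref{lem:urn}, by merging $A$ into the initial cell of the urn rather than conditioning on the split.
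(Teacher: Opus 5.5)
Your proposal is correct and follows the paper's proof essentially step for step. The shared steps are the pointwise decomposition $\FDP\le\sum_i\alpha_iQ_i$ with $Q_i=V^{(i)}(\widehat\Gamma^{(i)}_{\widehat M})/\bigl(\rho^{(i)}(\widetilde N^{(i)}(\widehat\Gamma^{(i)}_{\widehat M})+1)\bigr)$, conditioning on an identity-blind $\sigma$-field so that $\alpha_i$ factors out, a per-node backward supermartingale stopped at $\widehat M$, and Lemma~\ref{lem:urn} on the two cells $\widehat\Gamma^{(i)}_{M_{\max}}$ and $A$ as the anchor (after $N^{(i)}(A)\ge V^{(i)}(A)$).

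The paper resolves your ``main obstacle'' exactly by your first attempt: a joint backward filtration containing every node's tail split from the start and every node's region splits for $M'\ge M$, so $\widehat M$ is a backward stopping time and $1/\rho^{(i)}$ is already revealed at $M_{\max}$. Your own conditioning on the other nodes' full data would in fact already make $\widehat M$ a stopping time of node $i$'s filtration.
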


The proof decomposes $\FDP\le\sum_i\alpha_i Q_i$ pointwise, where
$Q_i$ is node $i$'s ratio of true to predicted null count at the stopping
index; per-node exchangeability gives
$\E[Q_i\mid\text{masked data}]\le1$ via a per-node backward
supermartingale combined with a two-cell urn identity
(Lemma~\ref{lem:urn})---the bound must hold \emph{conditionally on the
pooled counts}, precisely because the budget shares are computed from
them---and the masking-measurable shares then sum the conditional bounds
to $\alpha$. The budget vector is the price of the proof: relative to the
scalar-sum rule of Remark~\ref{rem:bonus}, the active nodes cannot pool
their slack, and every active node carries its own SeqStep$+$ offsets. At
moderate-to-large hubs the offsets are negligible and the budgeted rule
essentially matches global BONuS-GA at \emph{lower} FDR, while at small
hubs the per-node $+1$'s bite; its empirical behaviour is reported
alongside all other methods in every experiment of Section~\ref{sec:sim},
and in the ablation of Table~\ref{tab:bonus-ablation}.
Theorem~\ref{thm:bonus-pernode} closes the
implementable-tier gap left open by Remark~\ref{rem:oracle-only}: a
split-free, fully implementable procedure with exact finite-sample FDR
control.

\begin{remark}[Fixed labels suffice]\label{rem:label-free}
Theorem~\ref{thm:bonus-pernode} does not use the random-label model. Its
proof conditions on $H_0$ and the non-null points, runs an urn argument
\emph{within} each node, and uses only independence \emph{across} nodes;
no step marginalizes over node labels. Consequently the budgeted tier
controls $\FDR\le\alpha$ under the weaker fixed-design assumption that
the node labels and null pattern are arbitrary (even adversarial) and the
null p-values are mutually independent $\mathrm{Uniform}[0,1]$,
independent of the non-null p-values. This contrasts with
Theorems~\ref{thm:cfga} and~\ref{thm:aug}, whose lifted-space argument
genuinely requires the random labels of Assumption~\ref{ass:model}: under
fixed labels the held-out thinning ratios become node-specific and the
single supermartingale of Lemma~\ref{lem:lifted} is no longer available
(a per-node budgeted CFGA, in the spirit of
Algorithm~\ref{alg:bonus-budget}, would restore a fixed-label guarantee
at the corresponding power cost).
\end{remark}

\begin{remark}[Split-free adaptive $\varepsilon$]\label{rem:bonus-eps}
The bandwidth $\varepsilon$ can be tuned without reintroducing a split. Because
the masked greedy is driven only by the \emph{pooled} p-values, any choice of
$\varepsilon$ that is a function of the pooled values---for instance the one
maximizing a standardized $\chi^2$ excess of the pooled counts over uniform on
$[0,\lambda]$---is measurable with respect to the masking $\sigma$-field
$\mathcal{F}_{\mathrm{mask}}$ and leaves
Theorem~\ref{thm:bonus} intact. Note that the rejection-maximizing sub-split
cross-validation of Section~\ref{sec:eps} is \emph{not}
$\mathcal{F}_{\mathrm{mask}}$-measurable, and hence not available here: a
rejection count separates real from synthetic points inside the candidate
region---exactly the masked information the knockoff protects---so the split is
what licensed that recipe for CFGA. Empirically the pooled-score rule helps
precisely where the fixed default
$\varepsilon=\alpha\,m^{-1/2}$ is too conservative: in a side run of the
Experiment~4 configuration at $m_{\mathrm{hub}}=250$ (not shown in the
figures) it lifts the implementable variant's power from $0.60$ to $0.77$,
but its empirical FDR rises to $\approx0.22$, marginally above the target
$\alpha=0.2$---a reminder that the implementable tier carries no
finite-sample guarantee and that data-driven tuning consumes whatever
conservatism slack the fixed default leaves. It is therefore a targeted
tool for the small-sample regime, to be used with a validity margin in
mind, rather than the universal default that adaptive $\varepsilon$ is for
cross-fit CFGA.
\end{remark}

\section{Choosing $\varepsilon$}
\label{sec:eps}

The greedy aggregation algorithm and our cross-fit wrapper both take an
input bandwidth $\varepsilon>0$ which controls the grid spacing
$L^{(i)}=\varepsilon/(q^{(i)}\widehat r_0^{(i)})$ and the candidate count
$\sum_i K^{(i)}=O(1/\varepsilon)$. The original paper uses the rate
$\varepsilon\propto m^{-1/2}$ in its simulations
\citep[Sec.~V-B]{PournaderiXiang2024}. This section
addresses the practical question of how a practitioner should pick $\varepsilon$
on a single finite sample, and gives a data-driven answer that preserves the
finite-sample FDR guarantees of the previous sections.

\subsection{Adaptive choice of $\varepsilon$ on $\D_1$}\label{sec:eps-adaptive}

We propose an adaptive procedure that selects $\varepsilon$ via sub-split
cross-validation on $\D_1$. The selected value is automatically
$\sigma(\D_1)$-measurable, so finite-sample FDR control is preserved.

\begin{algo}[Adaptive $\varepsilon$ via sub-split CV on $\D_1$]\label{alg:adaptive-eps}
Given $\D_1$, target $\alpha$, and a candidate grid
$\mathcal E=\{\varepsilon_1,\ldots,\varepsilon_T\}$:
\begin{enumerate}[leftmargin=*]
\item Sub-split $\D_1$ uniformly into $\D_{1a}\sqcup\D_{1b}$.
\item For each $\varepsilon_t\in\mathcal E$:
\begin{itemize}
\item Run greedy aggregation on $\D_{1a}$ at spacing $\varepsilon_t$ to
completion. Let $\{\widehat\Gamma_M^{(i,t)}\}$ be the resulting nested
rejection regions.
\item Apply the CFGA acceptance rule \eqref{eq:fdpA} on $\D_{1b}$ at
target $\alpha$. Record the resulting rejection count $R_{1b}^{(t)}$.
\end{itemize}
\item Output $\widehat\varepsilon=\arg\max_{t\in\{1,\ldots,T\}}R_{1b}^{(t)}$.
\end{enumerate}
The downstream CFGA (or Inflated CFGA) is then run at $\widehat\varepsilon$ on
the full $\D_1$ for selection and the original $\D_2$ for acceptance.
\end{algo}

\begin{proposition}[FDR with adaptive $\varepsilon$]
\label{prop:adaptive-FDR}
Run with $\widehat\varepsilon$ chosen by Algorithm~\ref{alg:adaptive-eps},
the oracle CFGA of Theorem~\ref{thm:cfga} satisfies $\FDR\le\alpha$ and
Inflated CFGA (Theorem~\ref{thm:aug}) satisfies $\FDR\le\alpha+\eta$, for every
$m\ge1$ and every candidate grid $\mathcal E$.
\end{proposition}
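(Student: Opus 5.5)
The plan is to reduce Proposition~\ref{prop:adaptive-FDR} to the fold-level argument behind Theorems~\ref{thm:cfga} and~\ref{thm:aug}. The key observation is that the proofs of those theorems never use the specific value of $\varepsilon$. They use only that, conditionally on the ranking half $\D_b$, the nested family $\{\widehat\Gamma_M^{(i,b)}\}$ is fixed, and that the inference half $\D_{b'}$ is an independent i.i.d.\ sample from the lifted law. Lemma~\ref{lem:lifted} then gives
\[
\E\!\left[\frac{V(\widehat M^{(b)})}{1+\bar V(\widehat M^{(b)})}\,\Big|\,\D_b\right]\le 1 .
\]
First I will check that Algorithm~\ref{alg:adaptive-eps} produces $\widehat\varepsilon$ as a function of $\D_b$ only, together with auxiliary randomness independent of the data. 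This auxiliary randomness is the sub-split $\D_{1a}\sqcup\D_{1b}$; for Inflated CFGA the Storey tail counts entering $\widehat L^{(i)}$ are also ranking-half quantities. I will enlarge the conditioning $\sigma$-field to $\mathcal H_b:=\sigma(\D_b,\text{sub-split})$. Under Assumption~\ref{ass:model} the inference half remains i.i.d.\ and independent of $\mathcal H_b$, because the sub-split is drawn independently and touches only $\D_b$. Given $\mathcal H_b$, the selected bandwidth $\widehat\varepsilon$, the grid $\widehat L^{(i)}$ and the nested regions are all deterministic.

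Second, I will rerun the fold argument conditionally on $\mathcal H_b$. The oracle null mass $\bar V_{\mathrm{or}}(M)=m_{b'}p_M$ with $p_M=\sum_i\pi^{(i)}\nu(\widehat\Gamma_M^{(i,b)})$ is $\mathcal H_b$-measurable and predictable in the backward filtration \eqref{eq:lem-lifted-filt}. Lemma~\ref{lem:lifted} then applies verbatim, since it is stated for an arbitrary frozen nested family and holds for every $\varepsilon>0$. The step-up index $\widehat M^{(b)}$ is a backward stopping time of that filtration. On $\{\widehat M^{(b)}\ge1\}$ the stopping rule gives $1+\bar V\le(\alpha/2)R$, so the fold FDP is at most $(\alpha/2)V/(1+\bar V)$. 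Taking expectations yields a fold FDR of at most $\alpha/2$. Integrating over $\mathcal H_b$ and applying the Bonferroni union across the two folds (FDP of a union is at most the sum of the per-fold ratios $V^{(b)}/(R^{(b)}\vee1)$, as in the proof of Theorem~\ref{thm:cfga}) gives $\FDR\le\alpha$.

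For Inflated CFGA I will use the same reduction, invoking requirements (R1) and (R2). The inflated estimate $\widehat\pi_b^{(i),+}$ is $\sigma(\D_b)$-measurable, hence $\mathcal H_b$-measurable, and the Hoeffding dominance event $E_b$ does not involve $\varepsilon$ at all. On $E_b$ the plug-in $\bar V(M)$ dominates the oracle mass pathwise for the data-chosen regions, so the conditional bound holds. Off $E_b$ the FDP is at most $1$. This reproduces the $\alpha+\eta$ accounting of Theorem~\ref{thm:aug} exactly.

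The main obstacle is a measurability subtlety. I must make sure that the choice of $\widehat\varepsilon$ is not allowed to depend on the roles being swapped. In fold $b$, the candidate-grid CV has to run on that fold's own ranking half, so that $\D_{b'}$ never influences the regions scored on it. Otherwise, if a single $\widehat\varepsilon$ chosen on $\D_1$ were reused in fold 2, whose inference half is $\D_1$, independence would break. I will therefore state the procedure with a per-fold $\widehat\varepsilon^{(b)}$, as the paper already does for e-CFGA. I will also verify that the $\max$ over $T$ candidates creates no winner's curse on the inference half, which holds because the maximization touches only $\D_{1b}\subseteq\D_b$.
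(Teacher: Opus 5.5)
Your proposal is correct and matches the paper's argument: $\widehat\varepsilon$ depends only on the fold's ranking half and on independent sub-split randomness, so the conditional proofs of Theorems~\ref{thm:cfga} and~\ref{thm:aug} (Lemma~\ref{lem:lifted}, the dominance event, the Bonferroni union) go through unchanged, and your per-fold choice $\widehat\varepsilon^{(b)}$ is exactly the paper's ``cross-fit version.'' Enlarging the conditioning $\sigma$-field to $\sigma(\D_b,\text{sub-split})$ is a slightly more careful statement of what the paper loosely calls $\sigma(\D_1)$-measurability.
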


\begin{proof}
$\widehat\varepsilon$ is a deterministic function of $\D_1$ and the sub-split
randomness, both independent of $\D_2$; hence $\widehat\varepsilon$ is
$\sigma(\D_1)$-measurable. The proofs of Theorems~\ref{thm:cfga} and
\ref{thm:aug} condition on $\D_1$ throughout and use no property of
$\varepsilon$ beyond $\sigma(\D_1)$-measurability and nestedness of the
regions, so both bounds carry over.
\end{proof}

\paragraph{Choice of grid.} A logarithmic grid spanning
$\varepsilon\in[\alpha\,m^{-1/2}/16,\;4\alpha\,m^{-1/2}]$ (powers of two
around the default) suffices in our experiments. The grid width does not
affect FDR validity; only computational cost is linear in $|\mathcal E|$.

\paragraph{Cross-fit version.} For CFGA (Algorithm~\ref{alg:cfga}),
apply Algorithm~\ref{alg:adaptive-eps} \emph{independently} in each of the
two folds (the inner CV uses the fold's ranking half). This yields two
adaptively chosen $\varepsilon^{(1)},\varepsilon^{(2)}$ that may differ.
The proof of Theorem~\ref{thm:cfga} (union bound on the two
per-fold FDPs) is unchanged.

\paragraph{Caveat under within-node dependence.}
Proposition~\ref{prop:adaptive-FDR} assumes within-node independence
(Assumption~\ref{ass:model}). The $\sigma(\D_1)$-measurability argument
protects the \emph{validity} of any data-driven $\varepsilon$, but it does
not protect against violations of Assumption~\ref{ass:model} itself. Under
strong within-node positive dependence the Storey null-rate estimate
becomes anti-conservative, and because Algorithm~\ref{alg:adaptive-eps}
selects the $\varepsilon$ that \emph{maximizes} the validation rejection
count, it preferentially picks the value that most fully exploits this
optimism. Experiment~3 (Section~\ref{sec:sim-rho}) makes this concrete: under
within-node AR(1) correlation the Storey$+$adaptive-$\varepsilon$ variant's
FDR climbs to $0.32$ at $\rho=0.8$, whereas the fixed-$\varepsilon$ Storey
variant---which does not tune $\varepsilon$ to the data---and the
\emph{Inflated} adaptive variant (FDR $\approx0.04$ across the whole range)
both stay controlled. Confining the correlation to the hubs removes the
runaway entirely (FDR $0.087$--$0.094$ across $\rho\le0.8$;
Section~\ref{sec:sim-rho}), consistent with this mechanism: it is the
dependent quiet-node mass, whose Storey and validation statistics the
adaptive rule trusts, that fuels the over-rejection. We therefore recommend a fixed
$\varepsilon$, or the Inflated adaptive variant, whenever strong
within-node dependence is suspected.

\subsection{Computational cost}

Algorithm~\ref{alg:adaptive-eps} runs greedy aggregation $T$ times per fold.
Each invocation costs $O(\sqrt m\log m)$ communication and $O(m)$ local
computation; the inner cross-validation step adds no communication beyond
the inner greedy runs. For CFGA this gives a $2T$ multiplicative
factor on the total cost relative to non-adaptive CFGA.
With $T=7$ in our experiments, this is a $14\times$ overhead, which
remains $O(\sqrt m\log m)$ asymptotically. The trade-off (Section~\ref{sec:sim},
Experiments~2 and~4) is a substantial power gain, especially at moderate $m$.

\section{Simulations}
\label{sec:sim}

\subsection{Setup}
\label{sec:sim-setup}

We use a \emph{monitoring-network} configuration designed to expose the
failure of decentralized per-node testing. The network has $N=302$ nodes:
two signal-rich \emph{hubs}, each with $m^{(i)}=1000$ p-values and non-null
proportion $r_1^{(i)}=0.3$, and $300$ \emph{quiet} pure-null nodes, each with
$m^{(i)}=20$ p-values and $r_1^{(i)}=0$. Under $H_1$ at a hub the Gaussian
mean is drawn from
$\mathrm{Unif}[\mu_{\mathrm{hub}}-0.5,\mu_{\mathrm{hub}}+0.5]$ with
$\mu_{\mathrm{hub}}=3$; p-values are one-sided. The target is $\alpha=0.2$,
the bandwidth is $\varepsilon=\alpha\,m^{-1/2}$ with $m=\sum_i m^{(i)}$, and
each point averages $100$ Monte Carlo trials. The motivation is practical:
many deployed networks consist of a few information-rich nodes embedded among
many quiet sensors, and the natural decentralized baseline---run BH at level
$\alpha$ at every node and union the rejections---silently loses global FDR
control as the quiet nodes accumulate.

We compare eleven procedures: the centralized \textbf{Pooled BH};
\textbf{Local BH ($\alpha$)} (each node runs BH at $\alpha$, union of
rejections) and \textbf{Local BH ($\alpha/N$)} (each node at $\alpha/N$, the
Bonferroni-valid decentralized control); \textbf{Original Greedy}
\citep{PournaderiXiang2024}; \textbf{CFGA$+$Storey} and its
adaptive-$\varepsilon$ variant; \textbf{Inflated CFGA$+$Storey}
(Section~\ref{sec:augmented}) and its adaptive-$\varepsilon$ variant;
\textbf{BONuS-GA} (Section~\ref{sec:bonus}), the split-free masking variant,
run in its implementable Storey form ($\tilde m^{(i)}\propto m^{(i)}\widehat
r_0^{(i)}$, with each node's effective Storey weight floored at $1/m^{(i)}$
in the allocation so that small-Storey nodes still receive a synthetic
share, total synthetic budget $\tilde m=m$, one bag per trial); and the two
\emph{provable implementable tiers}:
\textbf{e-CFGA} (Section~\ref{sec:ecfga}) run with the \emph{inflated}
plug-in of Theorem~\ref{thm:ecfga} ($S=10$ splits, fold level
$\gamma=0.75\alpha$, per-fold adaptive $\varepsilon$, $\eta=1/m$ at
per-split confidence $\eta/S$), and \textbf{Budgeted BONuS-GA}
(Algorithm~\ref{alg:bonus-budget}, Theorem~\ref{thm:bonus-pernode};
proportional bags $\tilde m^{(i)}=m^{(i)}$, pooled-excess budget shares).
We omit the non-Storey CFGA variants from the plots,
since Storey dominates them in power at no cost to validity. To keep the
panels readable we omit error bars and report Monte Carlo uncertainty
here instead: $2$SE on the FDR curves is at most $0.017$ for every
CFGA/greedy variant and at most $0.02$ for the provable overlays,
reaching ${\approx}0.08$ only for the two reference baselines at
near-zero-rejection points where the FDP is nearly binary (per-point SEs
ship with the code package); Table~\ref{tab:bonus-ablation} reports
paired $\pm2$SE. The full Python
source is provided as a supplementary code package: the method library
(\texttt{simulations.py}, \texttt{ecfga.py}, \texttt{bonus\_pernode.py});
the experiment drivers \texttt{sweep\_quiet\_nodes.py} (Experiment~1),
\texttt{mn\_sweep.py} (Experiments~2--6), \texttt{bonus\_sweep.py} and
\texttt{prov\_sweep.py} (the BONuS-GA and provable-tier curves, computed
at the same sweep points and per-trial seeds), \texttt{ablation\_fixedbag.py}
(the paired fixed-bag ablation of Table~\ref{tab:bonus-ablation}), and
\texttt{rho\_hubsonly.py} (the hubs-only dependence companion run); and
\texttt{replot.py}, which redraws every figure from the released
per-experiment \texttt{.npz} results without recomputation. The package,
including the result files, is publicly available at
\begin{center}\small
\url{https://github.com/mehrdad-pournaderi/On-Large-Scale-Multiple-Testing-Over-Networks-A-Non-Asymptotic-Approach}
\end{center}

\subsection{Experiment 1: failure of Local BH as quiet nodes accumulate}
\label{sec:sim-localbh}

Figure~\ref{fig:mn-break} fixes the two hubs and grows the number of quiet
pure-null nodes from $0$ to $600$, so the total $m$ grows from $2{,}000$ to
$14{,}000$.

\begin{figure}[t]
\centering
\includegraphics[width=\textwidth]{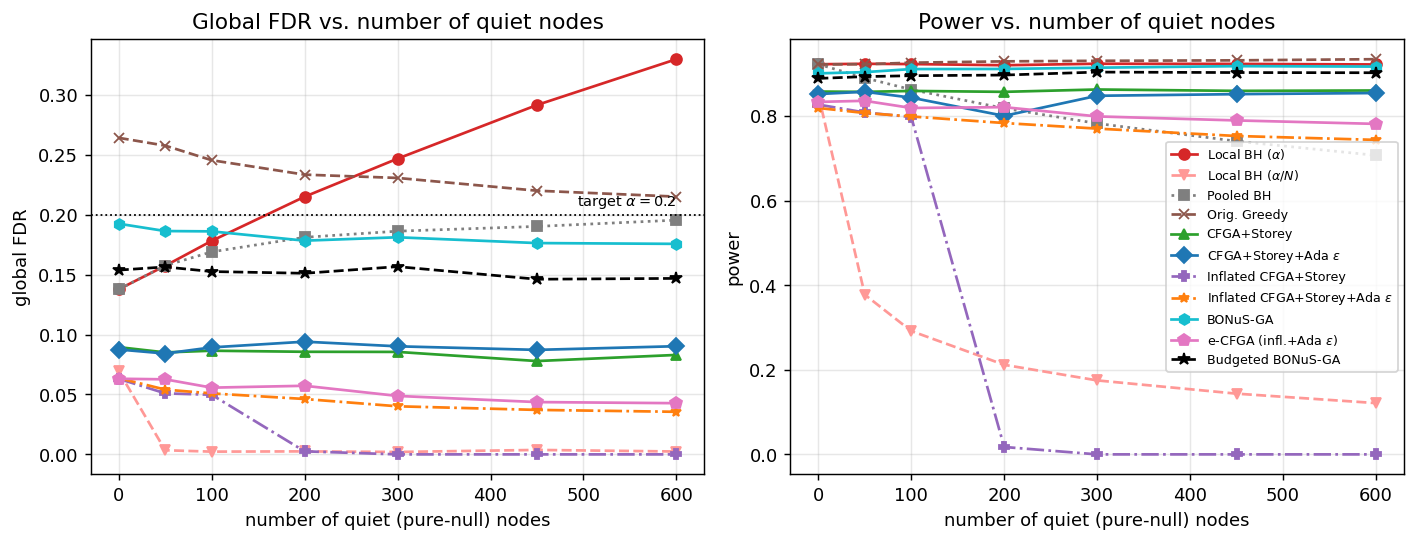}
\caption{Experiment 1: global FDR (left) and power (right) versus the number
of quiet pure-null nodes. Local BH$(\alpha)$ (red) crosses $\alpha=0.2$ near
$150$ quiet nodes and climbs to $0.33$; the cross-fit CFGA$+$Storey methods
stay flat at $0.08$--$0.09$ and adaptive-$\varepsilon$ Inflated at
$0.04$--$0.06$ (fixed-$\varepsilon$ Inflated stops rejecting beyond ${\sim}150$
quiet nodes as its margin grows with the network); BONuS-GA (cyan) is
controlled at $0.18$--$0.19$ with the highest power of the controlled
methods. The provable tiers---Budgeted BONuS-GA (black) and inflated
e-CFGA (pink)---are flat at $0.15$ and $0.04$--$0.06$ respectively.}
\label{fig:mn-break}
\end{figure}

\begin{itemize}[leftmargin=*]
\item \textbf{Local BH$(\alpha)$ loses global control.} Its global FDR rises
monotonically from $0.14$ (no quiet nodes) to $0.33$ ($600$ quiet nodes),
crossing $\alpha=0.2$ near $150$ nodes. Each tiny null node fires false
rejections that within-node BH cannot suppress, and these accumulate in the
pooled FDP. This is precisely the trap a practitioner falls into by ``running
BH locally at each site.''
\item \textbf{The global methods are immune.} Pooled BH ($0.14$--$0.20$,
drifting toward its bound as nulls accumulate), CFGA$+$Storey
($\approx0.08$--$0.09$) and adaptive-$\varepsilon$ Inflated CFGA$+$Storey
($\approx0.04$--$0.06$) all stay below $\alpha$ at every network size: the
quiet nodes carry no signal, so the greedy selects only hub regions.
Fixed-$\varepsilon$ Inflated is controlled but very conservative---its
inflation $\delta\cdot m_{b'}$ grows with the network and it stops rejecting
beyond ${\sim}150$ quiet nodes. Original Greedy sits at $0.21$--$0.26$, above
$\alpha$ (Experiment~2). Local BH$(\alpha/N)$ is valid but its power collapses
as $N$ grows ($\alpha/N\to0$).
\item \textbf{Power.} CFGA$+$Storey holds power $\approx0.86$ and
adaptive-$\varepsilon$ Inflated $\approx0.74$--$0.82$ throughout, against
Pooled BH's dilution-driven decline from $0.92$ to $0.71$; Local
BH$(\alpha/N)$ falls below $0.15$. \textbf{BONuS-GA} holds the highest
controlled power, $\approx0.90$--$0.92$, by using all $m$ p-values rather than
a held-out half---the practical pay-off of avoiding the split.
\item \textbf{The provable tiers track their heuristic counterparts.}
Budgeted BONuS-GA is flat at $\FDR\approx0.15$ and power
$0.89$--$0.90$---within $0.01$--$0.02$ power of the global-ratio heuristic
at $0.02$--$0.04$ lower FDR, now with a finite-sample certificate.
Inflated e-CFGA sits at $0.04$--$0.06$ with power $0.78$--$0.84$, a
$0.01$--$0.04$ power gain over Bonferroni-2 Inflated CFGA$+$adaptive
$\varepsilon$ at every network size.
\end{itemize}

\subsection{Experiment 2: hub signal strength}
\label{sec:sim-mu}

Figure~\ref{fig:mn-mu} fixes $300$ quiet nodes and varies the hub signal
$\mu_{\mathrm{hub}}\in\{1.5,2.0,2.5,3.0,3.5\}$.

\begin{figure}[t]
\centering
\includegraphics[width=\textwidth]{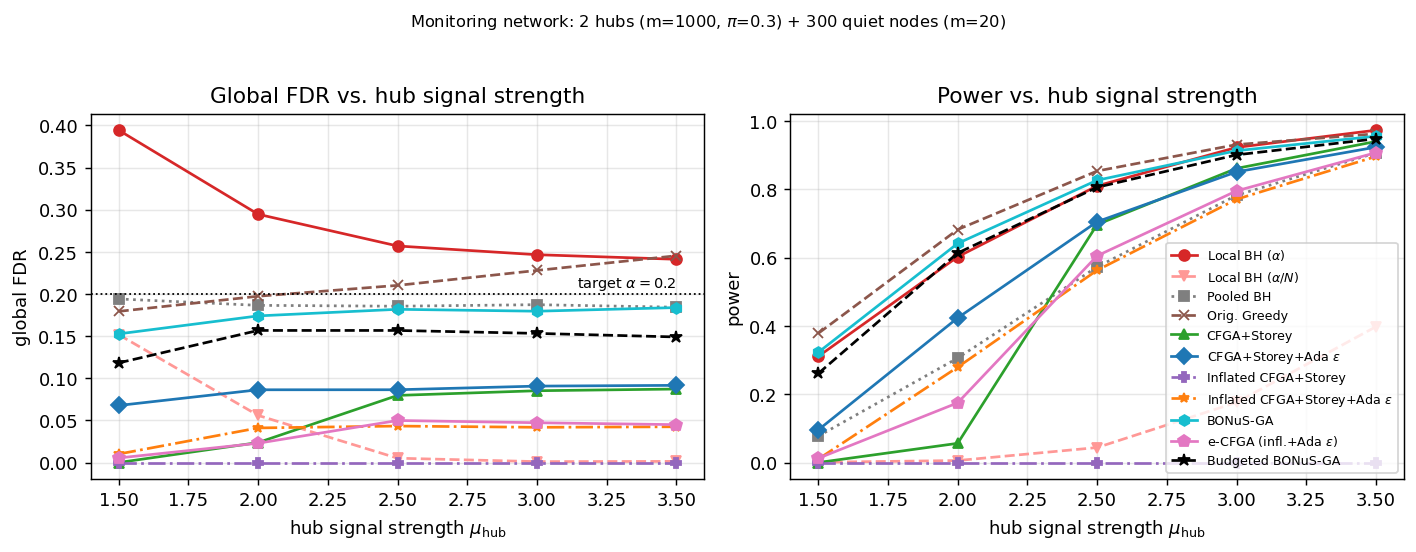}
\caption{Experiment 2: global FDR (left) and power (right) versus hub signal
strength $\mu_{\mathrm{hub}}$.}
\label{fig:mn-mu}
\end{figure}

\begin{itemize}[leftmargin=*]
\item \textbf{Original Greedy exceeds $\alpha$.} Its FDR runs $0.18$--$0.25$,
above target for $\mu_{\mathrm{hub}}\gtrsim2.5$ and climbing with the signal;
this is the selection-optimism effect of \S\ref{sec:gap}.
Local BH$(\alpha)$ is far above $\alpha$ ($0.39$ at the weakest signal). The
CFGA and Inflated Storey methods stay $\le0.09$.
\item \textbf{Adaptive $\varepsilon$ is essential at weak signal.} At
$\mu_{\mathrm{hub}}=2.0$, fixed-$\varepsilon$ CFGA$+$Storey attains power
$0.06$ while its adaptive variant attains $0.42$; at $\mu_{\mathrm{hub}}=1.5$
the fixed variant collapses to $0$ but the adaptive variants recover positive
power.
\item \textbf{Provable tiers.} Budgeted BONuS-GA stays within
$0.03$--$0.06$ power of the global-ratio heuristic across the entire
signal range ($0.26$ at $\mu_{\mathrm{hub}}=1.5$ up to $0.95$ at $3.5$,
FDR $0.12$--$0.16$). Inflated e-CFGA matches or beats Bonferroni-2
Inflated$+$adaptive from $\mu_{\mathrm{hub}}\ge2.5$ on ($0.61$ vs $0.56$
at $2.5$, $0.91$ vs $0.90$ at $3.5$) but pays for its stricter fold level
at the weakest signals ($0.18$ vs $0.28$ at $2.0$).
\end{itemize}

\subsection{Experiment 3: within-node correlation}
\label{sec:sim-rho}

We make \emph{every} node's test statistics a stationary AR(1) process
with within-node correlation $\rho$---hubs and quiet nodes alike, so the
null-rate estimates on the quiet mass are also computed from dependent
data; Figure~\ref{fig:mn-rho} varies $\rho\in\{0,0.2,0.4,0.6,0.8\}$.
In a companion run that confines the correlation to the hubs (quiet
nulls i.i.d., same seeds; script \texttt{rho\_hubsonly.py}), the
un-inflated adaptive variant shows no runaway at all: its FDR stays at
$0.087$--$0.094$ ($\pm0.005$, $2$SE) across $\rho\le0.8$. The failure
mode is therefore driven by the dependent \emph{quiet-node} data
corrupting the null-rate and validation estimates, not by hub dependence
as such.

\begin{figure}[t]
\centering
\includegraphics[width=\textwidth]{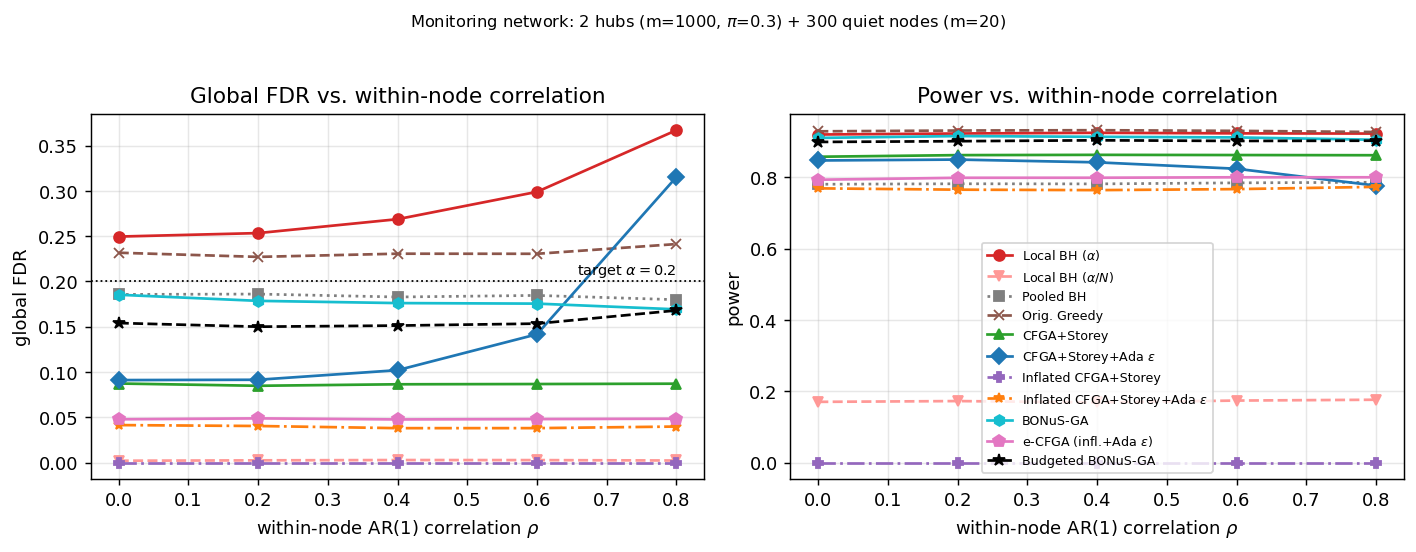}
\caption{Experiment 3: global FDR (left) and power (right) under
within-node AR(1) correlation $\rho$, applied at every node (hubs and
quiet). The fixed-$\varepsilon$ methods are flat; the
\emph{un-inflated} Storey$+$adaptive-$\varepsilon$ variant (blue) loses
control as $\rho\to0.8$, while both inflated adaptive methods stay flat.}
\label{fig:mn-rho}
\end{figure}

\begin{itemize}[leftmargin=*]
\item \textbf{Fixed-$\varepsilon$ CFGA is robust.} CFGA$+$Storey
($\approx0.085$--$0.087$, power $\approx0.86$) is essentially flat
across $\rho$, despite the within-node independence requirement of
Assumption~\ref{ass:model}. (Fixed-$\varepsilon$ Inflated makes no
rejections at this configuration; see Experiment~4.)
\item \textbf{An un-inflated data-driven $\varepsilon$ can over-reject under
strong dependence---inflation removes the runaway.}
CFGA$+$Storey$+$adaptive $\varepsilon$ is controlled up to
$\rho\approx0.6$ but its FDR climbs to $0.32$ at $\rho=0.8$: the
cross-validated $\varepsilon$ exploits the dependence-induced
anti-conservativeness of the Storey estimate (\S\ref{sec:eps-adaptive}). The
\emph{Inflated} adaptive variant, whose margin $\delta$ supplies a
dependence buffer, stays at $\approx0.04$ (power $\approx0.77$) across the
entire range, including $\rho=0.8$. We therefore recommend the Inflated
adaptive variant, or a fixed $\varepsilon$, when strong within-node
dependence is suspected.
\item \textbf{BONuS-GA is robust too.} Using a fixed $\varepsilon$ and no
adaptive tuning, BONuS-GA is flat at $\FDR\approx0.17$--$0.19$ and power
$\approx0.91$ across all $\rho$, well clear of the un-inflated adaptive
variant's runaway at $\rho=0.8$.
\item \textbf{So are both provable tiers---including the one that tunes
$\varepsilon$ on the data.} Budgeted BONuS-GA is flat at
$0.15$--$0.17$/$0.90$. Most strikingly, inflated e-CFGA is \emph{exactly}
flat at $0.048$/$0.80$ across the whole range, $\rho=0.8$ included,
despite running a per-fold data-driven $\varepsilon$ \emph{and} an
aggressive fold level: in these AR(1) sweeps the inflation margin is
large enough to cover the dependence-induced optimism that derails the
un-inflated adaptive variant. This is an empirical observation---the
margin is derived under independence and is not a dependence
correction.
\end{itemize}

\subsection{Experiment 4: hub sample size}
\label{sec:sim-hubsize}

Figure~\ref{fig:mn-n} varies the per-hub size
$m_{\mathrm{hub}}\in\{250,500,1000,2000\}$ (log scale), keeping $300$ quiet
nodes.

\begin{figure}[t]
\centering
\includegraphics[width=\textwidth]{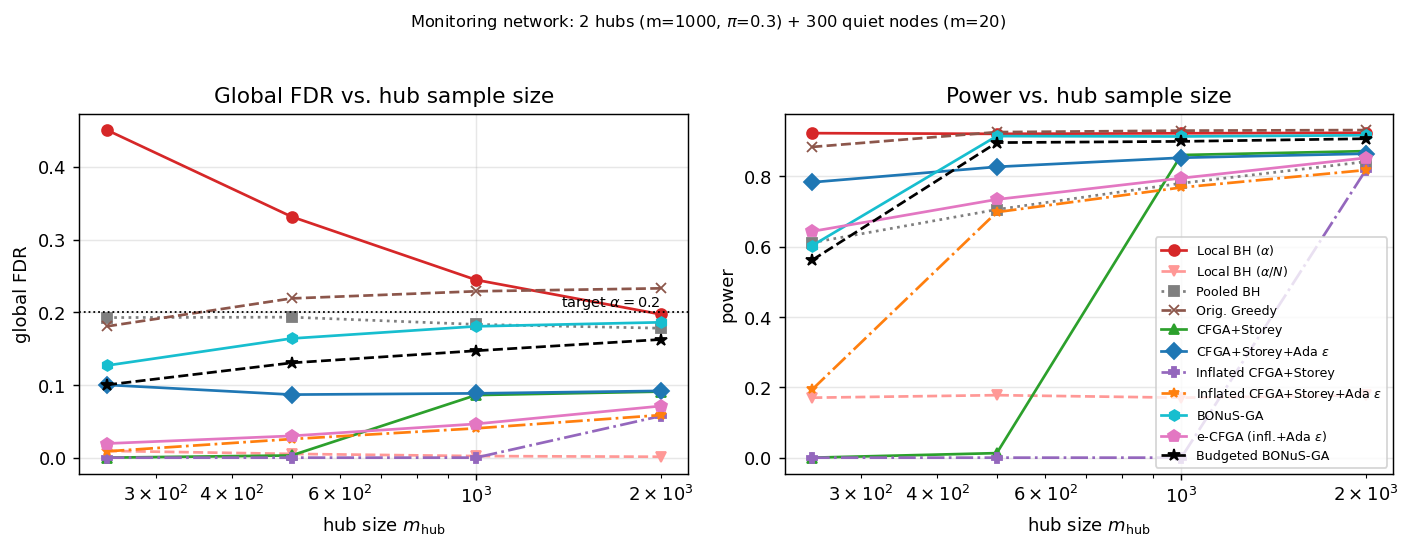}
\caption{Experiment 4: global FDR (left) and power (right) versus hub sample
size $m_{\mathrm{hub}}$. Adaptive $\varepsilon$ (blue, orange) rescues the
fixed-$\varepsilon$ Storey methods (green, purple) that collapse at small
hubs.}
\label{fig:mn-n}
\end{figure}

\begin{itemize}[leftmargin=*]
\item \textbf{Adaptive $\varepsilon$ rescues the small-hub regime.} The fixed
default $\varepsilon=\alpha m^{-1/2}$ is too coarse for small hubs: at
$m_{\mathrm{hub}}=250$ both fixed-$\varepsilon$ Storey methods make zero
rejections, while adaptive CFGA$+$Storey attains power $0.78$ at FDR
$0.10$. This is the per-node sample-size floor made visible. The
\emph{Inflated} adaptive variant is markedly more conservative here: its
margin is large relative to small-hub signal mass, so its power is only
$0.19$ at $m_{\mathrm{hub}}=250$, recovering to $0.70$ at $500$ and
$0.77$--$0.82$ at $1000$--$2000$; fixed-$\varepsilon$ Inflated first fires at
$m_{\mathrm{hub}}=2000$ ($0.057/0.818$).

\item \textbf{Regime crossover: CFGA$+$adaptive $\varepsilon$ beats BONuS-GA
at small hubs.} At $m_{\mathrm{hub}}=250$, CFGA$+$Storey$+$adaptive
$\varepsilon$ attains power $0.78$ at FDR $0.10$, while BONuS-GA reaches only
power $0.60$ at FDR $0.13$ --- a $0.18$ power gap in CFGA's favour, at a
lower FDR. At $m_{\mathrm{hub}}=500$ and above, BONuS-GA is the highest-power
method ($0.92$). To diagnose the small-hub weakness we ran an ablation at
$m_{\mathrm{hub}}=250$ (Table~\ref{tab:bonus-ablation}): replacing the
implementable Storey allocation $\tilde m^{(i)}\propto m^{(i)}\widehat
r_0^{(i)}$ by the oracle $r_0^{(i)}$ recovers $0.09\pm0.04$ in power
(paired $2$SE) and quadrupling the synthetic budget ($c=4$) recovers
$0.16\pm0.08$. The decisive comparison holds the bag \emph{fixed}:
per-node calibration (Remark~\ref{rem:bonus}) run on the identical
realized bag as the baseline recovers $0.24\pm0.07$ and matches
CFGA$+$adaptive; on a size-proportional bag the same switch recovers
$0.49\pm0.06$. A second pairing isolates the allocation: under global
calibration, moving from the Storey bag to a size-proportional bag
\emph{costs} $0.25\pm0.06$ power, while under per-node calibration the
identical move changes power by $0.00\pm0.06$. The global tail-ratio
architecture --- the single pooled ratio
$\widehat N(A)/\widetilde N(A)$, structurally mis-calibrated for
heterogeneous node sizes and noisy when $\widetilde N(A)$ is small ---
is therefore the dominant mechanism; under per-node calibration the two
tested allocations moreover produced essentially identical power
($0.00\pm0.06$).
The two methods are complementary defaults: BONuS-GA at
moderate-to-large hubs, CFGA$+$adaptive $\varepsilon$ at small hubs. The
budgeted tier of Theorem~\ref{thm:bonus-pernode} already carries a proof;
what could absorb both regimes under one analysis is the slack-pooling
\emph{scalar-sum} rule of Remark~\ref{rem:bonus}, which remains open.

\begin{table}[t]
\centering
\caption{Ablation at $m_{\mathrm{hub}}=250$ ($N_T=200$ trials,
$\alpha=0.20$). Mean FDR / power. The headline figures in the bullets
above are from the Experiment-4 sweep at $N_T=100$ trials with the same
configuration; the absolute numbers therefore differ from the table by
Monte-Carlo margins of up to ${\sim}0.06$, but the qualitative ordering
--- and the regime-crossover conclusion --- is identical. All arms are
\emph{paired}: within a trial every arm sees the same data, and arms
sharing an allocation rule see the identical realized bag (the RNG state
is restored before each arm), so within-block contrasts isolate the
calibration rule and across-block contrasts isolate the allocation
(script \texttt{ablation\_fixedbag.py}). Entries are mean $\pm2$
Monte Carlo SE.}
\label{tab:bonus-ablation}
\small
\begin{tabular}{l c c}
\toprule
Variant & FDR & Power \\
\midrule
Global ratio, Storey bag (implementable baseline) & $0.112\pm.013$ & $0.546\pm.063$ \\
\quad $+$ oracle allocation (true $r_0$) & $0.127\pm.012$ & $0.637\pm.061$ \\
\quad $+$ bigger bag ($c=4$) & $0.140\pm.011$ & $0.705\pm.055$ \\
\quad $+$ per-node calibration (\emph{identical} bag) & $0.152\pm.010$ & $0.787\pm.046$ \\
Global ratio, proportional bag & $0.058\pm.011$ & $0.294\pm.058$ \\
\quad $+$ per-node calibration (\emph{identical} bag) & $0.148\pm.010$ & $0.789\pm.047$ \\
Per-node budgeted (Theorem~\ref{thm:bonus-pernode}) & $0.092\pm.012$ & $0.528\pm.065$ \\
CFGA$+$Storey$+$adaptive $\varepsilon$ (reference) & $0.100$ & $0.783$ \\
\bottomrule
\end{tabular}
\end{table}

\item \textbf{Local BH$(\alpha)$ and Original Greedy sit above target over
most of the range} ($0.45\to0.20$ and $0.18\to0.23$ respectively, each
dipping to ${\approx}\alpha$ only at one endpoint).
\item \textbf{Provable tiers.} Budgeted BONuS-GA essentially matches the
global-ratio heuristic from $m_{\mathrm{hub}}=500$ on ($0.90$ vs $0.91$
power at $0.02$--$0.03$ lower FDR) and concedes the small-hub point
($0.56$ vs $0.60$ at $250$). Inflated e-CFGA is the standout at small
hubs among the rigorous methods: $0.64$ power at $\FDR=0.02$ at
$m_{\mathrm{hub}}=250$, against $0.19$ for Bonferroni-2
Inflated$+$adaptive---e-value aggregation recovers most of the inflated
tier's small-hub collapse, though CFGA$+$adaptive $\varepsilon$ ($0.78$)
remains the overall small-hub recommendation.
\end{itemize}

\subsection{Experiment 5: heavy-tailed (Cauchy) statistics}
\label{sec:sim-cauchy}

We replace the hub Gaussian statistics by $\mathrm{Cauchy}(0,1)$ nulls with a
larger shift $\mu_{\mathrm{hub}}=8$ (heavy tails make the signal weak in
p-value terms); Figure~\ref{fig:mn-cauchy} varies the hub size.

\begin{figure}[t]
\centering
\includegraphics[width=\textwidth]{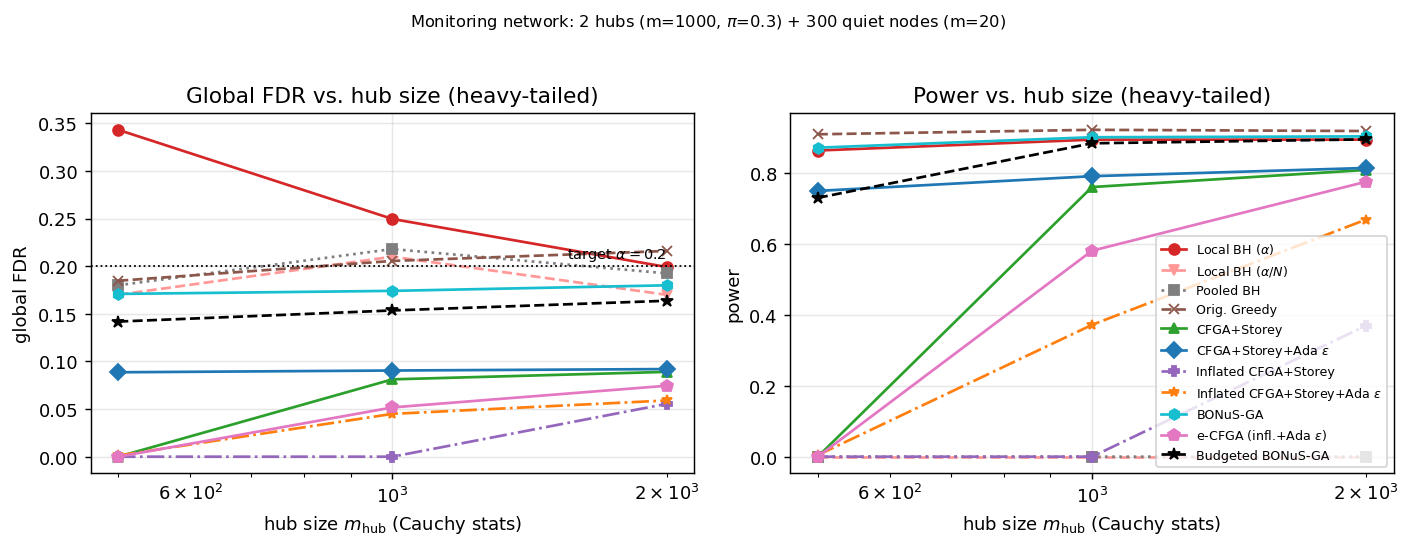}
\caption{Experiment 5: global FDR (left) and power (right) under
$\mathrm{Cauchy}(0,1)$ hub statistics, versus hub size. Pooled BH (grey) is
valid but powerless under dilution.}
\label{fig:mn-cauchy}
\end{figure}

\begin{itemize}[leftmargin=*]
\item \textbf{Pooled BH is valid but powerless.} Diluted across $6{,}000$
quiet-node nulls and facing weak heavy-tailed signal (median signal p-value
$\approx0.04$), Pooled BH's global threshold never reaches the hub signals: it
makes essentially zero true discoveries while its FDR sits at the BH bound
($\approx\alpha$, a Monte-Carlo average over the rare trials in which it fires).
The structured CFGA methods, which target the hub regions, retain power
$0.75$--$0.82$.
\item \textbf{CFGA and Inflated Storey control FDR} ($\le0.09$) at all hub
sizes; adaptive $\varepsilon$ again rescues power at the smallest hub
($0.75$ vs $0$ for fixed $\varepsilon$ at $m_{\mathrm{hub}}=500$). The
Inflated adaptive variant pays a visible power cost under heavy tails
($0.37$ at $m_{\mathrm{hub}}=1000$, $0.67$ at $2000$): the weak per-interval
signal mass leaves less room above the inflated null mass. BONuS-GA is
unaffected ($\approx0.87$--$0.90$ at FDR $\le0.18$).
\item \textbf{Provable tiers.} Budgeted BONuS-GA again tracks the
heuristic closely ($0.73/0.89/0.90$ across the three hub sizes at FDR
$0.14$--$0.16$). Inflated e-CFGA improves on Bonferroni-2
Inflated$+$adaptive at every point where either fires ($0.58$ vs $0.37$
at $1000$, $0.78$ vs $0.67$ at $2000$) but, like it, makes no rejections
at the smallest hub.
\end{itemize}

\subsection{Experiment 6: target FDR level}
\label{sec:sim-alpha}

Figure~\ref{fig:mn-alpha} varies $\alpha\in\{0.05,0.10,0.15,0.20\}$.

\begin{figure}[t]
\centering
\includegraphics[width=\textwidth]{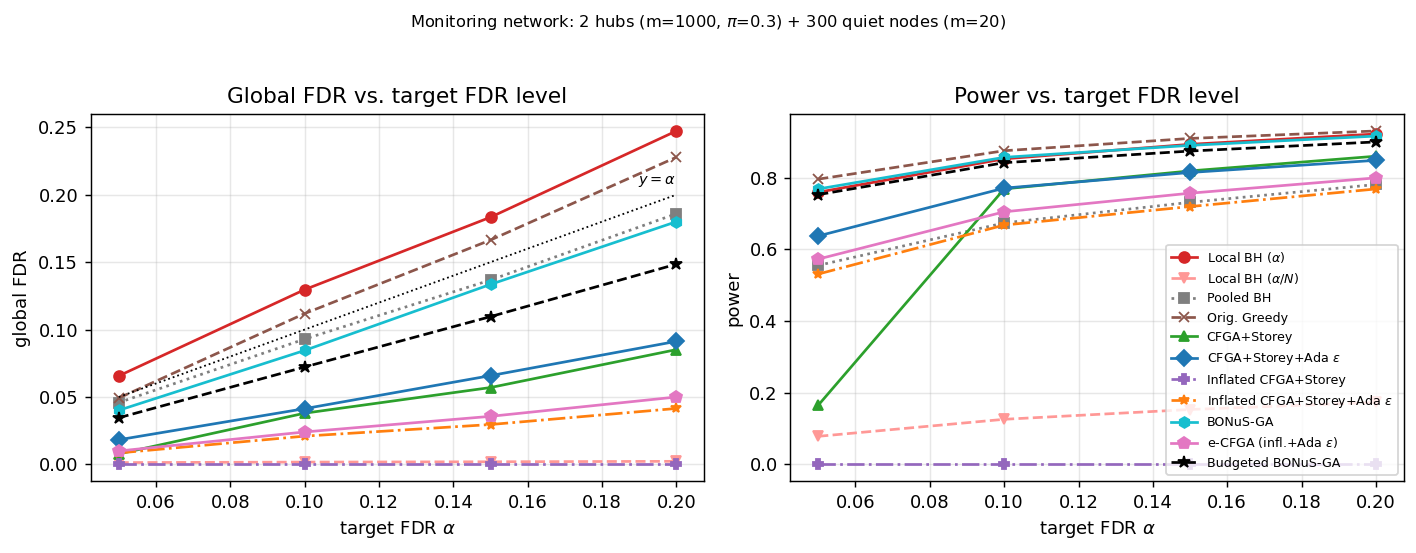}
\caption{Experiment 6: global FDR (left) and power (right) versus the target
FDR $\alpha$. The dotted diagonal is $y=\alpha$.}
\label{fig:mn-alpha}
\end{figure}

\begin{itemize}[leftmargin=*]
\item \textbf{Control tracks the diagonal.} CFGA$+$Storey and the Inflated
variants sit well below the $y=\alpha$ line at every level (BONuS-GA sits
closest to it from below), while
Local BH$(\alpha)$ sits above it. Adaptive $\varepsilon$ gives its largest
power gains at small $\alpha$ ($0.17\to0.64$ over fixed $\varepsilon$ at
$\alpha=0.05$), where the default bandwidth is farthest from optimal.
\item \textbf{Provable tiers.} Both track the diagonal from below at
every level (Budgeted BONuS-GA: $0.035\to0.148$; inflated e-CFGA:
$0.010\to0.050$), and inflated e-CFGA beats Bonferroni-2
Inflated$+$adaptive uniformly in power ($0.57$ vs $0.53$ at
$\alpha=0.05$ up to $0.80$ vs $0.77$ at $0.20$).
\end{itemize}

\subsection{Take-aways}

Table~\ref{tab:guide} condenses the experiments into a practitioner's
guide; the bullets below give the evidence.

\begin{table}[t]
\centering
\caption{Which method when. ``Certificate'' = finite-sample FDR proof
with no oracle input: $\FDR\le\alpha+\eta$ with $\eta=1/m$ for the
inflated tiers of Theorems~\ref{thm:aug} and~\ref{thm:ecfga}, and
$\FDR\le\alpha$ for Theorem~\ref{thm:bonus-pernode}. Representative FDR/power from
Section~\ref{sec:sim} at $\alpha=0.2$.}
\label{tab:guide}
\small
\begin{tabular}{p{0.44\linewidth} p{0.46\linewidth}}
\toprule
Scenario & Recommendation \\
\midrule
Moderate-to-large hubs, best power & BONuS-GA, global ratio (heuristic;
$0.18/0.91$ at $m_{\mathrm{hub}}{=}1000$) \\[2pt]
Moderate-to-large hubs, certificate required & Budgeted BONuS-GA
(Thm.~\ref{thm:bonus-pernode}; $0.15/0.90$) \\[2pt]
Small hubs ($m^{(i)}\lesssim$ a few hundred) & CFGA$+$Storey$+$adaptive
$\varepsilon$ (heuristic; $0.10/0.78$ at $m_{\mathrm{hub}}{=}250$);
rigorous alternative: inflated e-CFGA ($0.02/0.64$) \\[2pt]
Strong within-node dependence suspected & Inflated adaptive variants
(e-CFGA or Inflated CFGA): empirically flat FDR up to $\rho=0.8$ in our
AR(1) sweeps (no dependence guarantee); avoid un-inflated
adaptive $\varepsilon$ \\[2pt]
Reduced split-to-split variability & e-CFGA
(Thm.~\ref{thm:ecfga}), $S$ splits averaged \\
\bottomrule
\end{tabular}
\end{table}

\begin{itemize}[leftmargin=*]
\item \textbf{Local BH$(\alpha)$ is not globally valid.} The headline failure
(Experiment~1): unioning per-node BH rejections inflates global FDR past
$\alpha$ as quiet nodes accumulate. Local BH$(\alpha/N)$ restores validity but
is nearly powerless. The CFGA/Inflated family controls global FDR \emph{and}
keeps power.
\item \textbf{Original Greedy exceeds $\alpha$ at finite $m$.} Across
Experiments~1, 2, 4 and 5 its FDR sits above target; the over-shoot is the
slowly-vanishing selection optimism of \S\ref{sec:gap}, not a bug, and does
not disappear as the hubs grow.
\item \textbf{Adaptive $\varepsilon$ is the recommended power knob, with one
caveat.} It rescues the small-hub, weak-signal and small-$\alpha$ regimes
(Experiments~2, 4, 6), but under strong within-node dependence the
un-inflated Storey$+$adaptive variant can over-reject (Experiment~3); the
Inflated adaptive variant is the robust default there.
\item \textbf{Inflated CFGA$+$Storey is valid throughout, and its margin has
a real power price.} The adaptive-$\varepsilon$ Inflated variant stays at
$\approx0.04$--$0.06$ FDR in every experiment---including under AR(1)
dependence at $\rho=0.8$, where it removes the un-inflated adaptive
variant's runaway---at a power cost of roughly $0.08$--$0.10$ below
CFGA$+$Storey at the base configuration, growing in the small-hub and
heavy-tailed regimes (Experiments~4--5). The fixed-$\varepsilon$ Inflated
variant is so conservative that it often makes no rejections (it requires
$m_{\mathrm{hub}}\gtrsim2000$ or few quiet nodes to fire); Inflated CFGA
should therefore always be run with the adaptive bandwidth. This is
consistent with Theorem~\ref{thm:aug}: the guarantee is bought by a margin
$\delta$ that scales as $\sqrt{\log(2Nm)/m_b}$ at $\eta=1/m$, which is not
small in networks
with hundreds of nodes and moderate $m_b$.
\item \textbf{BONuS-GA dominates at moderate-to-large hubs but not at small
hubs.} By masking a bag of synthetic uniform nulls instead of splitting
(Section~\ref{sec:bonus}), BONuS-GA uses every p-value for both selection
and inference and is the highest-power controlled method ($\approx0.90$--$0.92$)
across Experiments~1--3, 5, 6 while staying below $\alpha$; it is also
robust to within-node dependence (Experiment~3). The price is a
randomized procedure and the structural cost of the global tail-ratio
$\widehat N(A)/\widetilde N(A)$ at small per-node sample sizes, which becomes
visible at small hubs (Experiment~4, $m_{\mathrm{hub}}=250$, where
CFGA$+$Storey$+$adaptive $\varepsilon$ wins by $0.18$ in power at lower
FDR). The fixed-bag paired ablation of Table~\ref{tab:bonus-ablation}
attributes this cost to the \emph{global-ratio architecture}: per-node
calibration on the identical realized bag recovers $0.24\pm0.07$ power,
and under per-node calibration the two tested allocations produced
essentially identical power ($0.00\pm0.06$, against a $0.25\pm0.06$ gap
under the global ratio). The per-node
\emph{budgeted} tier (Theorem~\ref{thm:bonus-pernode}) turns this
diagnosis into a guarantee: it is fully implementable, split-free, and
provably valid at every finite $m$, essentially matching the global-ratio
heuristic at moderate-to-large hubs ($0.147/0.900$ at
$m_{\mathrm{hub}}=1000$, Experiment~4) while conceding the small-hub
regime ($0.10/0.56$ at $250$; $0.092/0.528$ under the ablation seeds of
Table~\ref{tab:bonus-ablation}) where the per-node SeqStep$+$ offsets
bite. The
practical picture: BONuS-GA (budgeted if a guarantee is required) at
moderate-to-large hubs, CFGA$+$adaptive $\varepsilon$ at small ones; a
proof for the slack-pooling scalar-sum rule of Remark~\ref{rem:bonus}
would unify the two regimes.
\item \textbf{Rigor is no longer expensive.} Across all six experiments
the two provable implementable tiers run within striking distance of
their heuristic counterparts: Budgeted BONuS-GA gives up
$0.01$--$0.06$ power against global-ratio BONuS-GA (at uniformly lower
FDR), and inflated e-CFGA dominates Bonferroni-2 Inflated
CFGA$+$adaptive at almost every operating point---most visibly at small
hubs ($0.64$ vs $0.19$ at $m_{\mathrm{hub}}=250$)---while staying at
$\FDR\le0.075$ everywhere, including under AR(1) dependence at $\rho=0.8$
with a data-driven bandwidth. The remaining price of an exact
finite-sample certificate is concentrated in two regimes: very small
hubs and heavy tails.
\end{itemize}

\section{Discussion}
\label{sec:discuss}

Cross-fitting (CFGA) and masking (BONuS-GA) both deliver finite-sample FDR
control for greedy aggregation while preserving the original algorithm's
$O(\sqrt m\log m)$ total communication budget. CFGA is a thin wrapper around the
existing procedure --- the proof reduces in the lifted (node-label,
p-value) space to a well-established adaptive-BH supermartingale lemma ---
and the inflated Storey variant lifts the guarantee to the plug-in setting
at a vanishing FDR slack $\eta=1/m$; its inflation margin
$\delta=O\bigl(\sqrt{\log(2Nm)/m_b}\bigr)$ is a real conservatism cost in
networks with
many nodes, so we pair it with the adaptive bandwidth in practice.
BONuS-GA, under oracle weights, uses every
real p-value for both selection and inference by replacing the held-out half
with a masking-measurable bag of synthetic uniform nulls; in our simulations
this recovers the constant-factor power that sample splitting forfeits.
Empirically the two methods are complementary: BONuS-GA dominates at
moderate-to-large per-node sample sizes, while CFGA with adaptive
$\varepsilon$ is the higher-power choice when individual hubs are small
(Section~\ref{sec:sim}, Experiment~4).

Four directions remain open. \emph{Within-node dependence.} Both our
finite-sample proofs and the asymptotic analysis of
\citet{PournaderiXiang2024} require within-node independence.
Classical BH controls FDR under positive
regression dependence \citep{BY2001}; extending CFGA's and BONuS-GA's
guarantees to PRDS or weakly-dependent sequences would close that gap.
Empirically the fixed-$\varepsilon$ methods are remarkably robust to AR(1)
correlation up to $\rho=0.8$ (Experiment~3), but a proof under dependence
is open, and the inflated margin's resilience there should be read as an
empirical observation rather than a dependence correction.
\emph{Removing the CFGA offset.} The thinning identity behind the
lifted-space argument is exact whenever the relevant masses are positive,
which suggests the additive offset in the CFGA acceptance rule may be
removable under careful zero-mass conventions; since additive offsets
carry a measurable small-sample power price elsewhere in the paper, this
is worth settling. \emph{Sharper e-CFGA.} The e-value aggregation of
Section~\ref{sec:ecfga} removes the Bonferroni-2 factor at a fixed fold
level $\gamma$; characterizing the power-optimal $\gamma$ (our
boundary calculation of \S\ref{sec:ecfga} shows only that $\gamma=\alpha$
leaves no slack in the unanimous regime; $\gamma=0.75\alpha$ is an
empirical default)
and extending the e-value route to the small-hub regime, where it
currently trails CFGA$+$adaptive $\varepsilon$, are natural refinements.
\emph{Closing the remaining BONuS-GA
gap.} Theorem~\ref{thm:bonus-pernode} now gives a fully implementable,
split-free procedure with exact finite-sample FDR control and no oracle
input, and at moderate-to-large hubs it essentially matches the
global-ratio heuristic. What remains open is the small-hub regime: the
\emph{scalar-sum} per-node rule of Remark~\ref{rem:bonus} (which lets
active nodes pool their slack and a single SeqStep$+$ offset, reaching
power $0.787$ at $m_{\mathrm{hub}}=250$ against the budgeted rule's
$0.528$) still lacks a proof; a finite-sample bound for it --- or a
budget-sharing scheme that provably recovers the pooled slack --- would
absorb both regimes under one analysis, as the ablation of
Table~\ref{tab:bonus-ablation} suggests.

\newpage
\appendix

\section{Lifted-space supermartingale; proofs of Theorems~\ref{thm:cfga} and~\ref{thm:aug}}
\label{app:lifted}

\begin{lemma}[Lifted-space adaptive control]\label{lem:lifted}
Work under Assumption~\ref{ass:model} and fix a fold $b$ with ranking half
$\D_b$ and inference half $\D_{b'}$ of size $m_{b'}$ (Section~\ref{sec:method}).
Running the greedy on $\D_b$ yields a \emph{nested, increasing} family of lifted
regions $\widetilde\Gamma_1\subseteq\cdots\subseteq\widetilde\Gamma_{M_{\max}}$,
where $\widetilde\Gamma_M=\bigcup_i\{i\}\times\widehat\Gamma_M^{(i,b)}$ is
$\sigma(\D_b)$-measurable. With $\widetilde X_k=(I_k,P_k)$ and
$\pi^{(i)}=q^{(i)}r_0^{(i)}$, set
\[
V(M)=\#\{k\in\D_{b'}:k\ \text{null},\,\widetilde X_k\in\widetilde\Gamma_M\},\qquad
p_M=\sum_i\pi^{(i)}\,\nu(\widehat\Gamma_M^{(i,b)}),
\]
so that $p_M$ is the probability that a single hypothesis is \emph{both} null
and inside $\widetilde\Gamma_M$, and $\E[V(M)\mid\D_b]=m_{b'}p_M$. Let
$R(M)=\#\{k\in\D_{b'}:\widetilde X_k\in\widetilde\Gamma_M\}$ be the
held-out rejection count, and define the backward filtration
\begin{equation}\label{eq:lem-lifted-filt}
\F_M:=\sigma\Bigl(\sigma(\D_b),\bigl\{\1\{\widetilde X_k\in\widetilde\Gamma_{M'}\},\,\1\{k\text{ null}\}:M'\ge M,\,k\in\D_{b'}\bigr\}\Bigr),
\end{equation}
which shrinks as $M$ grows ($\F_1\supseteq\cdots\supseteq\F_{M_{\max}}$).
Then, conditionally on $\D_b$,
\[
\E\!\left[\frac{V(M^\dagger)}{1+m_{b'}\,p_{M^\dagger}}\,\Big|\,\D_b\right]\;\le\;1
\]
for every $(\F_M)$ backward stopping time $M^\dagger$. In particular this
applies to the data-driven CFGA index $\widehat M^{(b)}$, because
$\widehat M^{(b)}$ is a function of the cumulative counts $R(M)$ (each
$\F_M$-measurable) and the $\sigma(\D_b)$-measurable masses $p_M$.
\end{lemma}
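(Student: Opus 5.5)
The plan is the classical Storey/adaptive-BH backward supermartingale, run in the lifted space. First I would fix the conditioning. Condition on $\D_b$. Since $m$ is fixed and the split is independent of the data, $m_{b'}=m-m_b$ is then determined. The regions $\widetilde\Gamma_M$ and masses $p_M$ become deterministic, and by Assumption~\ref{ass:model} the inference-half hypotheses $\{(\1\{k\text{ null}\},\widetilde X_k)\}_{k\in\D_{b'}}$ are i.i.d.\ draws of the lifted pair, independent of $\D_b$. In particular each $k\in\D_{b'}$ is independently ``null and in $\widetilde\Gamma_M$'' with probability $p_M$. This gives $V(M_{\max})\sim\Bin(m_{b'},p_{M_{\max}})$ and hence the anchor
\[
\E\!\left[\frac{V(M_{\max})}{1+m_{b'}p_{M_{\max}}}\,\Big|\,\D_b\right]=\frac{m_{b'}p_{M_{\max}}}{1+m_{b'}p_{M_{\max}}}\le1 .
\]

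The key step is a thinning identity along the backward filtration \eqref{eq:lem-lifted-filt}. Given $\F_M$, we know which inference points are null and which lie in $\widetilde\Gamma_M$, but not where inside $\widetilde\Gamma_M$ they sit. Because the null lifted law has density $\pi^{(i)}$ times Lebesgue on $\{i\}\times[0,1]$, a null point conditioned to lie in $\widetilde\Gamma_M$ is distributed on $\widetilde\Gamma_M$ proportionally to $\pi^{(i)}\nu$. Using nestedness $\widetilde\Gamma_{M-1}\subseteq\widetilde\Gamma_M$, it therefore lands in $\widetilde\Gamma_{M-1}$ with probability $p_{M-1}/p_M$, independently across points. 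Non-null points and the coarser indicators $\1\{\widetilde X_k\in\widetilde\Gamma_{M'}\}$ for $M'>M$ carry no extra information, because hypotheses are independent and membership in $\widetilde\Gamma_{M'}\supseteq\widetilde\Gamma_M$ is implied. Hence
\[
V(M-1)\mid\F_M\sim\Bin\bigl(V(M),\,p_{M-1}/p_M\bigr),
\]
with $V(M-1)=0$ whenever $p_M=0$. I expect this conditional-law step to be the main obstacle. It needs a careful check that conditioning on the full $\F_M$, including the unselected and non-null indicators, does not distort the within-region distribution of the null points, and it needs the zero-mass conventions.

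Given the thinning identity, the supermartingale inequality is algebra:
\[
\E\!\left[\frac{V(M-1)}{1+m_{b'}p_{M-1}}\Big|\F_M\right]=\frac{V(M)\,p_{M-1}}{p_M(1+m_{b'}p_{M-1})}\le\frac{V(M)}{1+m_{b'}p_M}.
\]
After cross-multiplying, the inequality reduces to $p_{M-1}\le p_M$, which holds by nestedness. So $V(M)/(1+m_{b'}p_M)$ is a backward supermartingale with respect to $(\F_M)$. Any backward stopping time $M^\dagger$ takes values in the finite set $\{0,\dots,M_{\max}\}$ and is therefore bounded. Optional stopping then transfers the anchor bound to $M^\dagger$, which proves the displayed inequality.

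Finally, I would verify that $\widehat M^{(b)}$ is such a stopping time. Its defining event $\{\widehat M^{(b)}\ge M\}$ depends only on $R(M')$ for $M'\ge M$ and on the deterministic masses. Each $R(M')=\sum_{k\in\D_{b'}}\1\{\widetilde X_k\in\widetilde\Gamma_{M'}\}$ is $\F_{M'}\subseteq\F_M$-measurable, so $\{\widehat M^{(b)}=M\}\in\F_M$, as required.
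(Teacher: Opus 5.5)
Your proposal is correct and follows the paper's proof essentially step for step. You condition on $\D_b$, anchor at $M_{\max}$ via the binomial mean, use the one-step thinning $V(M-1)\mid\F_M\sim\Bin\bigl(V(M),p_{M-1}/p_M\bigr)$, reduce the supermartingale inequality to $p_{M-1}\le p_M$, and finish with bounded optional stopping. Your explicit handling of the $p_M=0$ case and of a stopping time taking the value $0$ are small refinements the paper leaves implicit; it instead uses $\widehat M^{(b)}\vee 1$ in the proof of Theorem~\ref{thm:cfga}.
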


\noindent The lemma is the engine behind Theorems~\ref{thm:cfga} and
\ref{thm:aug}: it bounds the held-out null count $V(M^\dagger)$ against its
predicted mean $m_{b'}p_{M^\dagger}$ for an \emph{arbitrary} data-driven stopping
index---exactly the guarantee an adaptive procedure needs, since the index it
stops at is itself random.


\begin{proof}
\emph{Setup.} Condition on $\D_b$ throughout; since $\D_b$ identifies which
hypotheses fell in fold $b$, this conditioning determines the Bernoulli
fold assignment $(Z_k)$ of Algorithm~\ref{alg:cfga}, Step~1, and in
particular the inference-half size $m_{b'}=m-m_b$. Because the hypotheses
are i.i.d.\ (Assumption~\ref{ass:model}) and the fold assignment is drawn
independently of the data, conditional on $\D_b$ the inference-half
hypotheses $(\widetilde X_k)_{k\in\D_{b'}}$ are i.i.d.\ from the model and
independent of the now-deterministic regions $\widetilde\Gamma_M$. (This is
precisely where the i.i.d.\ fold assignment is needed: an exact per-node
half-split would make the inference-half label counts essentially
deterministic given $\D_b$, the labels would no longer be i.i.d., and the
single thinning ratio in \eqref{eq:lifted-cond} below would not exist.)
By the lemma's definition of $p_M$, the joint event
``$k$ null and $\widetilde X_k\in\widetilde\Gamma_M$'' has probability $p_M$
for each $k\in\D_{b'}$, so summing the indicators of this event over $k$ gives
the mean identity $\E[V(M)\mid\D_b]=m_{b'}p_M$ asserted in the lemma.

\emph{Reverse filtration.} The greedy admits one interval at a time, so
the regions grow and $V(M),p_M$ are nondecreasing in $M$. We index $M$
\emph{backward}, from the full region $M_{\max}$ down to $1$, under the
filtration $(\F_M)$ of \eqref{eq:lem-lifted-filt}. Moving from $\F_{M+1}$
to $\F_M$ reveals, for each held-out point, whether it also lies in the
smaller region $\widetilde\Gamma_M$.

\emph{One-step thinning.} Fix $M<M_{\max}$. Conditional on $\F_{M+1}$, the
distribution of $V(M)$ depends on $\F_{M+1}$ only through $V(M+1)$: the
surviving set $S_{M+1}=\{k\in\D_{b'}:k\text{ null},\,\widetilde X_k\in\widetilde\Gamma_{M+1}\}$
has size $V(M+1)$, and the only remaining randomness for $V(M)$ is whether
each survivor is also in the smaller region $\widetilde\Gamma_M$. By Bayes
and nestedness, this per-hypothesis probability is the same for every
$k\in S_{M+1}$:
\begin{equation}\label{eq:lifted-cond}
\PP\bigl(\widetilde X_k\in\widetilde\Gamma_M\mid k\text{ null},\,
\widetilde X_k\in\widetilde\Gamma_{M+1}\bigr)
=\frac{\PP(\widetilde X_k\in\widetilde\Gamma_M\mid k\text{ null})}
{\PP(\widetilde X_k\in\widetilde\Gamma_{M+1}\mid k\text{ null})}
=\frac{p_M}{p_{M+1}},
\end{equation}
where the last equality uses
\begin{equation}\label{eq:lifted-marg}
\PP(\widetilde X_k\in\widetilde\Gamma_M\mid k\text{ null})
=\sum_i\frac{\pi^{(i)}}{\pi_0}\,\nu(\widehat\Gamma_M^{(i,b)})=\frac{p_M}{\pi_0}
\end{equation}
(a null hypothesis carries label $i$ with posterior $\pi^{(i)}/\pi_0$ and a
Uniform[0,1] p-value given the label). The inference-half hypotheses are
i.i.d.\ given $\sigma(\D_b)$, so these per-hypothesis indicators are
independent across $k\in S_{M+1}$. Summing,
\[
V(M)\mid\F_{M+1}\;\sim\;\mathrm{Binomial}\bigl(V(M+1),\,p_M/p_{M+1}\bigr),
\]
and $\E[V(M)\mid\F_{M+1}]=V(M+1)\,p_M/p_{M+1}$.

\emph{Backward super-martingale.} Put $U_M=V(M)/(1+m_{b'}p_M)$. Therefore
\[
\begin{aligned}
\E[U_M\mid\F_{M+1}]
&=\frac{V(M+1)\,p_M/p_{M+1}}{1+m_{b'}p_M}\\
&\le\frac{V(M+1)}{1+m_{b'}p_{M+1}}=U_{M+1},
\end{aligned}
\]
the inequality being $p_M(1+m_{b'}p_{M+1})\le p_{M+1}(1+m_{b'}p_M)$, which
simplifies to $p_M\le p_{M+1}$ and holds by nestedness. Thus $(U_M,\F_M)$ is a
backward super-martingale.

\emph{Optional stopping.} At the largest region the predicted and actual null
masses coincide, so
\[
\E[U_{M_{\max}}\mid\D_b]
=\frac{m_{b'}p_{M_{\max}}}{1+m_{b'}p_{M_{\max}}}\le1.
\]
Since $\sigma(\D_b)\subseteq\F_M$ for every $M$, the backward
supermartingale property holds under the conditional measure
$\PP[\cdot\mid\D_b]$, and optional stopping for backward supermartingales
applied to the bounded backward stopping time $M^\dagger$ gives
$\E[U_{M^\dagger}\mid\D_b]\le\E[U_{M_{\max}}\mid\D_b]\le1$, as claimed.
\end{proof}

\begin{proof}[Proof of Theorem~\ref{thm:cfga}]
Fix fold $b\in\{1,2\}$ with ranking half $\D_b$ and inference half
$\D_{b'}$ (where $b':=3-b$)
of size $m_{b'}$; set $\beta=\alpha/2$ and condition on $\D_b$ throughout.
Let $\widehat M^{(b)}$ be the stopping index of fold $b$ from
Algorithm~\ref{alg:cfga} (a random function of both halves), with
$\widehat M^{(b)}=0$ when no $M$ meets the stopping rule. Set
$M^\dagger:=\widehat M^{(b)}\vee 1$, and define
$U_M:=V(M)/(1+m_{b'}p_M)$ as in Lemma~\ref{lem:lifted}.

Let
\[
\Phi := \frac{V(\widehat M^{(b)})}{R(\widehat M^{(b)})\vee 1}
\]
denote the fold-$b$ false-discovery proportion, with the convention
$V(0)=R(0)=0$ (no rejections) so that $\Phi=0$ on
$\{\widehat M^{(b)}=0\}$. On $\{\widehat M^{(b)}\ge 1\}$, the oracle
stopping rule $(1+m_{b'}p_M)/(R(M)\vee 1)\le\beta$ at
$M=\widehat M^{(b)}$ gives
\[
\Phi
=\frac{V(\widehat M^{(b)})}{1+m_{b'}\,p_{\widehat M^{(b)}}}\cdot
\frac{1+m_{b'}\,p_{\widehat M^{(b)}}}{R(\widehat M^{(b)})\vee 1}
\le \beta\,U_{\widehat M^{(b)}}.
\]
Combining the two events, $\Phi\le\beta\,U_{M^\dagger}$ everywhere (the
bound is trivial on $\{\widehat M^{(b)}=0\}$). Applying Lemma~\ref{lem:lifted}
to the backward stopping time $M^\dagger$,
\[
\E[\Phi\mid\D_b]\;\le\;\beta\,\E[U_{M^\dagger}\mid\D_b]\;\le\;\beta.
\]

The two folds reject disjoint subsets
($\mathcal R^{(1)}\subseteq\D_2$, $\mathcal R^{(2)}\subseteq\D_1$), so by the
elementary inequality
$\frac{a_1+a_2}{(b_1+b_2)\vee 1}\le\frac{a_1}{b_1\vee 1}+\frac{a_2}{b_2\vee 1}$,
$\FDR\le 2\beta = \alpha$.
\end{proof}

\begin{proof}[Proof of Theorem~\ref{thm:aug}]
Fix fold $b\in\{1,2\}$ with ranking half $\D_b$ (size $m_b$) and inference
half $\D_{b'}$ (size $m_{b'}$, $b':=3-b$), set $\beta=\alpha/2$, and
condition on $\D_b$ throughout. Let
$E=\{\widehat\pi_b^{(i),+}\ge\pi^{(i)}\ \forall i\}$, which is
$\sigma(\D_b)$-measurable. With the convention $V(0)=R(0)=0$, set
$V:=V(\widehat M^{(b)})$, $R:=R(\widehat M^{(b)})$ and
$\FDP_b:=V/(R\vee 1)$, which equals $0$ on $\{\widehat M^{(b)}=0\}$. Let
$U_M=V(M)/(1+m_{b'}p_M)$ as in Lemma~\ref{lem:lifted}, and write
$\bar V_{\mathrm{or}}(M):=m_{b'}p_M$ for the oracle null mass (the same
quantity as in Theorem~\ref{thm:cfga}). On $E$,
\[
\bar V(M):=m_{b'}\sum_i\widehat\pi_b^{(i),+}\nu(\widehat\Gamma_M^{(i,b)})
\;\ge\;\bar V_{\mathrm{or}}(M)\quad\text{for every }M.
\]

\emph{(i) On $E$.} On $\{\widehat M^{(b)}\ge 1\}$ the stopping rule gives
$(1+\bar V(\widehat M^{(b)}))/(R\vee 1)\le\beta$; by the dominance above,
$(1+\bar V_{\mathrm{or}}(\widehat M^{(b)}))/(R\vee 1)\le\beta$, and
\[
\FDP_b
=\frac{V}{1+\bar V_{\mathrm{or}}(\widehat M^{(b)})}\cdot
\frac{1+\bar V_{\mathrm{or}}(\widehat M^{(b)})}{R\vee 1}
\le\beta\,U_{\widehat M^{(b)}} .
\]
On $\{\widehat M^{(b)}=0\}\cap E$, $\FDP_b=0\le\beta\,U_{M^\dagger}$
trivially. Setting $M^\dagger:=\widehat M^{(b)}\vee 1$, the bound
$\FDP_b\,\1_E\le\beta\,U_{M^\dagger}$ holds everywhere, and
Lemma~\ref{lem:lifted} gives
\[
\E[\FDP_b\,\1_E\mid\D_b]\;\le\;\beta\,\E[U_{M^\dagger}\mid\D_b]\;\le\;\beta.
\]

\emph{(ii) Off $E$.} $\FDP_b\le 1$. Combining (i) and (ii) via
$\E[\FDP_b]=\E[\FDP_b\1_E]+\E[\FDP_b\1_{E^c}]$, the tower property
($E\in\sigma(\D_b)$) gives $\E[\FDP_b\1_E]\le\beta$ from (i) and
$\E[\FDP_b\1_{E^c}]\le\PP(E^c)$, so $\E[\FDP_b]\le\beta+\PP(E^c)$.

\emph{Concentration (the Hoeffding step).} It remains to bound $\PP(E^c)$.
Fix a node $i$ and write the ranking-half upper-tail count as
$X^{(i)}:=\#\{k\in\D_b:I_k=i,\,P_k>\lambda\}=\sum_{k\in\D_b}Y_k$
with $Y_k:=\1\{I_k=i,\,P_k>\lambda\}$, so
$\widehat\pi_b^{(i)}=X^{(i)}/(m_b(1-\lambda))$. Throughout this step we
work conditionally on the fold assignment $(Z_k)$ of
Algorithm~\ref{alg:cfga}, which fixes $m_b$ and hence the margin $\delta$
of \eqref{eq:delta} (both are assignment-measurable). As in the proof of
Lemma~\ref{lem:lifted}, the assignment is independent of the data, so
given $(Z_k)$ the $m_b$ hypotheses in $\D_b$ are i.i.d.\ from the model of
Assumption~\ref{ass:model}; the $Y_k$ are then i.i.d.\ $\{0,1\}$-valued,
and
\begin{align*}
\E[Y_k]\ge\PP\bigl(I_k=i,\ k\text{ null at node }i,\ P_k>\lambda\bigr)
=q^{(i)}r_0^{(i)}(1-\lambda)=\pi^{(i)}(1-\lambda),
\end{align*}
where the inequality drops the (nonnegative) non-null contribution and the
last line uses the label model with exactly uniform nulls. Hence
$\E[X^{(i)}]\ge m_b(1-\lambda)\pi^{(i)}$ and
$\E[\widehat\pi_b^{(i)}]\ge\pi^{(i)}$.

Because $\widehat\pi_b^{(i),+}=\widehat\pi_b^{(i)}+\delta$, the bad event
for node $i$ satisfies, with $t:=m_b(1-\lambda)\delta$,
\begin{align*}
\{\widehat\pi_b^{(i),+}<\pi^{(i)}\}
=\{X^{(i)}<m_b(1-\lambda)(\pi^{(i)}-\delta)\}
\subseteq\{X^{(i)}<\E[X^{(i)}]-t\},
\end{align*}
the inclusion using $m_b(1-\lambda)\pi^{(i)}\le\E[X^{(i)}]$. The summands
$Y_k$ are independent and take values in $[0,1]$, so Hoeffding's
lower-tail inequality gives
\[
\PP\big(X^{(i)}<\E[X^{(i)}]-t\big)\le e^{-2t^2/m_b}=e^{-2m_b(1-\lambda)^2\delta^2}.
\]
Substituting $\delta=(1-\lambda)^{-1}\sqrt{\log(2N/\eta)/(2m_b)}$ from
\eqref{eq:delta} makes the exponent exactly $-\log(2N/\eta)$, so
$\PP(\widehat\pi_b^{(i),+}<\pi^{(i)})\le\eta/(2N)$. A union bound over the
$N$ nodes gives $\PP(E^c\mid(Z_k))\le\eta/2$ for every realization of the
fold assignment, hence $\PP(E^c)\le\eta/2$ unconditionally, whence
$\E[\FDP_b]\le\beta+\eta/2$ per fold.

\emph{Fold combination.} The two folds reject within $\D_2$ and $\D_1$,
which are disjoint, so by the elementary inequality from the proof of
Theorem~\ref{thm:cfga}, $\FDP\le\FDP_1+\FDP_2$, hence
$\E[\FDP]\le\E[\FDP_1]+\E[\FDP_2]\le 2(\beta+\eta/2)=\alpha+\eta$.
\end{proof}

\section{Proof of Theorem~\ref{thm:ecfga}: derandomized e-value aggregation}
\label{app:ecfga}

\begin{proof}[Proof of Theorem~\ref{thm:ecfga}]
\emph{Null e-mass.} Fix a split $s$ and fold $b$, and condition on the
split assignment $Z^s$ and the ranking half $\D^s_b$. The fold's stopping
index $\widehat M_{s,b}$ is the level-$\gamma$ step-up on the inference
counts, hence a backward stopping time of the fold's filtration
(Lemma~\ref{lem:lifted} with $\beta$ replaced by $\gamma$; the lemma is
level-free). Summing the e-values over the nulls of the inference half,
\[
\sum_{k\ \mathrm{null}} e_k^{(s,b)}
=m^s_{b'}\,\frac{V^{s,b}(\widehat M_{s,b})}{1+m^s_{b'}\,p_{\widehat M_{s,b}}},
\qquad
\E\Bigl[\sum_{k\ \mathrm{null}} e_k^{(s,b)}\Bigm| Z^s,\D^s_b\Bigr]\le m^s_{b'}
\]
by Lemma~\ref{lem:lifted}. Taking expectations and summing the two folds
of split $s$ gives at most $m^s_1+m^s_2=m$; averaging over the $S$ splits
(linearity---no independence across splits is used),
\begin{equation}\label{eq:ecfga-mass}
\sum_{k\ \mathrm{null}}\E[\bar e_k]\;\le\;m .
\end{equation}

\emph{e-BH.} Let $R=k^\star$ be the number of e-BH rejections. Every
rejected $k$ satisfies $\bar e_k\ge\bar e_{(k^\star)}\ge m/(\alpha R)$,
i.e.\ $\1\{k\ \mathrm{rejected}\}\le\alpha R\,\bar e_k/m$, so
\[
\FDP=\sum_{k\ \mathrm{null}}\frac{\1\{k\ \mathrm{rejected}\}}{R\vee1}
\le\frac{\alpha}{m}\sum_{k\ \mathrm{null}}\bar e_k ,
\]
and \eqref{eq:ecfga-mass} gives $\FDR\le\alpha$. (This is the
generalized-e-value e-BH argument of \citet{WangRamdas2022}; it holds
under arbitrary dependence among the $\bar e_k$.)

\emph{Inflated plug-in.} Replace $m^s_{b'}p_M$ by the inflated mass
$\bar V(M)$ of Algorithm~\ref{alg:aug} computed at confidence level
$\eta/S$, and let $E_s$ be the event that the inflated rates dominate the
truth in both folds of split $s$; as in the proof of
Theorem~\ref{thm:aug}, $\PP(E_s^c)\le\eta/S$. On $\bigcap_sE_s$ the
inflated numerators dominate the oracle ones, so each $e_k^{(s,b)}$ is
dominated by its oracle counterpart and \eqref{eq:ecfga-mass} holds for
the dominating family; off the intersection bound $\FDP\le1$. Hence
$\FDR\le\alpha+\PP\bigl(\textstyle\bigcup_sE_s^c\bigr)\le\alpha+\eta$.
\end{proof}

\section{Proofs of Theorems~\ref{thm:bonus} and~\ref{thm:bonus-pernode}: BONuS-GA and the per-node budgeted variant}
\label{app:bonus}

Both proofs in this appendix are self-contained specializations of one
mechanism --- a backward supermartingale (an exact martingale away from
empty synthetic counts) over the progressive unmasking
of real-vs-synthetic identities, anchored by an elementary two-cell urn
identity --- applied to a \emph{single global} exchangeable pool for
Theorem~\ref{thm:bonus} and to $N$ \emph{per-node} pools for
Theorem~\ref{thm:bonus-pernode}. (The mechanism is the counting-knockoff
principle of \citet{BarberCandes2015,Yang2021BONuS}; we prove what we need
directly rather than importing their statements.) We isolate the urn
identity first, because it must hold \emph{conditionally on the masked
p-value locations}: in the budgeted variant the budget shares $\alpha_i$
are computed from pooled counts, so the expectation bound has to survive
that conditioning, under which the tail/head identity split is
hypergeometric rather than binomial.

\begin{lemma}[Two-cell urn identity]\label{lem:urn}
Partition an urn of $R$ red and $W$ white balls uniformly at random into a
cell $A$ of size $C_A$ and a cell $\Gamma$ of size $R+W-C_A$, and write
$V_A,W_A$ (resp.\ $V_\Gamma,W_\Gamma$) for the red/white counts in $A$
(resp.\ $\Gamma$). Then, with $v_{\min}:=\max(0,\,C_A-W)$,
\[
\E\!\left[\frac{W_A\,V_\Gamma}{(V_A+1)(W_\Gamma+1)}\right]
\;=\;1-\PP(V_A=v_{\min})\;\le\;1 .
\]
\end{lemma}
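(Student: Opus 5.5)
The plan is to reduce everything to the single random variable $V_A$ and prove the identity by telescoping against its hypergeometric law.

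\textbf{Step 1: reduce to one variable.} Given the totals $R,W,C_A$, the uniform partition makes the red count in cell $A$, $V_A$, hypergeometric:
\[
\PP(V_A=v)=\binom{R}{v}\binom{W}{C_A-v}\Big/\binom{R+W}{C_A}.
\]
Its support is $v_{\min}\le v\le v_{\max}:=\min(R,C_A)$. The other three counts are deterministic functions of $V_A$:
\[
W_A=C_A-V_A,\qquad V_\Gamma=R-V_A,\qquad W_\Gamma=W-C_A+V_A .
\]
So the integrand equals $g(V_A)$, where
\[
g(v)=\frac{(C_A-v)(R-v)}{(v+1)(W-C_A+v+1)}.
\]
The denominator is positive on the support, because $v\ge v_{\min}$ gives $W-C_A+v\ge0$.

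\textbf{Step 2: the key ratio identity.} Compute the ratio of consecutive hypergeometric weights directly from the binomial coefficients:
\[
\frac{\PP(V_A=v+1)}{\PP(V_A=v)}=\frac{R-v}{v+1}\cdot\frac{C_A-v}{W-C_A+v+1}=g(v).
\]
This holds for every $v$ in the support with $v<v_{\max}$. So $\PP(V_A=v)\,g(v)=\PP(V_A=v+1)$ on that range.

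At $v=v_{\max}$, either $C_A-v=0$ or $R-v=0$. Hence $g(v_{\max})=0$, which agrees with $\PP(V_A=v_{\max}+1)=0$.

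\textbf{Step 3: telescope.} Sum over the support:
\[
\E[g(V_A)]=\sum_{v=v_{\min}}^{v_{\max}}\PP(V_A=v+1)=\sum_{u=v_{\min}+1}^{v_{\max}}\PP(V_A=u)=1-\PP(V_A=v_{\min}).
\]
The bound $\le1$ is then immediate.

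\textbf{Main obstacle.} The only delicate part is the boundary bookkeeping. One must confirm that the ratio identity holds on the whole support, and that the endpoint terms vanish rather than producing $0/0$. It also needs checking that degenerate urns are covered by the same formula. If $R=0$ or $C_A\in\{0,R+W\}$, the support is a single point, so both sides equal $0$.
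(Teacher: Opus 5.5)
Your proof is correct and is essentially the paper's argument. The paper verifies the same identity in the form $\binom{R}{v}\binom{W}{C_A-v}\,g(v)=\binom{R}{v+1}\binom{W}{C_A-v-1}$, shifts the index to $j=v+1$, and uses Vandermonde's identity to sum the shifted weights to $1-\PP(V_A=v_{\min})$; this is exactly your step $\PP(V_A=v)\,g(v)=\PP(V_A=v+1)$ followed by the telescoping sum, with your explicit checks of $g(v_{\max})=0$ and of a positive denominator on the support making the boundary bookkeeping slightly more explicit than the paper's.
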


\begin{proof}
With $p_v:=\binom{R}{v}\binom{W}{C_A-v}/\binom{R+W}{C_A}$ the
hypergeometric pmf, $W_A=C_A-v$, $V_\Gamma=R-v$ and
$W_\Gamma=W-C_A+v$, so the summand weight is
\[
\binom{R}{v}\binom{W}{C_A-v}\,\frac{(C_A-v)(R-v)}{(v+1)(W-C_A+v+1)}
=\binom{R}{v+1}\binom{W}{C_A-v-1},
\]
using $\binom{R}{v+1}=\binom{R}{v}\frac{R-v}{v+1}$ and
$\binom{W}{C_A-v-1}=\binom{W}{C_A-v}\frac{C_A-v}{W-C_A+v+1}$. Summing
over the support $v\ge v_{\min}$ and substituting $j=v+1$,
\[
\E[\cdot]=\frac{1}{\binom{R+W}{C_A}}\sum_{j\ge v_{\min}+1}
\binom{R}{j}\binom{W}{C_A-j}
=1-\PP(V_A=v_{\min})
\]
by Vandermonde's identity, since the omitted terms $j\le v_{\min}$ reduce
to the single nonzero term $j=v_{\min}$.
\end{proof}

\begin{proof}[Proof of Theorem~\ref{thm:bonus}]
\emph{(a) One global urn.} Lift each real hypothesis to
$\widetilde X_k=(I_k,P_k)$. Under Assumption~\ref{ass:model} a real null
has lifted law $G_0$ --- the distribution on $\{1,\dots,N\}\times[0,1]$
that places mass $\pi^{(i)}/\pi_0$ on label $i$ and is
$\mathrm{Uniform}[0,1]$ in the p-value --- while a real non-null has an
arbitrary law on the lifted space. Let
$H_0:=\{k:k\text{ is a real null}\}$ and $n_0:=|H_0|$ (not to be confused
with the per-node $\Hzero^{(i)}$ of Section~\ref{sec:setting}). By
construction the oracle synthetic bag (Step~2 of
Algorithm~\ref{alg:bonus} with $w^{(i)}=\pi^{(i)}$) is i.i.d.\ from
$G_0$, so the $n_0+\tilde m$ masked lifted points (real nulls $+$
synthetics) are i.i.d.\ from the \emph{common} law $G_0$ given $H_0$ and
the non-null points. Condition throughout on the identity-blind
$\sigma$-field
\[
\mathcal M:=\sigma\bigl(H_0,\ \{\widetilde X_k\}_{k\notin H_0},\
\text{all masked lifted locations}\bigr):
\]
given $\mathcal M$, the identity assignment is a single uniform urn ---
$n_0$ ``real'' labels among the $n_0+\tilde m$ masked locations. The
nested family $\{\widehat\Gamma_M\}$ is a function of the pooled
per-interval counts, which are location functions, hence
$\mathcal M$-measurable and identity-blind (under the oracle tier the
grid $\widehat L^{(i)}=\varepsilon/(q^{(i)}r_0^{(i)})$ is deterministic,
so \eqref{eq:F-mask-def} suffices; with the pooled-Storey grid one uses
\eqref{eq:F-mask-def-ext}, likewise identity-blind and
$\mathcal M$-measurable).

\emph{(b) Algebraic identity.} With the lifted tail
$\widetilde A:=\bigcup_i\{i\}\times(\lambda,1]$ (we keep writing $A$),
plug \eqref{eq:bonus-fdp} into
\[
\frac{V(\widehat M)}{R(\widehat M)\vee 1}
=\widehat\FDP(\widehat M)\cdot
\underbrace{\frac{V(\widehat M)}{\widetilde N(\widehat\Gamma_{\widehat M})+1}\cdot
\frac{\widetilde N(A)}{N(A)+1}}_{=:Q}.
\]
At stopping $\widehat\FDP(\widehat M)\le\alpha$, so
$\FDP\le\alpha\,Q$ on $\{\widehat M\ge1\}$, and $\FDP=0$ on
$\{\widehat M=0\}$; hence $\FDR\le\alpha\,\E[Q]$ with the convention
$Q:=0$ on $\{\widehat M=0\}$.

\emph{(c) Backward martingale and the urn bound.} Define the backward
filtration $(\F_M)$ containing $\mathcal M$, the tail identity split
$(V(A),\widetilde N(A))$, and the region splits
$(V(\widehat\Gamma_{M'}),\widetilde N(\widehat\Gamma_{M'}))$ for
$M'\ge M$. The step-up index $\widehat M$ is a function of the
$\F_M$-measurable quantities $\widetilde N(\widehat\Gamma_{M'})$,
$R(M')=N(\widehat\Gamma_{M'})$ (the non-null contribution is
$\mathcal M$-known and the null contribution is the revealed
$V(M')$), and the tail counts --- hence a backward stopping time of
$(\F_M)$. Conditional on $\F_{M+1}$ the identities inside
$\widehat\Gamma_{M+1}$ remain a uniform urn, so peeling one
$\mathcal M$-measurable masked location at a time shows that
$U_M:=V(\widehat\Gamma_M)/(\widetilde N(\widehat\Gamma_M)+1)$ is a
backward supermartingale, and an exact martingale whenever
$\widetilde N\ge1$: if the peeled point is real with
conditional probability $q=V/(V+\widetilde N)$,
\[
q\,\frac{V-1}{\widetilde N+1}+(1-q)\,\frac{V}{\widetilde N}
=\frac{V\bigl[(V-1)+(\widetilde N+1)\bigr]}{(V+\widetilde N)(\widetilde N+1)}
=\frac{V}{\widetilde N+1}\qquad(\widetilde N\ge1).
\]
When $\widetilde N=0$ every remaining masked point is real ($q=1$) and the
value moves from $V$ to $V-1$, a strict decrease; the supermartingale
inequality thus also holds at this boundary, in the direction that only
lowers the stopped expectation.
Optional stopping for the bounded backward stopping time
$\widehat M\vee1$ gives, with the tail factor
$w:=\widetilde N(A)/(N(A)+1)$ revealed at backward time $M_{\max}$
already and $N(A)\ge V(A)$,
\[
\E[Q\mid\mathcal M]
\le\E\bigl[w\,U_{M_{\max}}\bigm|\mathcal M\bigr]
\le\E\!\left[\frac{V(\widehat\Gamma_{M_{\max}})}
{\widetilde N(\widehat\Gamma_{M_{\max}})+1}\cdot
\frac{\widetilde N(A)}{V(A)+1}\Bigm|\mathcal M\right]\le 1 ,
\]
the first step by optional stopping for the backward supermartingale (an
equality whenever the synthetic count stays positive along the path), and
the last step by Lemma~\ref{lem:urn}: conditional on $\mathcal M$ and on
the masked totals inside $\widehat\Gamma_{M_{\max}}\cup A$ (the two sets
are disjoint since $\widehat\Gamma_{M_{\max}}\subseteq[0,\lambda]$ at
every node), the identity split between the two cells is exactly the
uniform urn of the lemma --- the exchangeable assignment marginalizes
over the remaining masked locations --- and the lemma's bound holds for
every value of the totals, hence after the tower property. Combining
with (b), $\FDR\le\alpha\,\E[\E[Q\mid\mathcal M]]\le\alpha$.
\end{proof}

\begin{remark}[Per-node decomposition is not required]
A natural concern is that ``a high-$q^{(i)}$ node decides the global tail
ratio $N(A)/\widetilde N(A)$ without contributing to $\widehat\Gamma_M$.''
This does not break step (b): under the oracle allocation the synthetic
mass at every node mirrors the real-null mass at that node, so
$N(A)/\widetilde N(A)$ remains a consistent estimator of $\pi_0$ even when
$\widehat\Gamma_M$ is geographically concentrated; the hypergeometric step
above requires only \emph{combined} exchangeability of the masked nulls,
not per-node decomposition.
\end{remark}

\begin{proof}[Proof of Theorem~\ref{thm:bonus-pernode}]
\emph{(a) Per-node exchangeability, for any allocation, given the masked
data.} Let
\[
\mathcal M:=\sigma\Bigl(H_0,\ \{\widetilde X_k\}_{k\notin H_0},\
\{\tilde m^{(i)}\},\ \text{all pooled p-value locations}\Bigr)
\]
be the identity-blind $\sigma$-field: it reveals which hypotheses are
null, the non-null points, the bag sizes (data-independent by
assumption), and the \emph{locations} of all masked (real-null and
synthetic) p-values at every node---but not which masked point is which.
The nested family $\{\widehat\Gamma_M\}$, the budget shares
$\{\alpha_i\}$, and the candidate-node filter are all
$\mathcal M$-measurable (they are functions of pooled counts).
Conditional on $\mathcal M$, the identity assignment at node $i$ is a
uniform urn---$n_0^{(i)}$ real labels among the $n_0^{(i)}+\tilde m^{(i)}$
masked locations, because real nulls and synthetics are i.i.d.\
$\mathrm{Uniform}[0,1]$ whatever the bag sizes are---and the urns are
independent across nodes. No cross-node structure is needed---this is why
the allocation is free.

\emph{(b) Pointwise budget decomposition.} Write
$\widehat W^{(i)}(M):=\frac{N^{(i)}(A)+1}{\widetilde N^{(i)}(A)}
\bigl(\widetilde N^{(i)}(\widehat\Gamma_M^{(i)})+1\bigr)$ and
\[
Q_i:=\frac{V^{(i)}(\widehat\Gamma_{\widehat M}^{(i)})}
{\widetilde N^{(i)}(\widehat\Gamma_{\widehat M}^{(i)})+1}\cdot
\frac{\widetilde N^{(i)}(A)}{N^{(i)}(A)+1},
\]
with $Q_i:=0$ when $\widehat\Gamma_{\widehat M}^{(i)}=\emptyset$ or
$\widetilde N^{(i)}(A)=0$. On $\{\widehat M\ge1\}$ the stopping rule
\eqref{eq:bonus-budget} gives
$\widehat W^{(i)}(\widehat M)\le\alpha_i(R(\widehat M)\vee1)$ for every
active node, whence
\[
\FDP
=\sum_{i\in\mathcal N(\widehat M)}\frac{V^{(i)}(\widehat\Gamma_{\widehat M}^{(i)})}{R(\widehat M)\vee1}
=\sum_{i\in\mathcal N(\widehat M)}\frac{\widehat W^{(i)}(\widehat M)}{R(\widehat M)\vee1}\,Q_i
\le\sum_{i=1}^N\alpha_i\,Q_i ,
\]
and the bound extends trivially to $\{\widehat M=0\}$, where $\FDP=0$.

\emph{(c) Per-node optional stopping, conditional on $\mathcal M$.} Fix
$i$ and define the joint backward filtration $(\F_M)$ that contains
$\mathcal M$, the tail identity splits
$(V^{(j)}(A),\widetilde N^{(j)}(A))$ at every node, and, for $M'\ge M$,
the identity splits inside $\widehat\Gamma_{M'}^{(j)}$ at every node $j$.
The index $\widehat M$ of \eqref{eq:bonus-budget} is a function of
$\{R(M')\}_{M'\ge M}$, the per-node region splits at $M'\ge M$, the tail
splits, and the $\mathcal M$-measurable shares---all
$\F_M$-measurable---hence a backward stopping time of $(\F_M)$.
Conditional on $\F_{M+1}$, node $i$'s identities inside
$\widehat\Gamma_{M+1}^{(i)}$ remain a uniform urn (the other nodes'
reveals are independent of it), so peeling one $\mathcal M$-measurable
point at a time shows that
$U_M^{(i)}:=V^{(i)}(\widehat\Gamma_M^{(i)})/
(\widetilde N^{(i)}(\widehat\Gamma_M^{(i)})+1)$ is a backward
supermartingale, and an exact martingale whenever $\widetilde N\ge1$:
if the peeled point is real with conditional
probability $q=V/(V+\widetilde N)$,
\[
q\,\frac{V-1}{\widetilde N+1}+(1-q)\,\frac{V}{\widetilde N}
=\frac{V\bigl[(V-1)+(\widetilde N+1)\bigr]}{(V+\widetilde N)(\widetilde N+1)}
=\frac{V}{\widetilde N+1}\qquad(\widetilde N\ge1).
\]
When $\widetilde N=0$ every remaining masked point is real ($q=1$) and the
value moves from $V$ to $V-1$, a strict decrease; the supermartingale
inequality thus also holds at this boundary, in the direction that only
lowers the stopped expectation.
Optional stopping for the bounded backward stopping time $\widehat M$
gives, with the tail factor
$w_i:=\widetilde N^{(i)}(A)/(N^{(i)}(A)+1)$ revealed at backward time
$M_{\max}$ already,
\[
\E\bigl[Q_i\bigm|\mathcal M\bigr]
\le\E\Bigl[w_i\,U^{(i)}_{M_{\max}}\Bigm|\mathcal M\Bigr]
\le\E\!\left[\frac{V^{(i)}(\widehat\Gamma^{(i)}_{M_{\max}})}
{\widetilde N^{(i)}(\widehat\Gamma^{(i)}_{M_{\max}})+1}\cdot
\frac{\widetilde N^{(i)}(A)}{V^{(i)}(A)+1}\Bigm|\mathcal M\right]
\le 1 ,
\]
where the first step is optional stopping for the backward supermartingale
(an equality whenever the synthetic count stays positive along the path),
the middle inequality uses $N^{(i)}(A)\ge V^{(i)}(A)$, and the last
is Lemma~\ref{lem:urn}: conditional on $\mathcal M$ and on the masked
totals inside $\widehat\Gamma^{(i)}_{M_{\max}}\cup A$, the identity split
between the two cells is exactly the uniform urn of the lemma (the
exchangeable assignment marginalizes over the remaining masked points),
and the lemma's bound holds for every value of the totals, hence after
the tower property.

\emph{(d) Combine.} The shares $\alpha_i$ are $\mathcal M$-measurable, so
taking $\E[\,\cdot\mid\mathcal M\,]$ in (b) before integrating,
\[
\FDR\le\E\Bigl[\sum_i\alpha_i\,\E[Q_i\mid\mathcal M]\Bigr]
\le\E\Bigl[\sum_i\alpha_i\Bigr]\le\alpha .
\]
The conditional route is what licenses \emph{data-dependent}
(masking-measurable) budget shares: an unconditional bound
$\E[Q_i]\le1$ would not suffice, since $\alpha_i$ and $Q_i$ are
correlated through the pooled counts.
\end{proof}

\section{Proof of Proposition~\ref{prop:wc-bias-body}}
\label{app:wc-bias-body}

All expectations below are conditional on $\mathcal G$; on $\mathcal E$
the implicit constants depend only on $\sup_i\|f^{(i)}\|_\infty$,
$r_{\min}$, and $\lambda$.

\emph{Step 1: reduce the bias to a worst-subset deviation.}
For each fixed subset $S\subseteq\Aset$ with $|S|=M$ define the centered
subset sum
$U_S:=\sum_{(i,j)\in S}\bigl(\widehat h_j^{(i)}-h_j^{(i)}\bigr)$;
by \eqref{eq:body-wc-h-pop} (with
$\mathcal G=\sigma(\{(m^{(i)},N_>^{(i)})\})$ from \S\ref{sec:gap-setup};
the grid is $\mathcal G$-measurable) this
equals $\sum_S(\widehat h_j^{(i)}-\E[\widehat h_j^{(i)}\mid\mathcal G])$
\emph{exactly}, with no $o(1)$ slippage. Because $\widehat S(M)$ maximizes
$\sum_S \widehat h_j^{(i)}$ over $|S|=M$,
\begin{equation}
\bar h_M-\bar h_M^{\mathrm{sel}}\;=\;\frac{1}{M}\,U_{\widehat S(M)}
\;\le\;\frac{1}{M}\max_{|S|=M}U_S ,
\label{eq:wc-bias-step}
\end{equation}
where
$\bar h_M^{\mathrm{sel}}:=M^{-1}\sum_{(i,j)\in\widehat S(M)}h_j^{(i)}$.

\emph{Step 2: each $U_S$ is sub-gamma with parameters $(Mv,c)$.}
Disjointness of the within-node grid lets us collapse each $S$ to a
union-of-cells problem at every node: with
$U_i(S):=\bigcup_{j:\,(i,j)\in S}\widehat I_j^{(i)}\subseteq(0,\lambda]$,
\begin{equation*}
\sum_{(i,j)\in S}\widehat h_j^{(i)}
=(\varepsilon m)^{-1}\sum_i\sum_{k:\xi_k=0}\1\bigl\{P_k^{(i)}\in U_i(S)\bigr\},
\end{equation*}
a sum of independent $[0,(\varepsilon m)^{-1}]$-valued indicators. Each
cell contributes null mass at most $\widetilde p_j^{(i)}$, so by
additivity over $S$
\begin{equation*}
\begin{aligned}
\operatorname{Var}(U_S\mid\mathcal G)
&\le(\varepsilon m)^{-2}\!\!\sum_{(i,j)\in S}\!\!(m^{(i)}-N_>^{(i)})\widetilde p_j^{(i)}\\
&=\Theta\!\bigl(M(\varepsilon m)^{-1}\bigr),
\end{aligned}
\end{equation*}
and Bernstein (as in \S\ref{sec:gap-conc}) gives
$U_S\mid\mathcal G\in\Gamma_+(Mv,c)$ with $v=\Theta(1/(\varepsilon m))$ and
$c=1/(3\varepsilon m)$.

\emph{Step 3: take the maximum over $\binom{|\Aset|}{M}\le(e|\Aset|/M)^M$
subsets.} The displayed binomial bound is the elementary Stirling-type
inequality $\binom{n}{k}\le(en/k)^k$ (from $k!\ge(k/e)^k$). We use the
sub-gamma maximal inequality \citep[Cor.~2.6]{BoucheronLugosiMassart2013}:
if $Z_1,\dots,Z_K$ are each sub-gamma with parameters $(v',c')$ then
$\E[\max_\ell Z_\ell]\le\sqrt{2v'\log K}+c'\log K$. Applied with
$(v',c')=(Mv,c)$ and $K=\binom{|\Aset|}{M}\le(e|\Aset|/M)^M$,
\begin{equation*}
\E\!\bigl[\max_{|S|=M}U_S\bigm|\mathcal G\bigr]
\le M\sqrt{2v\log(e|\Aset|/M)}+cM\log(e|\Aset|/M) .
\end{equation*}
Dividing by $M$ and combining with \eqref{eq:wc-bias-step} gives
\eqref{eq:wc-Delta-body} on $\mathcal E$ with $C_1,C_2>0$ depending only on
$\sup_i\|f^{(i)}\|_\infty$, $r_{\min}$, and $\lambda$ through the constants
implicit in $v=\Theta(1/(\varepsilon m))$ and $c=1/(3\varepsilon m)$ above.

\emph{Step 4: pass to the unconditional bound.}
It remains to estimate the complement event
$\mathcal E^c=\{\inf_i\widehat r_0^{(i)}<r_{\min}\}$. Fix node $i$ and let
$X_k^{(i)}:=\1\{P_k^{(i)}>\lambda\}$. Under
Assumption~\ref{ass:model}, $(X_k^{(i)})_{k=1}^{m^{(i)}}$ are i.i.d.\
Bernoulli with mean
\begin{equation*}
\mu_i:=\E X_k^{(i)}
=r_0^{(i)}(1-\lambda)+r_1^{(i)}\!\!\int_\lambda^1\!\!f_i
\;\ge\;r_0^{(i)}(1-\lambda),
\end{equation*}
because non-nulls only add probability mass on $(\lambda,1]$. The
truncated Storey estimator
$\widehat r_0^{(i)}=\bigl(1+\sum_k X_k^{(i)}\bigr)/\bigl((1-\lambda)m^{(i)}\bigr)$
is therefore a $+1$-offset version of the sample mean
$\bar X^{(i)}:=m^{(i)-1}\sum_k X_k^{(i)}$, and the deviation event
rewrites as
\begin{equation*}
\bigl\{\widehat r_0^{(i)}<r_{\min}\bigr\}
=\bigl\{\bar X^{(i)}<(1-\lambda)r_{\min}-1/m^{(i)}\bigr\}.
\end{equation*}
Fix $r_{\min}\in(0,\inf_i r_0^{(i)})$. Then $\mu_i-(1-\lambda)r_{\min}\ge(1-\lambda)(r_0^{(i)}-r_{\min})>0$, and Hoeffding's one-sided
inequality for the average of $m^{(i)}$ independent $[0,1]$-valued summands
\citep[Thm.~2.8]{BoucheronLugosiMassart2013} yields
\begin{equation*}
\PP\bigl(\bar X^{(i)}\le\mu_i-t\bigr)\le e^{-2m^{(i)}t^2},\qquad t>0.
\end{equation*}
Applying this with $t=(1-\lambda)(\inf_i r_0^{(i)}-r_{\min})$ (positive
by choice of $r_{\min}$; the additional $1/m^{(i)}$ slack from the
$+1$-offset only tightens the bound) gives
\begin{equation*}
\PP\bigl(\widehat r_0^{(i)}<r_{\min}\,\big|\,m^{(i)}\bigr)
\le\exp\!\bigl(-2m^{(i)}(\inf_i r_0^{(i)}-r_{\min})^2(1-\lambda)^2\bigr) .
\end{equation*}
Union-bounding across the $N$ nodes, conditionally on the node sizes,
\begin{equation*}
\PP\bigl(\mathcal E^c\,\big|\,\{m^{(i)}\}\bigr)\le N\exp\bigl(-c_\star\min_i m^{(i)}\bigr),\qquad
c_\star:=2(\inf_i r_0^{(i)}-r_{\min})^2(1-\lambda)^2 .
\end{equation*}
Averaging over the multinomial node sizes and using the binomial lower
tail $\PP(m^{(i)}<q^{(i)}m/2)\le\exp(-q^{(i)}m/8)$,
\begin{equation*}
\PP(\mathcal E^c)\le N\exp\bigl(-c_\star\,q_{\min}m/2\bigr)
+N\exp\bigl(-q_{\min}m/8\bigr),\qquad q_{\min}:=\min_i q^{(i)},
\end{equation*}
which is exponentially small whenever $\log N=o(q_{\min}m)$.
\hfill$\square$

\section{Proof of Proposition~\ref{prop:wc-lower}}
\label{app:wc-lower}

\paragraph{Strategy.} The argument has three pieces. \emph{(i)}~Each
multinomial cell carries a Gaussian-strength lower tail (Slud's
inequality, \citealp{Slud1977}, quoted below as Theorem~\ref{thm:slud}).
\emph{(ii)}~The exceedance indicators of the $K$ cells, although
dependent through the multinomial conservation $\sum_j N(\widehat I_j)\le n_h$,
form a negatively associated family
\citep{JoagDevProschan1983}, and the
Chernoff--Hoeffding lower-tail bound still applies to their sum
\citep{DubhashiRanjan1998}; so the number of
exceedances concentrates around its mean $Kq$, where $q$ is the
per-cell exceedance probability from~(i).
\emph{(iii)}~On the event that at least $M$ cells exceed the threshold
$\tau$, the average of the top-$M$ counts is itself bounded below by
$\tau$, which translates the exceedance count into the claimed bias.

Throughout, condition on the grid $\mathcal G$ (equivalently, on the tail
count $N_>:=N_>^{(1)}$; $N=1$ so we drop node superscripts), and write
$n_h=m-N_>$ for the head count, $\widehat L=\widehat L^{(1)}$,
$K=K^{(1)}=\lfloor\lambda/\widehat L\rfloor$.

\paragraph{Grid event.} Under the global null
$N_>\sim\Bin(m,1-\lambda)$, so by Hoeffding the event
$\mathcal E_g=\{|N_>-(1-\lambda)m|\le \tfrac12\min(\lambda,1-\lambda)\,m\}$
has probability at least $1-2e^{-m\min(\lambda,1-\lambda)^2/2}$. Set
$\mu_*:=\tfrac12\min(\lambda,1-\lambda)\,m$. We claim that on $\mathcal E_g$,
the head count, the truncated Storey estimate, the grid spacing, and the
candidate count all sit in narrow ranges.

\smallskip
\emph{(i)~Head count $n_h\ge\lambda m/2$.}
The upper deviation bound is $N_>\le(1-\lambda)m+\mu_*$, so
$n_h=m-N_>\ge\lambda m-\mu_*$. We check the two regimes:
\begin{itemize}
\item If $\lambda\le\tfrac12$, then $\mu_*=\lambda m/2$ and
$n_h\ge\lambda m-\lambda m/2=\lambda m/2$.
\item If $\lambda>\tfrac12$, then $\mu_*=(1-\lambda)m/2\le\lambda m/2$ and
$n_h\ge\lambda m-(1-\lambda)m/2=(3\lambda-1)m/2\ge\lambda m/2$
(since $3\lambda-1\ge\lambda$ iff $\lambda\ge\tfrac12$).
\end{itemize}

\emph{(ii)~Storey estimate $\widehat r_0\in[\tfrac12,1]$.} The lower
deviation bound is $N_>\ge(1-\lambda)m-\mu_*$, so
$\widehat r_0=\frac{1+N_>}{(1-\lambda)m}\ge\frac{N_>}{(1-\lambda)m}
\ge1-\frac{\mu_*}{(1-\lambda)m}$. Substituting $\mu_*$ in the same two
regimes shows the lower bound is at least $\tfrac12$ in each:
$\lambda\le\tfrac12$ gives $\frac{N_>}{(1-\lambda)m}\ge\frac{1-3\lambda/2}{1-\lambda}\ge\tfrac12$;
$\lambda>\tfrac12$ gives $\frac{N_>}{(1-\lambda)m}\ge\tfrac12$ directly.
Truncation to $[0,1]$ then forces $\widehat r_0\in[\tfrac12,1]$.

\emph{(iii)~Grid spacing and candidate count.} From (ii),
$\widehat L=\varepsilon/\widehat r_0\in[\varepsilon,2\varepsilon]$, and
hence $K=\lfloor\lambda/\widehat L\rfloor\ge\lambda/\widehat L-1
\ge\lambda/(2\varepsilon)-1\ge\lambda/(4\varepsilon)$
once $\lambda/(4\varepsilon)\ge 1$, which holds since
$\varepsilon\le\lambda/8$ by the hypothesis of the proposition. The role of $\mathcal E_g$ is to give us a
deterministic grid: on it, $\widehat L$ is bounded above and below by
constants times $\varepsilon$, $K$ is at least $\Theta(1/\varepsilon)$, and
the head count $n_h$ is at least $\lambda m/2$, allowing us to handle
$\widetilde p$, $\sigma$, and $K$ as if they were deterministic.

\paragraph{Equal means.} By the head/tail decomposition in the proof of
Lemma~\ref{lem:binomial}, conditional on $\mathcal G$ the head p-values are
$n_h$ i.i.d.\ draws from $\mathrm{Uniform}[0,\lambda]$, so the interval
counts $(N(\widehat I_1),\dots,N(\widehat I_K))$ are jointly multinomial
cells with the \emph{common} cell probability $\widetilde p=\widehat L/\lambda$
and marginals $\Bin(n_h,\widetilde p)$ (the remaining mass
$1-K\widetilde p$ is carried by a ``remainder'' cell collecting head
p-values that fall outside the $K$ candidate intervals). In particular the
conditional means \eqref{eq:body-wc-h-pop} are equal across $j$,
$h_j\equiv h=n_h\widetilde p/(\varepsilon m)\le 2/\lambda=:h_{\max}$, and
since the greedy's selected set is exactly the top-$M$ counts,
$\Delta(M\mid\mathcal G)=\E[\bar h_M\mid\mathcal G]-h$ with $\bar h_M$ the
average of the $M$ largest $\widehat h_j$.

\paragraph{Step 1: Gaussian lower tail for each cell.}
The marginal $N(\widehat I_j)\sim\Bin(n_h,\widetilde p)$ admits a
\emph{lower} tail at the normal rate by Slud's inequality. We quote the
result we need:

\begin{theorem}[{\citet[Thm.~2.1]{Slud1977}}]\label{thm:slud}
Let $X\sim\Bin(n,p)$ with $0<p\le 1/4$. Then for every integer $k$ with
$np\le k\le n(1-p)$,
\begin{equation*}
\PP(X\ge k)\;\ge\;1-\Phi\!\Bigl(\tfrac{k-np}{\sqrt{np(1-p)}}\Bigr) ,
\end{equation*}
where $\Phi$ is the standard normal cdf.
\end{theorem}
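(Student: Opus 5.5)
Since this is a classical result quoted from \citet{Slud1977}, the paper can simply cite it. If a self-contained argument were wanted, I would follow Slud's strategy: fix $n$ and the integer $k$, and treat both sides as functions of the success probability $p$. Write
\[
B(p):=\PP_p(X\ge k),\qquad
G(p):=1-\Phi\bigl(z(p)\bigr),\qquad
z(p):=\frac{k-np}{\sqrt{np(1-p)}} .
\]
The admissible set is $\{p\le 1/4:\ np\le k\le n(1-p)\}$, i.e.\ $p\in[p_-,p_+]$ with $p_+:=\min(k/n,1/4)$ and $p_-$ determined by $k\le n(1-p)$ (empty cases are vacuous). The goal is $D(p):=B(p)-G(p)\ge0$ on this interval. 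The two derivatives are explicit:
\begin{itemize}
\item $B'(p)=n\,b(k-1;n-1,p)$, the standard binomial tail-derivative identity;
\item $G'(p)=-\varphi(z)\,z'(p)$, with $-z'(p)=\dfrac{n\cdot np(1-p)+(k-np)\,n(1-2p)/2}{(np(1-p))^{3/2}}>0$.
\end{itemize}
Both sides therefore increase in $p$, and the comparison reduces to comparing a binomial point mass with a Gaussian density times a Jacobian.

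\emph{Step 1 (anchor at the top endpoint).} If $p_+=k/n$, then $z=0$ and $G=1/2$. Since $np=k$ is an integer, it is the median of $X$, so $B\ge 1/2$ by the Kaas--Buhrman median property. If instead $p_+=1/4<k/n$, I would anchor at the lower endpoint instead, or compare directly at $p=1/4$ using the same local estimates as in Step 2.

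\emph{Step 2 (single-crossing / derivative comparison).} I would show that $D$ cannot have an interior negative minimum. Concretely, I would show that on $\{D<0\}$ one has $B'(p)\ge G'(p)$ in a suitable ratio form: $D$ may change sign at most once in a fixed direction, and it is nonnegative at the endpoints. This is the main obstacle. I would attack it by bounding $b(k-1;n-1,p)$ from below with Stirling's formula with explicit error, e.g.\ Robbins' bounds $\sqrt{2\pi n}(n/e)^n e^{1/(12n+1)}\le n!\le\sqrt{2\pi n}(n/e)^n e^{1/(12n)}$. This writes the point mass as $\varphi$ of a Kullback--Leibler-type exponent, $\exp\{-n\,\mathrm{KL}(k/n\,\|\,p)\}$. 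I would then compare $n\,\mathrm{KL}(k/n\|p)$ with $z^2/2$. The restriction $p\le1/4$ is what makes the binomial right-skewed enough that the KL exponent grows more slowly than the Gaussian one for $k\ge np$. Concretely, the third-order term of the KL expansion has the favourable sign when $1-2p\ge\frac12$. The range condition $k\le n(1-p)$ keeps the deviation in the regime where this expansion, and the Stirling errors, stay controlled.

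\emph{Step 3 (small $n$ or degenerate cases).} Extreme cases, such as $k=n(1-p)$ at the boundary or very small $np$, may escape the asymptotic inequalities of Step 2. For these I would use a direct Poisson comparison, namely that $\PP(X\ge k)$ dominates the corresponding Poisson tail for $p$ small, together with explicit checks.

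I expect Step 2, the uniform density-versus-point-mass comparison with explicit constants, to carry essentially all the difficulty. This is why, in the paper, quoting \citet[Thm.~2.1]{Slud1977} is the appropriate route.
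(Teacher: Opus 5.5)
Your main recommendation is exactly what the paper does. Slud's inequality is quoted without proof and used only once, at $k=\lceil\tau\rceil$ in the proof of Proposition~\ref{prop:wc-lower}, so on the substantive point there is nothing to add.

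The self-contained fallback, however, would not go through as outlined, and the obstruction is structural rather than a matter of constants.

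\textbf{Setup.} The constraints $p\le1/4$, $np\le k$ and $k\le n(1-p)$ are all \emph{upper} bounds on $p$. The admissible set is therefore $(0,p^*]$ with $p^*=\min\{1/4,\,k/n,\,1-k/n\}$. There is no $p_-$, and your case split misses $p^*=1-k/n$ (when $k>3n/4$). The only anchors available for free are $D(0^+)=0$ and, when $p^*=k/n$, the median bound.

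\textbf{Step 2 fails near the top anchor.} Since $B'(p)=\tfrac{k}{p}\,b(k;n,p)$, at $p=k/n$ Robbins' bounds give $b(k;n,k/n)=\sqrt{n/(2\pi k(n-k))}\,e^{r_n-r_k-r_{n-k}}$ with $r_n<\tfrac{1}{12n}<\tfrac{1}{12k+1}<r_k$. Hence $B'(k/n)<n/\sqrt{2\pi np(1-p)}=G'(k/n)$. So $D$ is strictly decreasing into its anchor, and no Stirling lower bound on $b(k-1;n-1,p)$ can beat $G'$ there.

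More fundamentally, Step~2 only lower-bounds $b(k-1;n-1,p)$, so it never uses the hypothesis $D<0$. Exploiting that hypothesis would require an extra relation between the tail $B$ and the point mass $B'$, which your plan lacks. The alternative is a single-crossing property of $B'/G'$ (which tends to $+\infty$ as $p\downarrow0$) together with $D(p^*)\ge0$. That route stalls when $p^*\in\{1/4,\,1-k/n\}$, because $D(p^*)\ge0$ is then itself an instance of the theorem. Switching to the lower anchor does not close that case. This is the missing idea.

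Two further claims are incorrect.

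\textbf{The exponent is not where $p\le1/4$ enters.} On $[p,1-p]$ one has $x(1-x)\ge p(1-p)$. So $x\mapsto(x-p)^2/(2p(1-p))-\mathrm{KL}(x\,\|\,p)$ is convex with a double zero at $x=p$. This gives $n\,\mathrm{KL}(k/n\,\|\,p)\le z^2/2$ on the whole stated range, for every $p\le1/2$. Likewise, the cubic Taylor coefficient $-(1-2p)/(6p^2(1-p)^2)$ is negative for all $p<1/2$, so its sign does not single out $p\le1/4$. It is the cap $k\le n(1-p)$ that controls the exponent. The delicate part is the polynomial prefactor, as the computation at $p=k/n$ shows.

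\textbf{The Poisson comparison in Step~3 runs the wrong way.} For $k\ge np+1$ the $\mathrm{Poi}(np)$ upper tail exceeds the binomial one (Anderson and Samuels, 1967). For instance, $n=10$, $p=0.1$, $k=3$ lies inside the theorem's range and gives $\PP(X\ge3)\approx0.070<0.080\approx\PP(\mathrm{Poi}(1)\ge3)$. That comparison therefore yields upper bounds, not lower bounds. Even a correctly oriented comparison would still need a Poisson--normal lower bound of the same type as the theorem itself.
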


To apply this in our setting, observe first that the proposition's
hypothesis $\varepsilon\le\lambda/8$, combined with the Grid-event bound
$\widehat L\le 2\varepsilon$, gives
$\widetilde p=\widehat L/\lambda\le 2\varepsilon/\lambda\le 1/4$, so
Slud's hypothesis $p\le 1/4$ is satisfied. Define
$\sigma^2:=n_h\widetilde p(1-\widetilde p)$, the variance of one cell.
Since $\widetilde p\le 1/4$, $\sigma^2\ge\tfrac34 n_h\widetilde p$, and the
lower bound $n_h\widetilde p\ge(\lambda m/2)(\varepsilon/\lambda)=\varepsilon m/2$
(valid on $\mathcal E_g$) gives
\begin{equation*}
\sigma^2\ge \tfrac{3}{8}\,\varepsilon m .
\end{equation*}
By Theorem~\ref{thm:slud}, for every integer $k$ with
$n_h\widetilde p\le k\le n_h(1-\widetilde p)$,
\begin{equation}\label{eq:slud-applied}
\PP\bigl(N(\widehat I_j)\ge k\bigr)\;\ge\;1-\Phi\Bigl(\tfrac{k-n_h\widetilde p}{\sigma}\Bigr).
\end{equation}

\paragraph{Step 2: Concentration of the number of exceedances.}
Choose
\begin{equation*}
t:=\sqrt{\log(K/M)},\qquad \tau:=n_h\widetilde p+t\sigma ,
\end{equation*}
and introduce the exceedance indicators
\begin{equation*}
\xi_j:=\1\{N(\widehat I_j)\ge\tau\},\qquad j=1,\dots,K .
\end{equation*}
Two preliminary verifications underwrite the use of Slud's inequality at
this $\tau$.

\smallskip
\emph{(I) $t\ge 1$.} The hypothesis $M\le K/C$ of the proposition, with
$C\ge e$, gives $K/M\ge e$ and hence $t\ge 1$. This is the range required
by the Mills-ratio bound used in substep~(a) below; we absorb the
requirement $C\ge e$ into the proposition's constant.

\emph{(II) Slud's range $n_h\widetilde p\le\lceil\tau\rceil\le n_h(1-\widetilde p)$.}
Cell counts are integer-valued, so
$\xi_j=\1\{N(\widehat I_j)\ge\tau\}=\1\{N(\widehat I_j)\ge\lceil\tau\rceil\}$
and we will apply Slud's inequality at the integer $k=\lceil\tau\rceil$.
Slud requires $n_h\widetilde p\le\lceil\tau\rceil\le n_h(1-\widetilde p)$;
the lower bound is automatic, and the upper bound
$\lceil\tau\rceil\le n_h(1-\widetilde p)$ is implied, via
$\lceil\tau\rceil\le\tau+1$, by
\begin{equation}\label{eq:slud-range-target}
t\sigma+1\;\le\;n_h(1-2\widetilde p).
\end{equation}

We bound the RHS of~\eqref{eq:slud-range-target} from below and the LHS
from above using the Grid-event bounds
$\widetilde p\le 2\varepsilon/\lambda\le 1/4$ (so $1-2\widetilde p\ge\tfrac12$),
$n_h\ge\lambda m/2$, $\log(K/M)\le\log K\le\log(\lambda/\varepsilon)$, and
$\sigma^2=n_h\widetilde p(1-\widetilde p)\le n_h\widetilde p\le m\cdot(2\varepsilon/\lambda)$:
\begin{align*}
n_h(1-2\widetilde p)&\ge\tfrac{\lambda m}{2}\cdot\tfrac12\;=\;\tfrac{\lambda m}{4},\\
t^2\sigma^2&\le\log(\lambda/\varepsilon)\cdot\tfrac{2\varepsilon m}{\lambda}.
\end{align*}
Splitting the RHS budget evenly between $t\sigma$ and the $+1$, it
therefore suffices that
\begin{equation*}
t\sigma\le\tfrac{\lambda m}{8}\qquad\text{and}\qquad 1\le\tfrac{\lambda m}{8}.
\end{equation*}
Squaring the first and substituting $t^2\sigma^2\le(2/\lambda)\varepsilon m\log(\lambda/\varepsilon)$:
\begin{equation*}
\tfrac{2\varepsilon m\log(\lambda/\varepsilon)}{\lambda}\;\le\;\tfrac{\lambda^2 m^2}{64}
\quad\Longleftrightarrow\quad
\varepsilon\log(\lambda/\varepsilon)\;\le\;\tfrac{\lambda^3}{128}\,m .
\end{equation*}
This is the condition $\varepsilon\log(\lambda/\varepsilon)\le c_\lambda m$
with $c_\lambda=\lambda^3/128$. Both this and the second ($m\ge 8/\lambda$)
are implied by the hypothesis $\varepsilon m\ge C$ for a sufficiently large
$\lambda$-dependent $C$ (recalling $\varepsilon\le 1$, so $m\ge\varepsilon m\ge C$).
We absorb $C$ into the proposition's constant.

\smallskip
Slud's inequality~\eqref{eq:slud-applied} at $k=\lceil\tau\rceil$ now
gives
\begin{equation*}
\PP(\xi_j=1)\;\ge\;1-\Phi\!\bigl((\lceil\tau\rceil-n_h\widetilde p)/\sigma\bigr)\;\ge\;1-\Phi(t+1/\sigma),
\end{equation*}
the second inequality using $\lceil\tau\rceil\le\tau+1$. We bound the
loss from the $1/\sigma$ shift as follows. By the mean-value theorem,
\begin{equation*}
1-\Phi(t+1/\sigma)\;\ge\;\bigl(1-\Phi(t)\bigr)-\phi(t)/\sigma .
\end{equation*}
The Mills-ratio bound $1-\Phi(t)\ge\phi(t)/(2t)$ (valid for $t\ge 1$,
used again in substep~(a) below) lets us rewrite the loss as
$\phi(t)/\sigma\le\tfrac{2t}{\sigma}\bigl(1-\Phi(t)\bigr)$. Hence, as
long as $\sigma\ge 4t$,
\begin{equation*}
\PP(\xi_j=1)\;\ge\;\tfrac12\bigl(1-\Phi(t)\bigr) ,
\end{equation*}
and the integer rounding costs at most a factor of $\tfrac12$. The
condition $\sigma\ge 4t$ rearranges to
$\varepsilon m\ge\tfrac{128}{3}\log(K/M)$. Since $K\le\lambda/\varepsilon$
and $M\ge1$, this is guaranteed by the proposition's hypothesis
$\varepsilon m\ge C\log(\lambda/\varepsilon)$ once $C\ge\tfrac{128}3$.
(At the default bandwidth $\varepsilon=\alpha m^{-1/2}$ the hypothesis
holds for all large $m$: $\varepsilon m=\alpha\sqrt m$ dominates
$\log(\lambda/\varepsilon)=O(\log m)$.) We absorb the
factor $\tfrac12$ into the proposition's constant, and in the sequel use
$\PP(\xi_j=1)\ge\tfrac12\bigl(1-\Phi(t)\bigr)$.

\emph{(a) Lower bound on the single-cell exceedance probability.}
The Mills-ratio bound $1-\Phi(t)\ge\frac{t}{1+t^2}\,\phi(t)
\ge\frac{\phi(t)}{2t}$ (valid for $t\ge 1$), combined with the
rounding-adjusted bound $\PP(\xi_j=1)\ge\tfrac12(1-\Phi(t))$ established
above, gives
\begin{equation}\label{eq:single-cell-q}
q\;:=\;\PP(\xi_j=1)
\;\ge\;\frac{e^{-t^2/2}}{4\sqrt{2\pi}\,t}
\;=\;\frac{(M/K)^{1/2}}{4\sqrt{2\pi}\,t} .
\end{equation}
A second algebraic step shows that~\eqref{eq:single-cell-q} implies
$q\ge 2M/K$ once $K/M$ is large enough: rearranging,
$q\ge 2M/K$ is equivalent to $(K/M)^{1/2}\ge 8\sqrt{2\pi}\,\sqrt{\log(K/M)}$,
which holds whenever $K/M\ge C_0$ for an absolute constant $C_0$. We
absorb $C_0$ into the constant $C$ appearing in the hypothesis $M\le K/C$.
Hence
\begin{equation}\label{eq:cell-mean}
\E\sum_j\xi_j\;=\;Kq\;\ge\;2M .
\end{equation}

\emph{(b) Negative association of the exceedance indicators.}
The multinomial cells $(N(\widehat I_1),\dots,N(\widehat I_K))$ are
negatively associated:

\begin{theorem}[{\citet{JoagDevProschan1983}, Def.~2.1 and \S3.1(a)}]\label{thm:multinomial-NA}
A multinomial random vector is negatively associated (NA), i.e.\ for every
pair of disjoint index sets $I,J$ and every increasing
$f:\mathbb R^{|I|}\to\mathbb R$, $g:\mathbb R^{|J|}\to\mathbb R$,
\begin{equation*}
\operatorname{Cov}\!\bigl(f(B_i,\,i\in I),\,g(B_j,\,j\in J)\bigr)\;\le\;0.
\end{equation*}
\end{theorem}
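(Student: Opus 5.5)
We will prove negative association through the balls-in-bins representation of the multinomial. Write $B=(B_1,\dots,B_{K+1})$, where the last coordinate is the remainder cell. Then $B_j=\sum_{b=1}^{n}X_{b,j}$, with $X_{b,\cdot}=(X_{b,1},\dots,X_{b,K+1})$ the indicator vector recording which cell ball $b$ lands in. The vectors $X_{1,\cdot},\dots,X_{n,\cdot}$ are independent, and each has exactly one coordinate equal to $1$. The argument rests on three closure facts about negative association (NA):
\begin{enumerate}[label=(\roman*)]
\item a single $0$--$1$ vector with at most one nonzero coordinate is NA;
\item the concatenation of independent NA vectors is NA;
\item increasing functions of disjoint blocks of an NA vector form an NA vector.
\end{enumerate}

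\emph{Step (i).} Fix disjoint $I,J$ and increasing $f,g$. Subtracting the constants $f(0)$ and $g(0)$ leaves the covariance unchanged, so we may assume $f(0)=g(0)=0$. Monotonicity then gives $f,g\ge0$ on $\{0,1\}$-vectors. At most one coordinate of $X_{b,\cdot}$ equals $1$, so at least one of $X_{b,I}$, $X_{b,J}$ is the zero vector, and therefore $fg\equiv0$. Hence $\operatorname{Cov}(f,g)=-\E f\,\E g\le0$.

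\emph{Step (ii).} Let $X$ and $Y$ be independent NA vectors, and let $f,g$ be increasing functions of disjoint coordinate sets of $(X,Y)$. Use the law of total covariance conditional on $Y$:
\[
\operatorname{Cov}(f,g)=\E\bigl[\operatorname{Cov}(f,g\mid Y)\bigr]+\operatorname{Cov}\bigl(\E[f\mid Y],\E[g\mid Y]\bigr).
\]
For fixed $Y$, the maps $f(\cdot,Y)$ and $g(\cdot,Y)$ are increasing in disjoint blocks of $X$. Since $X$ is independent of $Y$, the NA property of $X$ makes the first term nonpositive. In the second term, $\E[f\mid Y]=\int f(x,Y)\,dP_X(x)$ is increasing in the $Y$-coordinates belonging to $f$'s index set, and similarly for $g$. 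These $Y$-blocks are disjoint, so NA of $Y$ makes the second term nonpositive as well. Induction over the $n$ balls shows that the full array $(X_{b,j})_{b,j}$ is NA.

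\emph{Step (iii).} For disjoint $I,J\subseteq\{1,\dots,K+1\}$, the function $f(B_I)$ is an increasing function of $\{X_{b,j}:j\in I,\ b\le n\}$. Likewise $g(B_J)$ is an increasing function of $\{X_{b,j}:j\in J,\ b\le n\}$, and these two index blocks of the array are disjoint. The NA inequality for the array, applied to the composite increasing functions, therefore yields $\operatorname{Cov}(f(B_I),g(B_J))\le0$.

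Restricting to cells $1,\dots,K$ is immediate, because any subvector of an NA vector is NA. Conditioning on $\mathcal G$ only fixes $n_h$ and the cell probabilities, so the argument applies verbatim. We expect Step (ii) to be the main technical point. One must check that $\E[f\mid Y]$ is monotone in exactly the right coordinates, which requires the independence of $X$ and $Y$. Integrability is not an issue, since all variables here are bounded.
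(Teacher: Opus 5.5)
Your argument is correct and complete, but it follows a different route from the one behind the citation. The paper gives no proof of its own; it quotes Joag-Dev and Proschan. Their multinomial example goes through Poissonization: the multinomial is the conditional law of independent Poisson counts given their total, Poisson densities are $\mathrm{PF}_2$, and their Theorem~2.8 (independent $\mathrm{PF}_2$ variables conditioned on their sum are NA) applies. That theorem in turn rests on Efron's result that conditional expectations of increasing functions given the sum are increasing in the sum.

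You instead decompose the counts into one-hot ball vectors and combine a zero--one lemma with two closure properties that Joag-Dev and Proschan list. Your Step~(ii) is their P$_7$ (union of independent NA families). Your Step~(iii) is their P$_6$ (increasing functions of disjoint blocks), which the paper quotes as Theorem~\ref{thm:NA-closure}. This is the balls-and-bins argument of Dubhashi and Ranjan, whom the paper already cites for the Chernoff step.

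Your route is elementary and avoids total positivity altogether. It also gives slightly more for free: nothing in it uses that the balls share a cell-probability vector, so it covers sums of independent multinomials with unlike probabilities as well. The Poissonization route, in exchange, is a single mechanism that also yields NA for the multivariate hypergeometric, Dirichlet and Dirichlet-compound-multinomial laws. One simply replaces the Poisson components with binomial, gamma or negative-binomial ones.
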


\begin{theorem}[NA closure {\citep[Prop.~$P_6$]{JoagDevProschan1983}}]\label{thm:NA-closure}
Increasing (or decreasing) functions of disjoint subsets of NA
random variables are themselves NA.
\end{theorem}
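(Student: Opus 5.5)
The plan is a direct composition argument that reduces the claim to the defining covariance inequality of Theorem~\ref{thm:multinomial-NA}. Let $X=(X_1,\dots,X_n)$ be NA, let $A_1,\dots,A_k\subseteq\{1,\dots,n\}$ be pairwise disjoint, and set $Y_l:=h_l(X_a,\,a\in A_l)$. We assume the $h_l$ are either all coordinatewise nondecreasing or all nonincreasing. We must show that for disjoint $I,J\subseteq\{1,\dots,k\}$ and increasing $f,g$,
\[
\operatorname{Cov}\bigl(f(Y_l,\,l\in I),\,g(Y_l,\,l\in J)\bigr)\le 0 .
\]

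\emph{Increasing case.} Define $F(X):=f\bigl(h_l(X_{A_l}),\,l\in I\bigr)$, which depends on $X$ only through the coordinates in $A_I:=\bigcup_{l\in I}A_l$. Define $G(X)$ analogously on $A_J:=\bigcup_{l\in J}A_l$. Then:
\begin{itemize}
\item Disjointness of the $A_l$ together with $I\cap J=\emptyset$ gives $A_I\cap A_J=\emptyset$.
\item A composition of coordinatewise nondecreasing maps is nondecreasing. Raising any coordinate of $X_{A_I}$ raises some $Y_l$ weakly and leaves the others fixed, and $f$ is increasing in each $Y_l$. So $F$ is an increasing function of $X_{A_I}$, and likewise $G$ is an increasing function of $X_{A_J}$.
\end{itemize}
The NA definition applied to the disjoint index sets $A_I,A_J$ then gives $\operatorname{Cov}(F,G)\le 0$, which is exactly the required inequality.

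\emph{Decreasing case.} Now $F$ and $G$ are both nonincreasing in $X$, so $-F$ and $-G$ are increasing functions on disjoint coordinate sets. NA gives $\operatorname{Cov}(-F,-G)\le0$, and since $\operatorname{Cov}(-F,-G)=\operatorname{Cov}(F,G)$, the conclusion follows. This is also why the statement requires all $h_l$ to share the same direction: with mixed directions, $F$ and $G$ could move oppositely in $X$ and the sign would not be controlled.

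\emph{Expected obstacle.} The only delicate point is technical rather than conceptual: the covariances must exist. I would handle this by stating the result for functions for which the covariances are defined, for instance bounded functions. That suffices for our use, because the exceedance indicators $\xi_j=\1\{N(\widehat I_j)\ge\tau\}$ are bounded, nondecreasing functions of the single disjoint coordinates $\{j\}$.

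Combining this closure result with Theorem~\ref{thm:multinomial-NA} shows that $(\xi_1,\dots,\xi_K)$ is NA. That is precisely what step~(b) needs in order to invoke the Dubhashi--Ranjan Chernoff lower tail for $\sum_j\xi_j$.
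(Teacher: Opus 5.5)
Your argument is correct and is exactly the standard proof of Joag-Dev and Proschan's property $P_6$, which the paper cites rather than reproves. Composing the increasing $f,g$ with the commonly monotone $h_l$ gives functions of $X$ that are monotone in the same direction on the disjoint blocks $A_I$ and $A_J$, so the NA covariance inequality (after negating both functions in the decreasing case) applies directly. Your integrability caveat is also harmless: $\operatorname{Cov}(f(Y_l,\,l\in I),g(Y_l,\,l\in J))$ and $\operatorname{Cov}(F,G)$ are literally the same quantity, so whatever existence convention defines NA for $X$ carries over unchanged to the $Y_l$.
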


Each $\xi_j$ is an increasing function of the single cell $N(\widehat I_j)$,
and the singletons $\{j\}$ are pairwise disjoint, so
Theorems~\ref{thm:multinomial-NA}--\ref{thm:NA-closure} imply
$(\xi_1,\dots,\xi_K)$ is NA.

\emph{(c) Chernoff lower tail for the sum.}
The Chernoff--Hoeffding tail bounds extend to sums of negatively
associated $[0,1]$-valued summands, by repeating the standard MGF argument
and using the NA inequality
$\E\prod_j e^{tY_j}\le\prod_j\E e^{tY_j}$ in place of the product
formula valid under independence
\citep[Prop.~5]{DubhashiRanjan1998}. In particular, the
\emph{multiplicative} lower-tail bound

\begin{theorem}[NA Chernoff lower tail {\citep[Prop.~5]{DubhashiRanjan1998}}]\label{thm:NA-Chernoff}
If $Y_1,\dots,Y_K\in[0,1]$ are NA with $\mu:=\E\sum_j Y_j$, then for every
$\delta\in(0,1)$,
\begin{equation*}
\PP\Bigl(\sum_j Y_j\le(1-\delta)\mu\Bigr)\;\le\;e^{-\mu\delta^2/2}.
\end{equation*}
\end{theorem}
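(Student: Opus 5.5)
The plan is the standard exponential-moment (Chernoff) argument, with independence replaced by a product inequality for negatively associated variables. I will first derive that product inequality from the covariance definition in Theorem~\ref{thm:multinomial-NA}, then run the usual MGF bound and optimize.

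\emph{Step 1: product inequality for monotone functions.} I will show that for nonnegative functions $g_1,\dots,g_K$, all nonincreasing,
\[
\E\prod_{j=1}^K g_j(Y_j)\le\prod_{j=1}^K\E g_j(Y_j).
\]
\begin{itemize}
\item The NA covariance inequality, stated for increasing $f,g$, also holds when $f$ and $g$ are both decreasing, by replacing them with $-f,-g$.
\item I then argue by induction on $K$. Set $F=\prod_{j<K}g_j(Y_j)$. As a product of nonnegative nonincreasing functions of the disjoint block $\{Y_j\}_{j<K}$, $F$ is itself nonincreasing.
\item Hence $\operatorname{Cov}(F,g_K(Y_K))\le0$, so $\E[F\,g_K(Y_K)]\le\E F\cdot\E g_K(Y_K)$, and the induction hypothesis bounds $\E F$.
\end{itemize}
Nonnegativity is essential: a product of decreasing functions that can change sign need not be monotone. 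This is the step I expect to require the most care, and it is exactly the substitute for independence that \citet[Prop.~5]{DubhashiRanjan1998} invokes.

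\emph{Step 2: MGF bound.} Fix $s>0$ and write $p_j=\E Y_j$. By Markov's inequality applied to $e^{-s\sum_jY_j}$,
\[
\PP\Bigl(\sum_jY_j\le(1-\delta)\mu\Bigr)\le e^{s(1-\delta)\mu}\,\E\prod_j e^{-sY_j}\le e^{s(1-\delta)\mu}\prod_j\E e^{-sY_j},
\]
using Step~1 with $g_j(y)=e^{-sy}$. Since $y\mapsto e^{-sy}$ is convex on $[0,1]$, we have $e^{-sy}\le1-y(1-e^{-s})$. Therefore
\[
\E e^{-sY_j}\le1-p_j(1-e^{-s})\le\exp\bigl(-p_j(1-e^{-s})\bigr),
\]
and taking the product gives the total bound $\exp\bigl(s(1-\delta)\mu-\mu(1-e^{-s})\bigr)$.

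\emph{Step 3: optimize and simplify.} I will choose $s=-\log(1-\delta)>0$, which turns the bound into
\[
\Bigl[e^{-\delta}(1-\delta)^{-(1-\delta)}\Bigr]^{\mu}.
\]
It then remains to check the elementary inequality
\[
(1-\delta)\log(1-\delta)\ge-\delta+\delta^2/2\qquad\text{for }\delta\in(0,1).
\]
This follows from the series expansion $(1-\delta)\log(1-\delta)=-\delta+\sum_{k\ge2}\delta^k/(k(k-1))$, whose terms for $k\ge2$ are all nonnegative and whose $k=2$ term is $\delta^2/2$. Substituting yields $e^{-\mu\delta^2/2}$, as claimed.
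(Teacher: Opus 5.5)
Your proposal is correct and follows the same route as the paper, which proves this by repeating the standard Chernoff MGF argument with the NA product inequality in place of independence, exactly as in your Steps 1--3. The one thing you add is an explicit derivation of that product inequality for the nonincreasing factors $e^{-sY_j}$ that the lower tail requires; the paper writes $\E\prod_j e^{tY_j}\le\prod_j\E e^{tY_j}$ with $t<0$ left implicit, and you obtain it from the covariance definition by negation and an induction that uses nonnegativity.
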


applies to $(\xi_1,\dots,\xi_K)$. With $\mu=Kq\ge 2M$ from \eqref{eq:cell-mean}
and $\delta=1/2$,
\begin{equation}\label{eq:exceedance-tail}
\PP\Bigl(\sum_j\xi_j<M\Bigr)
\le\PP\Bigl(\sum_j\xi_j\le\tfrac12 Kq\Bigr)
\le e^{-Kq/8}\;\le\;e^{-M/4}.
\end{equation}

\paragraph{Step 3: assemble.}
On the event $\mathcal G_{\rm exc}:=\{\sum_j\xi_j\ge M\}$, at least $M$
cells have count above $\tau$, so the top-$M$ counts each exceed $\tau$
and the average
$\bar h_M\ge\tau/(\varepsilon m)=h+t\sigma/(\varepsilon m)$. On its
complement we use the trivial lower bound $\bar h_M\ge 0$, together with
$h\le h_{\max}=2/\lambda$. Therefore
\begin{equation}\label{eq:Delta-assemble}
\begin{aligned}
\Delta(M\mid\mathcal G)
&\ge\frac{t\sigma}{\varepsilon m}\bigl(1-e^{-M/4}\bigr)-h_{\max}e^{-M/4}\\
&\ge\frac{t\sigma}{2\varepsilon m}
\ge c_\lambda\sqrt{\tfrac{\log(K/M)}{\varepsilon m}},
\end{aligned}
\end{equation}
where the middle inequality uses
$t\sigma/(\varepsilon m)\ge\sqrt{3/8}\,\sqrt{\log(K/M)/(\varepsilon m)}
\ge c\,(\varepsilon m)^{-1/2}$ together with the bound
$h_{\max}e^{-M/4}\le(2/\lambda)e^{-M/4}$, which is at most
$\tfrac12 t\sigma/(\varepsilon m)$ once $M\ge C_\lambda\log(\varepsilon m)$,
itself implied by $M\ge C\log m$. This is the bound~\eqref{eq:wc-lower}.

For the display at $\varepsilon=\alpha m^{-1/2}$:
$K\ge\lambda/(4\varepsilon)=\lambda\sqrt m/(4\alpha)$, so for $M\le\sqrt K$
we have $\log(K/M)\ge\tfrac12\log K\ge c\log m$, and~\eqref{eq:Delta-assemble}
gives $\Delta(M\mid\mathcal G)\ge c\,m^{-1/4}\sqrt{\log m}$.\hfill$\square$

\acks{Claude Opus 4.7 and Claude Fable 5 (Anthropic) were used in the preparation
of this manuscript; the author thanks Javad Alipanah for providing access to
them. The author retains full responsibility for the mathematical content
and conclusions of the paper.
This work received no third-party funding. The author is employed by Mofid
Securities, which had no role in the research; the author declares no
competing interests.}

\bibliography{references}

\begin{thebibliography}{24}
\providecommand{\natexlab}[1]{#1}
\providecommand{\url}[1]{\texttt{#1}}
\expandafter\ifx\csname urlstyle\endcsname\relax
  \providecommand{\doi}[1]{doi: #1}\else
  \providecommand{\doi}{doi: \begingroup \urlstyle{rm}\Url}\fi

\bibitem[Barber and Cand{\`e}s(2015)]{BarberCandes2015}
Rina~Foygel Barber and Emmanuel~J. Cand{\`e}s.
\newblock Controlling the false discovery rate via knockoffs.
\newblock \emph{Annals of Statistics}, 43\penalty0 (5):\penalty0 2055--2085,
  2015.

\bibitem[Bates et~al.(2023)Bates, Cand{\`e}s, Lei, Romano, and
  Sesia]{Bates2023}
Stephen Bates, Emmanuel Cand{\`e}s, Lihua Lei, Yaniv Romano, and Matteo Sesia.
\newblock Testing for outliers with conformal $p$-values.
\newblock \emph{Annals of Statistics}, 51\penalty0 (1):\penalty0 149--178,
  2023.

\bibitem[Benjamini and Hochberg(1995)]{BH1995}
Yoav Benjamini and Yosef Hochberg.
\newblock Controlling the false discovery rate: a practical and powerful
  approach to multiple testing.
\newblock \emph{Journal of the Royal Statistical Society: Series B},
  57\penalty0 (1):\penalty0 289--300, 1995.

\bibitem[Benjamini and Yekutieli(2001)]{BY2001}
Yoav Benjamini and Daniel Yekutieli.
\newblock The control of the false discovery rate in multiple testing under
  dependency.
\newblock \emph{Annals of Statistics}, 29\penalty0 (4):\penalty0 1165--1188,
  2001.

\bibitem[Boucheron et~al.(2013)Boucheron, Lugosi, and
  Massart]{BoucheronLugosiMassart2013}
St{\'e}phane Boucheron, G{\'a}bor Lugosi, and Pascal Massart.
\newblock \emph{Concentration Inequalities: A Nonasymptotic Theory of
  Independence}.
\newblock Oxford University Press, 2013.

\bibitem[Chernozhukov et~al.(2018)Chernozhukov, Chetverikov, Demirer, Duflo,
  Hansen, Newey, and Robins]{Chernozhukov2018DML}
Victor Chernozhukov, Denis Chetverikov, Mert Demirer, Esther Duflo, Christian
  Hansen, Whitney Newey, and James Robins.
\newblock Double/debiased machine learning for treatment and structural
  parameters.
\newblock \emph{The Econometrics Journal}, 21\penalty0 (1):\penalty0 C1--C68,
  2018.

\bibitem[Dubhashi and Ranjan(1998)]{DubhashiRanjan1998}
Devdatt Dubhashi and Desh Ranjan.
\newblock Balls and bins: a study in negative dependence.
\newblock \emph{Random Structures \& Algorithms}, 13\penalty0 (2):\penalty0
  99--124, 1998.

\bibitem[Ermis and Saligrama(2010)]{Ermis2009}
Erhan~Baki Ermis and Venkatesh Saligrama.
\newblock Distributed detection in sensor networks with limited range
  multimodal sensors.
\newblock \emph{IEEE Transactions on Signal Processing}, 58\penalty0
  (2):\penalty0 843--858, 2010.

\bibitem[G{\"o}lz et~al.(2022)G{\"o}lz, Zoubir, and Koivunen]{Golz2022}
Martin G{\"o}lz, Abdelhak~M. Zoubir, and Visa Koivunen.
\newblock Multiple hypothesis testing framework for spatial signals.
\newblock \emph{IEEE Transactions on Signal and Information Processing over
  Networks}, 8:\penalty0 771--787, 2022.

\bibitem[Heesen and Janssen(2016)]{HeesenJanssen2016}
Philipp Heesen and Arnold Janssen.
\newblock Dynamic adaptive multiple tests with finite sample {FDR} control.
\newblock \emph{Journal of Statistical Planning and Inference}, 168:\penalty0
  38--51, 2016.

\bibitem[Jin and Cand{\`e}s(2023)]{JinCandes2022}
Ying Jin and Emmanuel~J. Cand{\`e}s.
\newblock Selection by prediction with conformal p-values.
\newblock \emph{Journal of Machine Learning Research}, 24\penalty0
  (244):\penalty0 1--41, 2023.

\bibitem[Joag-Dev and Proschan(1983)]{JoagDevProschan1983}
Kumar Joag-Dev and Frank Proschan.
\newblock Negative association of random variables, with applications.
\newblock \emph{Annals of Statistics}, 11\penalty0 (1):\penalty0 286--295,
  1983.

\bibitem[Pournaderi and Xiang(2022)]{PournaderiXiang2022Sample}
Mehrdad Pournaderi and Yu~Xiang.
\newblock Sample-and-forward: communication-efficient control of the false
  discovery rate in networks.
\newblock In \emph{IEEE International Symposium on Information Theory (ISIT)},
  2022.

\bibitem[Pournaderi and Xiang(2023)]{PournaderiXiangDistFree2023}
Mehrdad Pournaderi and Yu~Xiang.
\newblock Communication-efficient distribution-free inference over networks.
\newblock \emph{arXiv preprint arXiv:2307.09850}, 2023.

\bibitem[Pournaderi and Xiang(2024)]{PournaderiXiang2024}
Mehrdad Pournaderi and Yu~Xiang.
\newblock On large-scale multiple testing over networks: an asymptotic
  approach.
\newblock \emph{IEEE Transactions on Signal and Information Processing over
  Networks}, 10, 2024.
\newblock (arXiv:2211.16059v4).

\bibitem[Ramdas et~al.(2017)Ramdas, Chen, Wainwright, and
  Jordan]{Ramdas2017QuTE}
Aaditya Ramdas, Jianbo Chen, Martin~J. Wainwright, and Michael~I. Jordan.
\newblock {QuTE}: decentralized multiple testing on sensor networks with false
  discovery rate control.
\newblock In \emph{IEEE Conference on Decision and Control (CDC)}, 2017.

\bibitem[Ren and Barber(2024)]{RenBarber2024evalues}
Zhimei Ren and Rina~Foygel Barber.
\newblock Derandomised knockoffs: leveraging $e$-values for false discovery
  rate control.
\newblock \emph{Journal of the Royal Statistical Society: Series B},
  86\penalty0 (1):\penalty0 122--154, 2024.

\bibitem[Ren et~al.(2023)Ren, Wei, and Cand{\`e}s]{RenWeiCandes2023}
Zhimei Ren, Yuting Wei, and Emmanuel~J. Cand{\`e}s.
\newblock Derandomizing knockoffs.
\newblock \emph{Journal of the American Statistical Association}, 118, 2023.

\bibitem[Slud(1977)]{Slud1977}
Eric~V. Slud.
\newblock Distribution inequalities for the binomial law.
\newblock \emph{Annals of Probability}, 5\penalty0 (3):\penalty0 404--412,
  1977.

\bibitem[Storey et~al.(2004)Storey, Taylor, and
  Siegmund]{StoreyTaylorSiegmund2004}
John~D. Storey, Jonathan~E. Taylor, and David Siegmund.
\newblock Strong control, conservative point estimation and simultaneous
  conservative consistency of false discovery rates: a unified approach.
\newblock \emph{Journal of the Royal Statistical Society: Series B},
  66\penalty0 (1):\penalty0 187--205, 2004.

\bibitem[Wang and Ramdas(2022)]{WangRamdas2022}
Ruodu Wang and Aaditya Ramdas.
\newblock False discovery rate control with e-values.
\newblock \emph{Journal of the Royal Statistical Society: Series B},
  84\penalty0 (3):\penalty0 822--852, 2022.

\bibitem[Xiang(2019)]{XiangQuantized2019}
Yu~Xiang.
\newblock Distributed false discovery rate control with quantization.
\newblock In \emph{IEEE International Symposium on Information Theory (ISIT)},
  pages 246--249, 2019.

\bibitem[Yang et~al.(2021)Yang, Lei, Ho, and Fithian]{Yang2021BONuS}
Chiao-Yu Yang, Lihua Lei, Nhat Ho, and William Fithian.
\newblock {BONuS}: multiple multivariate testing with a data-adaptive test
  statistic.
\newblock \emph{arXiv preprint arXiv:2106.15743}, 2021.

\bibitem[Zhang et~al.(2025)Zhang, Pournaderi, Xiang, and
  Varshney]{ZhangPournaderiXiangVarshney2025}
Daofu Zhang, Mehrdad Pournaderi, Yu~Xiang, and Pramod Varshney.
\newblock Distributed multiple testing with false discovery rate control in the
  presence of {B}yzantines.
\newblock \emph{arXiv preprint arXiv:2501.13242}, 2025.

\end{thebibliography}

\end{document}